\keywords{%
  constraint satisfaction problem, universal algebra,
  dichotomy conjecture, Mal'tsev condition, Taylor term, absorbing subalgebra.
}
\acrodef{ec}[EC]{edge-by-chain}
\acrodef{lics}[LICS]{Logic in Computer Science}
\acrodef{sat}[SAT]{satisfiability}
\acrodef{nae}[NAE]{not-all-equal}
\acrodef{ctb}[CTB]{cube term blocker}
\acrodef{tct}[TCT]{tame congruence theory}
\acrodef{wnu}[WNU]{weak near-unanimity}
\acrodef{CSP}[CSP]{constraint satisfaction problem}
\acrodef{MAS}[MAS]{minimal absorbing subuniverse}
\acrodef{MA}[MA]{minimal absorbing}
\acrodef{cib}[CIB]{commutative idempotent binar}
\acrodef{sd}[SD]{semidistributive}
\acrodef{NP}[NP]{nondeterministic polynomial time}
\acrodef{P}[P]{polynomial time}
\acrodef{PeqNP}[P $ = $ NP]{P is NP}
\acrodef{PneqNP}[P $ \neq $ NP]{P is not NP}
   \newcommand{\todo}[1]{\ifthenelse{\boolean{todos}}{%
       \noindent {\bf To do:} #1}{}}
\newcommand{\ubar}[1]{\ensuremath{\underaccent{\bar}{#1}}}
\newcommand{\overbar}[1]{\ensuremath{\bar{#1}}}
\newcommand{\etaR}{\ensuremath{\rho}}
\newcommand{\ar}{\ensuremath{\operatorname{ar}}}
\newcommand{\CSP}{\ensuremath{\operatorname{CSP}}}
\newcommand{\im}{\ensuremath{\operatorname{im}}}
\newcommand{\slt}{\ensuremath{\mathbf S_2}\xspace}
\newcommand{\ctb}{\acs{ctb}\xspace}
\newcommand{\lics}{\acs{lics}\xspace}
\newcommand{\ec}{\acs{ec}\xspace}
\newcommand{\csp}{\acs{CSP}\xspace}
\newcommand{\mas}{mass\xspace}
\newcommand{\masses}{masses\xspace}
\newcommand{\csps}{\acsp{CSP}\xspace}
\newcommand{\sd}{\acs{sd}\xspace}
\newcommand{\cib}{\acs{cib}\xspace}
\newcommand{\cibs}{\acsp{cib}\xspace}
\newcommand{\NP}{\acs{NP}\xspace}
\renewcommand{\P}{\acs{P}\xspace}
\newcommand{\mfA}{\ensuremath{\mathfrak{A}}}
\newcommand{\Inj}{\ensuremath{\operatorname{Inj}}}
\newcommand{\Proj}{\ensuremath{\operatorname{Proj}}}
\newcommand{\alg}[1]{\ensuremath{\mathbf{#1}}}
\newcommand{\var}[1]{\ensuremath{\mathcal{#1}}}
\newcommand{\Sub}{\ensuremath{\operatorname{Sub}}}
\newcommand{\Con}{\ensuremath{\operatorname{Con}}}
\newcommand{\Cg}{\ensuremath{\operatorname{Cg}}}
\newcommand{\sdp}{\ensuremath{\leq_{\mathrm{sd}}}}
\newcommand{\defn}[1]{\textit{#1}}
\newcommand{\N}{\ensuremath{\mathbb{N}}}
\newcommand{\<}{\ensuremath{\langle}}
\renewcommand{\>}{\ensuremath{\rangle}}
\newcommand{\bR}{\ensuremath{\mathbf{R}}}
\newcommand{\sansA}{\ensuremath{\mathsf{A}}}
\newcommand{\sansC}{\ensuremath{\mathsf{C}}}
\newcommand{\sansH}{\ensuremath{\mathsf{H}}}
\newcommand{\sansO}{\ensuremath{\mathsf{O}}} 
\newcommand{\sansClo}{\ensuremath{\operatorname{\mathsf{Clo}}}}
\newcommand{\sansS}{\ensuremath{\mathsf{S}}}
\newcommand\ubA{\ensuremath{\ubar{\mathbf{A}}}}
\newcommand\uA{\ensuremath{\ubar{A}}}
\newcommand{\bs}{\ensuremath{\mathbf{s}}}
\newcommand{\sA}{\ensuremath{\mathscr{A}}}
\newcommand{\sB}{\ensuremath{\mathscr{B}}}
\newcommand{\sC}{\ensuremath{\mathscr{C}}}
\newcommand{\sF}{\ensuremath{\mathscr{F}}}
\newcommand{\sH}{\ensuremath{\mathscr{H}}}
\newcommand{\sI}{\ensuremath{\mathscr{I}}}
\newcommand{\sP}{\ensuremath{\mathscr{P}}}
\newcommand{\sR}{\ensuremath{\mathscr{R}}}
\newcommand{\sS}{\ensuremath{\mathscr{S}}}
\newcommand{\sT}{\ensuremath{\mathscr{T}}}
\newcommand{\sV}{\ensuremath{\mathscr{V}}}
\newcommand{\bA}{\ensuremath{\mathbf{A}}}
\newcommand{\ba}{\ensuremath{\mathbf{a}}}
\newcommand{\bB}{\ensuremath{\mathbf{B}}}
\newcommand{\bb}{\ensuremath{\mathbf{b}}}
\newcommand{\bC}{\ensuremath{\mathbf{C}}}
\newcommand{\bD}{\ensuremath{\mathbf{D}}}
\newcommand{\br}{\ensuremath{\mathbf{r}}}
\newcommand{\bS}{\ensuremath{\mathbf{S}}}
\newcommand{\bT}{\ensuremath{\mathbf{T}}}
\newcommand{\bu}{\ensuremath{\mathbf{u}}}
\newcommand{\bv}{\ensuremath{\mathbf{v}}}
\newcommand{\bx}{\ensuremath{\mathbf{x}}}
\newcommand{\vy}{\ensuremath{\mathbf{y}}}
\newcommand{\bz}{\ensuremath{\mathbf{z}}}
\newcommand{\bzero}{\ensuremath{\mathbf{0}}}
\newcommand{\Sol}{\ensuremath{\operatorname{Sol}}}
\newcommand{\nn}{\ensuremath{\ubar{n}}}
\newcommand{\mm}{\ensuremath{\ubar{m}}}
\newcommand{\pp}{\ensuremath{\ubar{p}}}
\newcommand{\kk}{\ensuremath{\ubar{k}}}
\newcommand{\uzero}{\ensuremath{\ubar{0}}}
\newcommand{\kplus}{\ensuremath{\underline{k+1}}}
\newcommand{\AAn}{\ensuremath{A^{(A^n)}}}
\newcommand{\meet}{\ensuremath{\wedge}}
\newcommand{\Meet}{\ensuremath{\bigwedge}}
\renewcommand{\Join}{\ensuremath{\bigvee}}
\newcommand{\join}{\ensuremath{\vee}}
\newcommand{\onto}{\ensuremath{\twoheadrightarrow}}
\newcommand{\absorbing}{\ensuremath{\mathrel{\triangleleft}}}
\newcommand{\minabsorbing}{\ensuremath{\mathrel{\triangleleft\triangleleft}}}
\newcommand{\btheta}{\ensuremath{\bm{\theta}}}
\newcommand{\malcev}{Mal'tsev\xspace}
\renewcommand{\leq}{\ensuremath{\leqslant}}
\renewcommand{\geq}{\ensuremath{\geqslant}}
\newcommand{\CC}[3]{\ensuremath{\sansC(#1, #2; #3)}} 
\newcommand{\C}[2]{\ensuremath{\sansC(#1, #2)}}      
\renewcommand{\C}[2]{\ensuremath{\sansC(#1, #2)}}      
\newcommand{\myprod}{\ensuremath{\prod}}
\newcommand\restr[2]{{
  \left.\kern-\nulldelimiterspace
  #1 
  \vphantom{\big|} 
  \right|_{#2} 
  }}
\newcommand{\card}[1]{|#1|}
\DeclareMathOperator{\size}{size}
\newcommand{\Sl}{\ensuremath{Sl}} 
\newcommand{\Sq}[1]{\ensuremath{\mathbf{Sq}_{#1}}}
\newcommand{\sdm}{\sd-\meet\xspace}
\newcommand{\casespec}[1]{\medskip\noindent\textbf{#1}.\enspace}
\newcommand\kcapsigma{\ensuremath{\kk \, \cap\, \im \sigma}}
\begin{document}

\title[Universal Algebraic Methods for CSP]{Universal Algebraic Methods for\texorpdfstring{\\}{}
  Constraint Satisfaction Problems}

\author[C.~Bergman]{Clifford Bergman\rsuper{a}}
\address{Department of Mathematics, Iowa State University, Ames, IA, USA}
\email{cbergman@iastate.edu}
\author[W.~DeMeo]{William DeMeo\rsuper{b}}
\address{Thmpr Laboratory, New York, NY, USA}
\email{williamdemeo@gmail.com}

\thanks{Research of both authors partially supported by the National Science Foundation, Grant No.~1500218.}

\begin{abstract}
\noindent After substantial progress over the last 15 years, the ``algebraic CSP-dichotomy conjecture'' reduces to the following: every local constraint satisfaction problem (CSP) associated with a finite idempotent algebra is tractable if and only if the algebra has a Taylor term operation. Despite the tremendous achievements in this area (including recently announce proofs of the general conjecture),
there remain examples of small algebras with just a single binary operation whose CSP resists direct classification as either tractable or NP-complete using known methods.  In this paper we present some new methods for approaching such problems, with particular focus on those techniques that help us attack the class of finite algebras known as ``commutative idempotent binars'' (CIBs). We demonstrate the utility of these methods by using them to prove that every CIB of cardinality at most 4 yields a tractable CSP\@.
\end{abstract}

\maketitle

\section{Introduction}\label{sec:introduction}
The ``\csp-dichotomy conjecture'' of Tom{\'a}s Feder and Moshe Vardi~\cite{MR1630445} asserts that every constraint satisfaction
problem (\csp) over a fixed finite constraint language is either \NP-complete or tractable. (The current status of the conjecture is discussed in the note at the end of this introduction.)

A discovery of Jeavons, Cohen and Gyssens in~\cite{MR1481313}---later refined by Bulatov, Jeavons and Krokhin in~\cite{MR2137072}---was the ability to transfer the question of the complexity of the \csp over a set of relations to a question of algebra.  Specifically, these authors showed that the complexity of any particular \csp depends solely on the \emph{polymorphisms} of the constraint relations, that is, the functions preserving all the constraints. The transfer to universal algebra was made complete by Bulatov, Jeavons, and Krokhin in recognizing that with any set $\sR$ of constraint relations one can associate an algebra $\bA(\sR)$ whose operations consist of the polymorphisms of $\sR$.

Following this, the \csp-dichotomy conjecture of Feder and Vardi was recast as a universal algebra problem once it was recognized that the conjectured sharp dividing line between those \csps that are \NP-complete and those that are tractable was seen to depend upon universal algebraic properties of the associated algebra. One such property is the existence of a Taylor term (defined in Section~\ref{ssec:term-ops}). Roughly speaking, the ``algebraic \csp-dichotomy conjecture'' is the following: The \csp associated with a finite idempotent algebra is tractable if and only if the algebra has a Taylor term operation in its clone. We state this more precisely (postponing technical definitions) as follows:

\begin{conj}\label{dichotomy-conjecture}
If $\bA$ is an algebra with a Taylor term in its clone and if $\sR$ is a finite set of relations compatible with $\bA$, then $\CSP(\sR)$ is tractable.  Conversely, if $\bA$ is an idempotent algebra with no Taylor term in its clone, then there exists a finite set $\sR$ of relations such that $\CSP(\sR)$ is \NP-complete.
\end{conj}

The second sentence of the conjecture was already established
in~\cite{MR2137072}. The proof of the assertion in the first sentence was
recently announced (independently) by Andrei Bulatov~\cite{Bulatov2017} and
Dmitriy Zhuk~\cite{Zhuk2017,MR4152441}, as discussed in the note at the end of this section.  The purpose of the present work is to provide new methods for establishing the truth of first sentence of the conjecture for certain special classes of algebras. Another goal is to give concrete examples demonstrating how these methods work.

Algebraists have identified two quite different techniques for proving that a finite idempotent algebra is tractable. One, often called the ``local consistency algorithm,'' works for any finite algebra lying in an idempotent, congruence-meet-semidistributive variety (\sdm for short). See~\cite{MR2648455} or~\cite{MR2893395} for details. The other, informally called the ``few subpowers technique,'' applies to any finite idempotent algebra possessing an edge term~\cite{MR2678065}. Definitions of these terms appear in Section~\ref{ssec:edge-sdm}.  While these two algorithms cover a wide class of interesting algebras, they were not enough to resolve the full dichotomy conjecture above.  Several researchers attempted to combine the two existing approaches in a way that captures the outstanding cases. In this paper, we do as well.

For example, suppose that $\bA$ is a finite, idempotent algebra possessing a congruence, $\theta$, such that $\bA/\theta$ lies in an \sdm variety and every congruence class of $\theta$ has an edge term. It is not hard to show that $\bA$ has a Taylor term. Can local consistency be combined with few subpowers to prove that $\bA$ is tractable?

We can formalize this idea as follows. Let $\var{V}$ and $\var{W}$ be idempotent varieties. The \defn{\malcev product} of $\var{V}$ and $\var{W}$ is the class
\begin{equation*}
\var{V} \circ \var{W} = \{\,\bA : (\exists\; \theta\in \Con(\bA))\;\;\bA/\theta \in \var{W} \mathrel{\&} (\forall a\in A)\;a/\theta \in \var{V}\,\}.
\end{equation*}
The \malcev product $\var{V}\circ \var{W}$ is always an idempotent quasivariety, but is generally not closed under homomorphic images. A worthwhile goal for future work would be to prove that if all finite members of $\var{V}$ and $\var{W}$ are tractable, then the same will hold for all finite members of $\var{V}\circ \var{W}$. Tellingly, Freese and McKenzie show in~\cite{FreeseMcKenzie2016} that a number of important properties are preserved by \malcev product.

\begin{thmC}[\cite{FreeseMcKenzie2016}]%
\label{thm:robust}
Let $\var{V}$ and $\var{W}$ be idempotent varieties. For each of the following properties, $P$\!, if both $\var{V}$ and $\var{W}$ have $P$\!, then so does $\sansH(\var{V}\circ \var{W})$:
\begin{enumerate}
\item being idempotent;
\item having a Taylor term;
\item being \sdm;
\item having an edge term.
\end{enumerate}
\end{thmC}

\noindent
It follows from Theorem~\ref{thm:robust} that if both $\var{V}$ and $\var{W}$ are \sdm, or both have an edge term, then every finite member of $\sansH(\var{V}\circ \var{W})$ is tractable. So the next step would be to consider $\var{V}$ with one of these properties and $\var{W}$ with the other. But even if we could prove that tractability is preserved by \malcev product, that would not be enough. In particular, simple algebras would remain problematic.

Interestingly, Barto and Kozik developed an approach that is particularly well-suited to dealing with simple algebras in this context.  In \lics'10~(\cite{MR2953899}), these researchers proved a powerful ``Absorption Theorem'' for products of two ``absorption-free'' algebras in a Taylor variety. At a more recent workshop~\cite{Barto-shanks}, Barto announced further joint work with Kozik on a general ``Rectangularity Theorem'' that gives conditions under which a subdirect product of simple nonabelian algebras contains a full product of ``minimal absorbing'' subalgebras. (Terms in quotes and other technicalities are defined below.)

In Section~\ref{sec:tayl-vari-rect} of the present paper we state and prove a version of Barto and Kozik's Rectangularity Theorem.  In Section~\ref{sec:csps-comm-idemp}, we apply this tool, together with some techniques involving \malcev products and other new decomposition strategies,
to prove that every commutative, idempotent binar of cardinality at most~4 is tractable.

\noindent\textbf{Note}. Subsequent to the initial submission of this manuscript,
Andrei Bulatov~\cite{Bulatov2017} and Dmitriy Zhuk~\cite{Zhuk2017} independently
announced a resolution of Conjecture~\ref{dichotomy-conjecture}. Even more
recently, Zhuk's proof appeared in~\cite{MR4152441}. The proofs are ingenious, lengthy, and
exceedingly complex, but overall follow a strategy similar to the one outlined above.
Nonetheless, we feel the methods and results presented here are useful for tackling specific problems, and doing so more directly than would be possible using the approaches offered by proofs of the general conjecture.

\section{Definitions and Notations}
\subsection{Notation for projections, scopes, and kernels}\label{sec:proj-scop-kern}
Here we collect definitions and notations used in the sequel.  Some of these will be reintroduced more carefully later, as needed.

An operation $f \colon A^n \rightarrow A$ is called \defn{idempotent} provided $f(a, a, \dots, a) = a$ for all $a \in A$. Examples of idempotent operations are the projection functions and these play an important role in later sections, so we start by introducing a flexible notation for them.

We define natural numbers as usual and denote them by $\uzero := \emptyset$, $\nn := \{0, 1, 2, \dots, n-1\}$.
Given sets $A_0$, $A_1$, $\dots$, $A_{n-1}$, An element $\ba \in \myprod_{\nn} A_i$ of their Cartesian product may be viewed as an ordered $n$-tuple by simply listing its values, $\ba = (\ba(0), \ba(1), \dots, \ba(n-1))$, or  as a function defined on index set $\nn$; the latter view is emphasized symbolically as follows:
\[
\ba \colon \nn \to \bigcup_{i\in \nn} A_i; \;\; i\mapsto \ba(i) \in A_i.
\]

If $\sigma\colon \kk \to \nn$ is a $k$-tuple of numbers in $\nn$, then we can compose an $n$-tuple $\ba$ in $\myprod_{i\in \nn} A_i$
with $\sigma$ yielding the $k$-tuple $\ba\circ \sigma$ in
$\myprod_{\kk}A_{\sigma(i)}$. Generally speaking, we will try to avoid
nonstandard notational conventions; two exceptions are the following:
\[
\uA := \prod_{i\in \nn} A_i \quad \text{ and } \quad \uA_{\sigma} := \prod_{i\in \kk} A_{\sigma(i)}.
\]

Now, if the $k$-tuple $\sigma\colon \kk \to \nn$ happens to be one-one, and if we let $p_\sigma$ denote the map $\ba \mapsto \ba\circ \sigma$, then $p_\sigma$ is the usual \emph{projection function} from $\uA$ onto $\uA_{\sigma}$. Thus, $p_{\sigma}(\ba)$ is a $k$-tuple whose $i$-th component is $(\ba\circ \sigma)(i) = \ba(\sigma(i))$. We will make frequent use of such projections, as well as their images under the covariant and contravariant powerset functors $\sP$ and $\overbar{\sP}$.   Indeed, we let $\Proj_\sigma\colon \sP(\uA) \to \sP(\uA_\sigma)$ denote the \emph{projection set function} defined for each $R \subseteq \uA$ by $\Proj_\sigma R = \sP(p_\sigma)(R) = \{p_\sigma(\bx) \mid \bx \in R\} = \{\bx\circ \sigma \mid \bx \in R\}$,
and we let $\Inj_\sigma\colon \overbar{\sP}(\uA_\sigma) \to \overbar{\sP}(\uA)$ denote the \emph{injection set function} defined for each
$S\subseteq \uA_\sigma$ by
\begin{equation}\label{eq:19}
  \Inj_\sigma S = \overbar{\sP}(p_\sigma)(S) = \{\bx \mid p_\sigma(\bx) \in S\}
   = \{\bx \in \uA \mid \bx\circ \sigma \in S\}.
\end{equation}
Of course, $\Inj_\sigma S$ is nothing more than the inverse image of the set $S$ with respect to the projection function $p_\sigma$. We sometimes use the shorthand $R_\sigma = \Proj_\sigma R$ and $S^{\overleftarrow{\sigma}} = \Inj_\sigma S$ for the projection and injection set functions, respectively.

A one-one function $\sigma \colon \kk \to \nn$ is called a \emph{scope} or \emph{scope function}. If
$R \subseteq \uA_{\sigma}$
is a subset of the projection of $\uA$ onto coordinates $(\sigma(0), \sigma(1), \dots, \sigma(k-1))$, then we call $R$ a \emph{relation on $\uA$ with scope $\sigma$}. The pair $(\sigma, R)$ is called a \emph{constraint}, and $R^{\overleftarrow{\sigma}}$ is the set of tuples in $\uA$ that \emph{satisfy} $(\sigma, R)$.

By $\eta_\sigma$ we denote the \emph{kernel} of the projection function $p_\sigma$, which is the following equivalence relation on $\uA$:
\begin{equation}\label{eq:60}
  \eta_\sigma = \{(\ba,\ba') \in \uA^2 \mid p_\sigma(\ba) = p_\sigma(\ba') \} = \{(\ba,\ba') \in \uA^2 \mid \ba \circ \sigma = \ba' \circ \sigma \}.
\end{equation}

More generally, if $\theta$ is an equivalence relation on the set $\myprod_{j\in \kk} A_{\sigma(j)}$, then we  define the equivalence relation $\theta_\sigma$ on the set $\uA$ 
by $\theta_\sigma = \{(\ba, \ba') \in \uA^2 \mid (\ba\circ \sigma) \mathrel{\theta} (\ba' \circ \sigma)\}$.
Thus,  $\theta_\sigma$ consists of pairs in $\uA^2$ that land in $\theta$ when projected onto coordinates in the scope $\sigma$. If $0_{\uA}$ is the least equivalence on {\uA}, then $\eta_\sigma$ is shorthand for $(0_{\uA})_\sigma$.

If the domain of $\sigma$ is a singleton, $\kk = \{0\}$, then $\sigma$ is just a one-element list, say, $\sigma= (j)$, and we write $p_j$ instead of $p_{(j)}$. Similarly, we write $\Proj_j$ instead of $\Proj_{(j)}$, $\eta_j$ instead of $\eta_{(j)}$, etc.
Some obvious consequences of these definitions are the following: $\bigvee_{j\in \nn}\eta_j =\uA^2$, $\eta_\sigma= \bigwedge_{j\in \sigma}\eta_j$, $\eta_{\nn} = \bigwedge_{j\in \nn}\eta_j = 0_{\uA}$, $\theta_\sigma = \bigwedge_{j\in \kk}\theta_{\sigma(j)}$.


\subsection{Notation for algebraic structures}
This section introduces the notation we use for algebras and related concepts. The reader should consult~\cite{MR2839398} for more details and background on general (universal) algebras and the varieties they inhabit.

\subsubsection{Product algebras}
Fix $n\in \N$, let $F$ be a set of operation symbols, and for each $i\in \nn$ let $\bA_i = \<A_i, F\>$ be an algebra of type $F$. Let $\ubA = \bA_0 \times \bA_1 \times \cdots \times \bA_{n-1}= \myprod_{\nn} \bA_i$ denote the product algebra. If $k \leq n$ and if $\sigma \colon \kk\to \nn$ is a one-one function, then we denote by $p_\sigma \colon \ubA \onto \myprod_{\kk}\bA_{\sigma(i)}$ the \defn{projection} of $\ubA$ onto the
``$\sigma$-factors'' of $\ubA$, which is an algebra homomorphism; thus the kernel $\eta_\sigma$ defined in~(\ref{eq:60}) is a congruence relation of $\ubA$.

\subsubsection{Term operations}\label{ssec:term-ops}
For a nonempty set $A$, we let $\sansO_A$ denote the set of all finitary operations on $A$. That is, $\sansO_A = \bigcup_{n\in \N}\AAn$. A \defn{clone of operations} on $A$ is a subset of $\sansO_A$ that contains all projection operations and is closed under general composition.  If $\bA = \<A, F^\bA\>$ denotes the algebra with universe $A$ and set of basic operations $F$, then $\sansClo (\bA)$ denotes the clone generated by $F$, which is also known as the \defn{clone of term operations} of $\bA$. We write $\sansClo_n (\bA)$ to denote the $n$-ary members of $\sansClo (\bA)$ (despite the fact that $\sansClo_n (\bA)$ is not a clone).

A \defn{strong \malcev condition} is a primitive positive sentence in the language of clones.  A \defn{\malcev condition} is a countably infinite disjunction of strong \malcev conditions $\{\varphi_i\}$ that get weaker as $i$ increases (that is, $\varphi_i \vdash \varphi_{i+1}$). A \defn{nontrivial \malcev condition} is one that fails in some variety (equivalently, fails in the variety of sets). An \defn{idempotent \malcev condition} is one that satisfies the following: for each term $f$ appearing in the condition, the identity $f(x, \dots, x) \approx x$ is a consequence of the identities in the \malcev condition. (See, e.g.,~\cite{MR2839398} or~\cite{MR3076179} for more details.)

Walter Taylor proved in~\cite{MR0434928} that a variety $\var{V}$ satisfies some nontrivial idempotent \malcev condition if and only if it satisfies one of the following form: for some~$n$, $\var{V}$ has an idempotent $n$-ary term  $t$ such that for each $i\in \nn$ there is an identity of the form
\[ t(\ast, \ldots, \ast, x, \ast, \ldots, \ast) \approx t(\ast, \ldots, \ast, y, \ast, \ldots, \ast) \]
true in $\var{V}$ where distinct variables $x$ and $y$ appear in the $i$-th position on either side of the identity.  Such a term $t$ is now commonly called a \defn{Taylor term}.

Throughout this paper we assume all algebras are finite (though some results may apply more generally). Starting in Section~\ref{sec:tayl-vari-rect}, we make the additional assumption that the algebras in question come from a single \defn{Taylor variety} $\var{V}$, by which we mean that $\var{V}$ has a Taylor term and every term of $\var{V}$ is idempotent.

\subsubsection{Subdirect products}
If $k, n \in \N$, if $\sA = (A_0, A_1, \dots, A_{n-1})$ is a list of sets, and if $\sigma \colon \kk \to \nn$ is a $k$-tuple of indices, then a relation $R$ over $\sA$ with scope $\sigma$ is a subset of the Cartesian product $A_{\sigma(0)} \times A_{\sigma(1)} \times \cdots \times A_{\sigma(k-1)}$.
Let $F$ be a set of operation symbols and for each $i\in \nn$ let $\bA_i = \<A_i, F\>$ be an algebra of type $F$. If $\ubA = \myprod_{\nn}\bA_i$ is the product of these algebras, then a relation $R$ over $\sA$ with scope $\sigma$ is called \defn{compatible with $\ubA$} if it is closed under the basic operations in $F$. In other words, $R$ is compatible if the induced algebra $\bR= \<R,F\>$  is a subalgebra of $\myprod_{\kk} \bA_{\sigma(j)}$. If $R$ is compatible with the product algebra and if the projection of $R$ onto each factor is surjective, then $\bR$ is called a \defn{subdirect product} of the algebras in the list $(\bA_{\sigma(0)}, \bA_{\sigma(1)}, \dots, \bA_{\sigma(k-1)})$; we denote this situation by writing $\bR \sdp \myprod_{\kk} \bA_{\sigma(j)}$.

\section{Abelian Algebras}
In later sections nonabelian algebras will play the following role: some of the theorems will begin with the assumption that a particular algebra $\bA$ is nonabelian and then proceed to show that if the result to be proved were false, then $\bA$ would have to be abelian.  To prepare the way for such arguments, we review some basic facts about abelian algebras.

\subsection{Definitions}
Let $\bA = \<A, F^{\bA}\>$ be an algebra. A reflexive, symmetric, compatible binary relation $T\subseteq A^2$ is called a \defn{tolerance of $\bA$}.  Given a pair $(\bu, \bv) \in A^m\times A^m$ of $m$-tuples of $A$, we write $\bu \mathrel{\bT} \bv$ just in case $\bu(i) \mathrel{T} \bv(i)$ for all $i\in \mm$.  We state a number of definitions in this section using tolerance relations, but the definitions don't change when the tolerance in question happens to be a congruence relation (i.e., a transitive tolerance).

Suppose $S$ and $T$ are tolerances on $\bA$.  An \defn{$S,T$-matrix} is a $2\times 2$ array of the form
\[
\begin{bmatrix*}[r] t(\ba,\bu) & t(\ba,\bv)\\ t(\bb,\bu)&t(\bb,\bv)\end{bmatrix*},
\]
where $t$, $\ba$, $\bb$, $\bu$, $\bv$ have the following properties:
\begin{enumerate}[label={(\roman*)}]
\item $t\in \sansClo_{\ell + m}(\bA)$,
\item $(\ba, \bb)\in A^\ell\times A^\ell$ and $\ba \mathrel{\bS} \bb$,
\item $(\bu, \bv)\in A^m\times A^m$ and $\bu \mathrel{\bT} \bv$.
\end{enumerate}
Let $\delta$ be a congruence relation of $\bA$.
If the entries of every $S,T$-matrix satisfy
\begin{equation}%
  \label{eq:22}
t(\ba,\bu) \mathrel{\delta} t(\ba,\bv)\quad \iff \quad t(\bb,\bu) \mathrel{\delta} t(\bb,\bv),
\end{equation}
then we say that $S$ \defn{centralizes $T$ modulo} $\delta$ and we write
$\CC{S}{T}{\delta}$.
That is, $\CC{S}{T}{\delta}$  means that
(\ref{eq:22}) holds \emph{for all}
$\ell$, $m$, $t$, $\ba$, $\bb$, $\bu$, $\bv$ satisfying properties (i)--(iii). 

The \defn{commutator} of $S$ and $T$, denoted by $[S, T]$,
is the least congruence $\delta$ such that $\CC{S}{T}{\delta}$
holds.
Note that $\CC{S}{T}{0_A}$ is equivalent to $[S,T] = 0_A$, and this
is sometimes called the \defn{$S, T$-term condition};
when it holds we say  that
$S$ \defn{centralizes} $T$, and write $\C{S}{T}$.
A tolerance $T$ is called \defn{abelian} if
$\C{T}{T}$ (i.e., $[T, T] = 0_A$).
An algebra $\bA$ is called \defn{abelian} if $1_A$ is abelian
(i.e., $\C{1_A}{1_A}$).



\subsection{Facts about centralizers and abelian congruences}
We now collect some useful facts about centralizers of congruence relations that are needed in Section~\ref{sec:tayl-vari-rect}.
The facts listed in the first lemma below are well-known and easy to prove.
(For examples, see~\cite[Prop~3.4]{HM:1988} and~\cite[Thm~2.19]{MR3076179}.)
\begin{lem}%
\label{lem:centralizers}
Let $\bA$ be an algebra and suppose
$\bB$ is a subalgebra of $\bA$.
Let $\alpha$, $\beta$, $\gamma$, $\delta$, $\alpha_i$
$\beta_j$, $\gamma_k$
be congruences of $\bA$, for some
$i \in I$, $j\in J$, $k \in K$. Then the following hold:
\begin{enumerate}
\item\label{centralizing_over_meet}
  $\CC{\alpha}{\beta}{\alpha \meet \beta}$;
\item\label{centralizing_over_meet2}
  if $\CC{\alpha}{ \beta}{ \gamma_k}$ for all $k \in K$, then
  $\CC{\alpha}{ \beta}{ \Meet_{K}\gamma_k}$;
\item\label{centralizing_over_join1}
  if $\CC{\alpha_i}{ \beta}{ \gamma}$ for all $i\in I$, then
  $\CC{\Join_{I}\alpha_i}{ \beta}{\gamma}$;
\item\label{monotone_centralizers1}
  if $\CC{\alpha}{ \beta}{ \gamma}$ and $\alpha' \leq \alpha$, then
  $\CC{\alpha'}{ \beta}{ \gamma}$;
\item\label{monotone_centralizers2}
  if $\CC{\alpha}{ \beta}{ \gamma}$ and $\beta' \leq \beta$, then
  $\CC{\alpha}{ \beta'}{ \gamma}$;
\item\label{centralizing_over_subalg}
  if $\CC{\alpha}{ \beta}{ \gamma}$ in $\bA$,
  then $\CC{\alpha\cap B^2}{ \beta\cap B^2}{\gamma\cap B^2}$ in $\bB$;
\item\label{centralizing_factors}
  if $\gamma \leq \delta$, then $\CC{\alpha}{ \beta}{ \delta}$
  in $\bA$ if and only if $\CC{\alpha/\gamma}{ \beta/\gamma}{ \delta/\gamma}$
  in $\bA/\gamma$.
\end{enumerate}
\end{lem}

\ifthenelse{\boolean{extralong}}{
\begin{rem}
By (\ref{centralizing_over_meet}),
if $\alpha \meet \beta = 0_{A}$,
then $\C{\beta}{\alpha}$ and $\C{\alpha}{\beta}$.
\end{rem}
}{}

\noindent
The next two lemmas are essential in a number proofs below.
The first lemma identifies special conditions
under which certain quotient congruences are abelian.
The second gives fairly general conditions under which
quotients of abelian congruences are abelian.
\begin{lem}%
  \label{lem:common-meets}
  Let $\alpha_0$, $\alpha_1$, $\beta$ be congruences of $\bA$ and suppose
  $\alpha_0 \meet \beta = \delta = \alpha_1 \meet \beta$.
  Then $\CC{\alpha_0 \join \alpha_1}{ \beta}{ \delta}$.  If, in addition,
  $\beta \leq \alpha_0 \join \alpha_1$, then
  $\CC{\beta}{ \beta}{ \delta}$, so $\beta/\delta$ is an
  abelian congruence of $\bA/\delta$.
\end{lem}
Lemma~\ref{lem:common-meets}
is an easy consequence
of items (\ref{centralizing_over_meet}), (\ref{centralizing_over_join1}),
(\ref{monotone_centralizers1}), and (\ref{centralizing_factors}) of
  Lemma~\ref{lem:centralizers}.
\begin{lem}%
  \label{lem:abelian-quotients}
  Let $\var{V}$ be a locally finite variety with a Taylor term and let $\bA\in \var{V}$.
  Then $\CC{\beta}{\beta}{\gamma}$ for all $[\beta, \beta] \leq \gamma$.
\end{lem}
Lemma~\ref{lem:abelian-quotients} can be proved  by combining
the next result, of David Hobby and Ralph McKenzie,
with a result of Keith Kearnes and Emil Kiss.
\begin{lemC}[\protect{\cite[Thm~7.12]{HM:1988}}]%
  \label{lem:HM-thm-7-12}
  A locally finite variety $\var{V}$ has a Taylor term if and only if it has a
  so called \defn{weak difference term}; that is, a term $d(x,y,z)$ satisfying
  the following conditions for all $\bA \in \var{V}$, all $a, b \in A$, and all
  $\beta \in \Con (\bA)$:
  $d^{\bA}(a,a,b) \mathrel{[\beta, \beta]} b \mathrel{[\beta, \beta]} d^{\bA}(b,a,a)$,
  where $\beta = \Cg^{\bA}(a,b)$.
\end{lemC}

\begin{lemC}[\protect{\cite[Lem~6.8]{MR3076179}}]%
  \label{lem:KK-lem-6-8}
  If $\bA$ belongs to a variety with a
  weak difference term and if $\beta$ and $\gamma$ are congruences of $\bA$
  satisfying $[\beta, \beta] \leq \gamma$, then $\CC{\beta}{\beta}{\gamma}$.
\end{lemC}
\begin{rem}%
  \label{rem:abelian-quotients}
  It follows immediately from Lemma~\ref{lem:abelian-quotients} that in a locally
  finite Taylor variety, $\var{V}$, quotients of abelian algebras are abelian, so the
  abelian members of $\var{V}$ form a subvariety.
  But this can also be derived from Lemma~\ref{lem:HM-thm-7-12},
  since $[\beta, \beta] = 0_A$ implies $d^{\bA}$ is a \malcev term operation on
  the blocks of $\beta$, so if $\bA$ is abelian---i.e., if
  $\CC{1_A}{1_A}{0_A}$---then Lemma~\ref{lem:HM-thm-7-12},
  implies that $\bA$ has a \malcev term operation.
  (This will be recorded below in~Theorem~\ref{thm:type2cp}.)
  It then follows that homomorphic images of $\bA$ are
  abelian. (See~\cite[Cor~7.28]{MR2839398} for more details).
\end{rem}

\ifthenelse{\boolean{draft}}{\newpage}{}

\section{Absorption Theory}%
\label{sec:absorption}
In this section we survey some of the theory related to an important concept called \emph{absorption}, which was invented by Libor Barto and Marcin Kozik. (In~\cite{Barto:2017}, these authors provide their own survey, to which we refer the reader for more details.) After introducing the concept, we discuss some of the properties that make absorption so useful.  The main results in this section are not new. The only possibly novel contributions are some straightforward observations in Section~\ref{sec:linking-lemmas} (though these have likely been observed by others). Our intention here is merely to collect and present these results in a way that makes them easy to apply in the sequel.

Let $\bA$ be an algebra and $t\in \sansClo(\bA)$ a $k$-ary term operation. A subalgebra  $\bB \leq \bA$ is said to be \defn{absorbing in $\bA$ with respect to $t$} if for all $1\leq j\leq k$ and for all
\[
    (b_1, \dots, b_{j-1}, a, b_{j+1}, \dots, b_k)\in B^{j-1}\times A \times B^{k-j}
\]
we have $t(b_1, \dots, b_{j-1}, a, b_{j+1}, \dots, b_k)\in B$. In other terms, $t[B^{j-1}\times A \times B^{k-j}] \subseteq B$, for all $1\leq j \leq k$, where $t[D]$ denotes the set $\{ t(x) \mid x\in D \}$.  The notation $\bB \absorbing \bA$ and the phrase ``$\bB$ \defn{absorbs} $\bA$'' are used just in case $\bB$ is an absorbing subalgebra of $\bA$ with respect to some term. To be explicit about the term, one writes $\bB \absorbing_t \bA$, and $t$ is sometimes called an \defn{absorbing term} for $\bB$ in this case. If $B$ is a subuniverse of $\bA$, then $B \absorbing \bA$ means that the subalgebra $\bB = \langle B, \dots \rangle$, with universe $B$, absorbs $\bA$. The notation $\bB \minabsorbing \bA$ means that $\bB$ is a minimal absorbing subalgebra of $\bA$, that is, $B$ is minimal (with respect to set inclusion) among the absorbing subuniverses of $\bA$.  An algebra is said to be \defn{absorption-free} if it has no proper absorbing subalgebras.

\subsection{Absorption theorem}%
\label{sec:absorption-thm}
In later sections, we will make frequent use of a powerful tool of Barto and Kozik called the ``Absorption Theorem'' (Thm~2.3 of~\cite{MR2893395}). This result concerns the special class of ``linked'' subsets of products. A subset $R \subseteq A_0 \times A_1$ is said to be \defn{linked} if it satisfies the following: for $a, a' \in \Proj_0 R$ there exist elements $c_0, c_2, \dots, c_{2n} \in A_0$ and $c_1, c_3, \dots, c_{2n+1} \in A_1$ such that $c_0 = a$, $c_{2n} = a'$, and for all $0\leq i<n$, $(c_{2i},c_{2i+1})\in R$ and $(c_{2i+2},c_{2i+1})\in R$. A subalgebra $\bR \leq \bA_0 \times \bA_1$ linked just in case its universe is.

Here is an easily proved fact that provides some equivalent ways to define ``linked.''
\begin{fact}
Let $\bR  = \<R, \dots\> \sdp \bA_0 \times \bA_1$, let $\etaR_i = \ker(R \onto A_i)$ denote the kernel of the projection of $R$ onto its $i$-th coordinate, and let $R^{-1} = \{(y,x) \in A_1 \times A_0 \mid (x,y) \in R\}$. Then the following are equivalent:
\begin{enumerate}
\item $\bR$ is linked;
\item $\etaR_0\join \etaR_1 = 1_R$;
\item if $a, a' \in \Proj_0 R$, then $(a,a')$ is in the transitive closure of $R\circ R^{-1}$.
\end{enumerate}
\end{fact}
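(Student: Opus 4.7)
\medskip

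\textbf{Proof proposal.} The plan is to prove the equivalences by observing that all three conditions are different repackagings of connectivity in the same bipartite graph on $A_0 \sqcup A_1$ whose edge set is $R$. Condition (1) describes this connectivity by alternating paths passing through $A_0$ and $A_1$; condition (3) records only the $A_0$-vertices of such paths; condition (2) records the tuples of $R$ traversed as edges sharing, successively, a first or a second coordinate.

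First I would verify (1) $\iff$ (3). Unpacking the relational composition, $(a, a') \in R \circ R^{-1}$ holds exactly when there exists $b \in A_1$ with $(a, b), (a', b) \in R$. Hence $(a, a')$ lies in the transitive closure of $R \circ R^{-1}$ iff there is a finite sequence $a = a_0, a_1, \ldots, a_n = a'$ in $A_0$ together with $b_0, \ldots, b_{n-1}$ in $A_1$ such that $(a_i, b_i), (a_{i+1}, b_i) \in R$ for each $i$. Setting $c_{2i} := a_i$ and $c_{2i+1} := b_i$ recovers exactly the alternating sequence from the definition of linked, so (1) $\iff$ (3) by inspection.

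Next I would establish (1) $\iff$ (2). Because $\etaR_0$ and $\etaR_1$ are congruences of $\bR$, their join in $\Con \bR$ coincides with the transitive closure of $\etaR_0 \cup \etaR_1$. Two tuples of $R$ are $\etaR_0$-related iff they agree in the first coordinate, and $\etaR_1$-related iff they agree in the second, so an $(\etaR_0 \join \etaR_1)$-path from $(a, b)$ to $(a', b')$ is precisely a finite sequence of tuples of $R$ in which successive pairs share either the first or the second coordinate. For (2) $\Rightarrow$ (1), given $a, a' \in \Proj_0 R$, use subdirectness to pick some $(a, b), (a', b') \in R$, extract such a path, and read off the first and second coordinates alternately to obtain the zigzag witnessing that $\bR$ is linked. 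For (1) $\Rightarrow$ (2), start from arbitrary $(a, b), (a', b') \in R$; a zigzag $a = c_0, c_1, \ldots, c_{2n} = a'$ supplied by linkedness produces the sequence of tuples of $R$
\[
(a, b),\ (c_0, c_1),\ (c_2, c_1),\ (c_2, c_3),\ \ldots,\ (c_{2n-2}, c_{2n-1}),\ (c_{2n}, c_{2n-1}),\ (a', b'),
\]
whose successive entries share a coordinate and are therefore $\etaR_0$- or $\etaR_1$-related. Thus $\bigl((a,b),(a',b')\bigr) \in \etaR_0 \join \etaR_1$, and since the pair was arbitrary, $\etaR_0 \join \etaR_1 = 1_R$.

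No genuine obstacle arises here; the argument is essentially a formal unwinding of definitions once one views $R$ as a bipartite graph. The only care required is to invoke subdirectness at the single place in the direction (2) $\Rightarrow$ (1) where one needs to lift the vertices $a, a' \in \Proj_0 R = A_0$ to actual tuples of $R$ before applying the hypothesis.
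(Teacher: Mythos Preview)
Your argument is correct and complete. The paper does not actually prove this Fact---it is introduced with the phrase ``Here is an easily proved fact'' and left to the reader---so there is nothing to compare against; your direct unwinding of the definitions via the bipartite-graph picture is exactly the kind of routine verification the authors had in mind.
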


\begin{thm}[Absorption Theorem~\protect{\cite[Thm~2.3]{MR2893395}}]%
\label{thm:absorption}
If $\var{V}$ is an idempotent locally finite variety, then the following are equivalent:
\begin{itemize}
\item $\var{V}$ is a Taylor variety;
\item if $\bA_0, \bA_1 \in \var{V}$ are finite, absorption-free and $\bR \sdp \bA_0 \times \bA_1$ is linked, then $\bR = \bA_0 \times \bA_1$.
\end{itemize}
\end{thm}

Before moving on to the next subsection, we need another definition.  If $f\colon A^\ell\to A$ and $g\colon A^m\to A$ are  $\ell$-ary and $m$-ary operations on $A$, then by $f \star g$ we mean the $\ell m$-ary operation that maps $\ba = (a_{1 1},\dots, a_{1 m}, a_{2 1}, \dots,  a_{2 m},\dots, a_{\ell m}) \in A^{\ell m}$ to
\[
(f\star g)(\ba) = f(g(a_{1 1}, \dots, a_{1 m}), g(a_{2 1}, \dots, a_{2 m}), \dots,  g(a_{\ell 1}, \dots, a_{\ell m})).
\]

\subsection{Properties of absorption}
In this section we prove some easy facts about algebras that hold quite generally.  In particular, we don't assume the presence of Taylor terms or
other \malcev conditions in this section.  However, for the sake of simplicity, we do assume that all algebras are finite and belong to a single idempotent variety, though some of the results of this section hold for infinite algebras as well.   Almost all of the results in this section are well known. The only exceptions are mild generalizations of facts that have already appeared in print. We have restated or reworked some of the known results in a way that makes them easier for us to apply.

\newcommand\sseq{\ensuremath{\subseteq}}
\begin{lem}\label{lem:fact1}
Let $\bA$ be an algebra with term operations $f$ and $g$. If $\bB\absorbing \bA$ with respect to either $f$ or $g$, then $\bB\absorbing \bA$ with respect to $f\star g$.
\end{lem}
The straightforward proof appears in Section~\ref{sec:fact1}. The next corollary is easy to prove by induction on~$p$.

\begin{cor}%
  \label{cor:fact1gen}
  Fix $p <\omega$, a variety $\var{V}$, algebras $\bA$, $\bB \in \var{V}$, and terms $t_0, t_1, \dots, t_{p-1}$ in the language of $\var{V}$. If $\bB\absorbing_{t_i} \bA$ for some $0\leq i < p$, then $\bB \absorbing_s \bA$, where $s = t_0 \star t_1 \star \cdots \star t_{p-1}$.
\end{cor}

\begin{lemC}[\protect{\cite[Prop~2.4]{MR2893395}}]%
\label{lem:bk-prop-2-4}
  Let $\bA$ be an algebra.
  \begin{itemize}
  \item If $\bC \absorbing \bB \absorbing \bA$, then $\bC \absorbing \bA$.
  \item If $\bB \absorbing_f \bA$, $\bC \absorbing_g \bA$, $B \cap C\neq \emptyset$, and $t = f\star g$, then $B \cap C \absorbing_t\bA$.
  \end{itemize}
\end{lemC}

\noindent
The following is an easy corollary of Lemma~\ref{lem:bk-prop-2-4} (for proof see Sec.~\ref{sec:fact2}) and so is the generalization to finite products and minimal absorbing subalgebras given in Lemma~\ref{lem:min-abs-prod} (and proved in Sec.~\ref{sec:proof-cor-min-abs-prod}).

\begin{cor}%
\label{cor:fact2}
Suppose $\bB_i \absorbing_{f_i} \bA_i$ for $i \in\{0,1\}$ and $t = f_0\star f_1$. Then $\bB_0\times \bB_1 \absorbing_t \bA_0\times \bA_1$.
\end{cor}

\begin{lem}\label{lem:min-abs-prod}
  Let $\bB_i \leq \bA_i$ $(0\leq i < n)$ be algebras and define $\bB := \myprod_{\nn} \bB_i$ and $\bA := \myprod_{\nn} \bA_i$.
  If $\bB_i\absorbing_{f_i}\bA_i$ (resp., $\bB_i\minabsorbing_{f_i}\bA_i$) for each $0\leq i < n$, then $\bB \absorbing_t \bA$ (resp., $\bB \minabsorbing_t \bA$) where $t:= f_0\star f_1 \star \cdots \star f_{n-1}$.
\end{lem}
Here is another simple observation with a trivial proof that is nonetheless quite useful for proving certain algebras absorption-free.
\begin{lem}%
\label{lem:restriction}
Let $\bB$ and $\bC$ be subalgebras of a finite idempotent algebra $\bA$. Suppose $\bB \absorbing_t \bA$, and suppose $D = B\cap C \neq \emptyset$. Then the restriction of $t$ to $C$ is an absorbing term for $D$ in $C$, whence $D\absorbing C$.
\end{lem}
Lemma~\ref{lem:min-abs-prod} and Lemma~\ref{lem:restriction} yield the following observation that we will need in Section~\ref{sec:applications}:
\begin{cor}\label{cor:gen-abs1}
Let
$\bA_0$, $\bA_1$, $\dots$, $\bA_{n-1}$
be finite algebras with $\bB_i\absorbing \bA_i$ for each $i$.
Suppose $\bR \leq \myprod_{i}\bA_i$ and $R \cap \myprod_i B_i \neq \emptyset$.
Then $R \cap \myprod_i B_i$ is an absorbing subuniverse of $\bR$.
\end{cor}

For $0<j\leq k$, we say that a $k$-ary term operation $t$ of an algebra $\bA$ \emph{depends on its $j$-th argument} provided there exist $a_0, a_1, \dots, a_{k-1}$ such that $p(x) := t(a_0, \dots, a_{j-2}, x, a_{j}, \dots, a_{k-1})$ is a nonconstant polynomial of $\bA$.

Let \bA\ be an algebra, and $s \in C \subseteq A$.  We call $s$ a \emph{sink for} $C$ provided the following holds: for every term $t \in \sansClo_k(\bA)$ and all $0< j \leq k$, if $t$ depends on its $j$-th argument, then $t(c_0, c_1, \dots, c_{j-2}, s, c_{j}, \dots, c_{k-1})=s$ for all $c_i \in C$.  If $B$ is an absorbing subuniverse that intersects nontrivially with a set containing a sink, then $B$ must also contain the sink.
\begin{lem}%
\label{lem:sink}
  If $B \absorbing A$, if $s$ is a sink for $C\subseteq A$, and
  if $B\cap C \neq \emptyset$, then $s\in B$.
\end{lem}

We conclude this subsection with three more properties of absorption that will be useful below.  The first of these is proved in Section~\ref{sec:proof-lemma-sdp-general} below.

\begin{lem}\label{lem:sdp-general}
Let $\bA_0$, $\bA_1$, $\dots$, $\bA_{n-1}$ be finite idempotent algebras of the same type, and suppose $\bB_i \minabsorbing \bA_i$ for $0\leq i< n$. Let $\bR \sdp \bA_0 \times \bA_1 \times \cdots \times \bA_{n-1}$,  and let $R' = R \cap (B_0 \times B_1 \times \cdots \times B_{n-1})$. If $R'\neq \emptyset$, then $\bR' \sdp \bB_0 \times \bB_1 \times \cdots \times \bB_{n-1}$.
\end{lem}

\begin{lemC}[\protect{\cite[cf.~Prop~2.15]{MR2893395}}]%
\label{lem:linked-absorber}
Let $\bA_0$ and $\bA_1$ be finite idempotent algebras of the same type, let
$\bR \sdp \bA_0 \times \bA_1$ and assume $\bR$ is linked.
If $\bS \absorbing \bR$, then $\bS$ is also linked.
\end{lemC}

\begin{lemC}[\protect{\cite[Lem~4.1]{MR3374664}}]\label{lem:abelian-AF}
  Finite idempotent abelian algebras are absorption-free.
\end{lemC}

\subsection{Linking is easy, sometimes}\label{sec:linking-lemmas}
We will apply the Absorption Theorem frequently below, so we pause here to consider one of the hypotheses of the theorem that might seem less familiar to some readers.  Specifically, we consider when we should expect a subdirect product to be \emph{linked}, as required of $\bR \sdp \bA_0\times \bA_1$ in the statement of the Absorption Theorem. We show that in some special cases the hypothesis comes essentially for free, courtesy of the following elementary property of subdirect products:
  \begin{lem}\label{lem:basic}
    Let $\bA_0$ and $\bA_1$ be algebras.  Suppose
    $\bR \sdp \bA_0 \times \bA_1$ and let $\etaR_i = \ker(\bR \onto \bA_i)$
    denote the kernel of the $i$-th projection of $\bR$ onto $\bA_i$.
    \begin{enumerate}
    \item If $\bA_0$ is simple, then either $\etaR_0 \join \etaR_1 = 1_R$ or $\etaR_0 \geq \etaR_1$.
    \item If $\bA_0$ and $\bA_1$ are both simple, then either $\etaR_0 \join \etaR_1 = 1_R$ or $\etaR_0 = 0_R = \etaR_1$.
    \end{enumerate}
  \end{lem}
  \begin{proof} $\bR/\etaR_0 \cong \bA_0$ and $\bA_0$ is simple, so $\etaR_0$ is a coatom (i.e., maximal) in the congruence lattice $\Con \bR$. Thus, $\etaR_0 \join \etaR_1 < 1_R$ would imply $\etaR_0 =\etaR_0 \join \etaR_1$, proving (1). If $\bA_1$ is also simple, then the same argument, with the roles of $\etaR_0$ and $\etaR_1$ swapped, shows that $\etaR_0 \leq \etaR_1$, so $\etaR_0 = \etaR_1$.  Since $\etaR_0 \meet  \etaR_1 = 0_R$, both projection kernels are $0_R$ in this case.
  \end{proof}
  \begin{cor}\label{cor:rect-two_factors-pre}
    Suppose $\bA_0$, $\bA_1$ are simple and $\bR \sdp \bA_0 \times \bA_1$.
    If $\etaR_0 \neq \etaR_1$ or $\bA_0 \ncong \bA_1$, then $\bR$ is linked.
  \end{cor}

In a locally finite Taylor variety, when one factor of $\bA_0 \times \bA_1$ is abelian and the other is simple and nonabelian, linking comes for free; that is, every subdirect product of such $\bA_0 \times \bA_1$ is linked, as we now prove.
\begin{cor}\label{cor:S-NA-AF_A-pre}
  Suppose $\bA_0$ and $\bA_1$ are algebras in a locally finite Taylor variety.  If $\bA_0$ is abelian and $\bA_1$ is simple and nonabelian, then every subdirect product of $\bA_0 \times \bA_1$ is linked.     Moreover, if $\bR \sdp \bA_0 \times \bA_1$ and if $\bB_1 \minabsorbing \bA_1$, then $\bR$ intersects $\bA_0\times \bB_1$ nontrivially and this intersection forms a linked subdirect product of $\bA_0 \times \bB_1$.
\end{cor}
\begin{proof}
Suppose  $\bR \sdp \bA_0 \times \bA_1$ is not linked. Since $\bA_1$ is simple, $\etaR_0 \leq \etaR_1$ by Lemma~\ref{lem:basic}. Thus, $\etaR_0 = \etaR_0 \meet \etaR_1 = 0_R$, so $\bR \cong \bR/\etaR_0 \cong \bA_0$ is abelian, while $\bR/\etaR_1 \cong \bA_1$ is not. This is a contradiction since $\bR/\theta$ is abelian for all $\theta \in \Con \bR$ (by Remark~\ref{rem:abelian-quotients}). For the second part, since $\bR$ is subdirect, it follows that $R \cap (A_0\times B_1)$ is nonempty and subdirect by Lemmas~\ref{lem:linked-absorber} and~\ref{lem:abelian-AF}. By Corollary~\ref{cor:gen-abs1}, $\bR \cap (\bA_0\times \bB_1) \absorbing \bR$, so Lemma~\ref{lem:linked-absorber} implies that the intersection is also linked.
\end{proof}
We can extend the previous result to multiple abelian factors by collecting such factors into one. Let $\nn := \{0,1,\dots, n-1\}$ and $k':=\nn-\{k\}$ (in particular, $0' :=  \{1,2 ,\dots, n-1\}$), and let $\bR_{0'} := \Proj_{0'} \bR$.
\begin{cor}\label{cor:Link-1}
    Let $\bA_0$, $\bA_1$, $\dots$, $\bA_{n-1}$ be algebras in a locally finite Taylor variety. Suppose $\bA_0$ is simple, nonabelian, and $\bA_1, \bA_2, \dots, \bA_{n-1}$ are abelian. If $\bR \sdp \myprod_{\nn} \bA_i$, then $\bR \sdp \bA_0 \times \bR_{0'}$ is linked.
\end{cor}
\begin{proof}
    Suppose $\bR \sdp \bA_0 \times \bR_{0'}$ is not linked: $\etaR_0 \join \etaR_{0'} < 1_R$. Since $\bA_0$ is simple, $\etaR_0$ is a coatom, so $\etaR_0 \geq \etaR_{0'}$. Therefore, $ \etaR_{0'} = \etaR_0 \meet \etaR_{0'} = 0_R$, so $\bR \cong \bR/\etaR_{0'} \cong \bR_{0'} \leq \myprod_{0'}\bA_i$. This proves that $\bR$ is abelian, yet $\bR/\etaR_0 \cong \bA_0$ is not, which contradicts Remark~\ref{rem:abelian-quotients}.
\end{proof}

The Absorption Theorem, Lemmas~\ref{lem:sdp-general}--\ref{lem:abelian-AF}, and Corollaries~\ref{cor:gen-abs1},~\ref{cor:rect-two_factors-pre},~\ref{cor:S-NA-AF_A-pre} yield

\begin{lem}\label{lem:rect-two_factors}
  Let $\bA_0$ and $\bA_1$ be finite  algebras in a Taylor variety with minimal absorbing subalgebras $\bB_i\minabsorbing \bA_i$ ($i =0,1$), and suppose $\bR \sdp \bA_0 \times \bA_1$, where $\etaR_0 \neq \etaR_1$.
    \begin{enumerate}[label={(\roman*)}]
    \item If $\bA_0$ and $\bA_1$ are simple and $R\cap (B_0 \times B_1) \neq \emptyset$, then $\bB_0 \times \bB_1\leq \bR$.
    \item  If $\bA_0$ is simple nonabelian and $\bA_1$ is abelian, then $\bB_0 \times \bA_1 \leq \bR$.
    \end{enumerate}
\end{lem}
\begin{proof}\
  \begin{enumerate}[label={(\roman*)}]
   \item First note that $\bR$ is linked, by Corollary~\ref{cor:rect-two_factors-pre}. Let $\bR':= \bR \cap (\bB_0 \times \bB_1)$. Then by Lemma~\ref{lem:sdp-general} $\bR' \sdp \bB_0 \times \bB_1$, and by Corollary~\ref{cor:gen-abs1} $\bR'\absorbing \bR$, so $\bR'$ is also linked, by Lemma~\ref{lem:linked-absorber}.  The hypotheses of the Absorption Theorem---with $\bR'$ in place of $\bR$ and $\bB_i$ in place of $\bA_i$---are now satisfied.  Therefore, $\bR' = \bB_0 \times \bB_1$.
   \item This follows from Corollary~\ref{cor:S-NA-AF_A-pre},
     Lemma~\ref{lem:abelian-AF}, and the Absorption Theorem.
\qedhere
  \end{enumerate}
\end{proof}

\noindent
Again, collecting multiple abelian factors into one, we obtain
  \begin{cor}%
  \label{cor:S-NA-AF_A-multi}
  Let  $\bA_0$, $\bA_1$, $\dots$, $\bA_{n-1}$  be finite Taylor algebras, where $\bA_0$ is simple nonabelian, $\bB_0 \minabsorbing \bA_0$,
  and the remaining $\bA_i$ are abelian, 
  and suppose $\bR \sdp \myprod_{\nn}\bA_i$. Then $\bB_0 \times \bR_{0'} \leq \bR$.
  \end{cor}
  \begin{proof}
    Obviously, $\bR \sdp \bA_0 \times \bR_{0'}$ and $R \cap (B_0 \times R_{0'}) \neq \emptyset$.  Also, $\bR_{0'}:=\Proj_{0'} \bR$ is abelian, so we can  apply Lemma~\ref{lem:rect-two_factors} (ii), with $\bR_{0'}$ in place of $\bA_1$.
  \end{proof}

\section{The Rectangularity Theorem}\label{sec:tayl-vari-rect}
\subsection{Some history}
In late May of 2015 we attended a workshop on ``Open Problems in Universal Algebra,'' at which Libor Barto announced a new theorem that he
and Marcin Kozik proved called the ``Rectangularity Theorem.''  At this meeting Barto gave a detailed overview of the proof~\cite{Barto-shanks}.
The authors of the present paper then made a concerted effort over many months to fill in the details and produce a complete proof.  Unfortunately, each attempt uncovered a gap that we were unable to fill. Nonetheless, we found a slightly different route to the theorem. Our argument is similar to the one presented by Barto in most of its key aspects. In particular, we make heavy use of absorption and the Absorption Theorem plays a key role.  However, we were unable to complete the proof without a new ``Linking Lemma'' (Lem.~\ref{lem:Link-2}) that we proved using a general result of Kearnes and Kiss. Thus, our argument is similar in spirit to the original, but is (at the time of this writing, to our knowledge) the first complete proof of Barto and Kozik's Rectangularity Theorem to appear in print. We hope our presentation sheds addition light on the result.

In the next subsection, we prove the Rectangularity Theorem and some corollaries that we will apply in Section~\ref{sec:applications}, where we show how to use these results to determine the complexity of certain constraint satisfaction problems.

\subsection{Preliminaries}
We are mainly concerned with finite algebras belonging to a single Taylor variety $\var{V}$ (though some results below hold more generally).
As mentioned, our goal in this subsection is a proof of the Rectangularity Theorem of Barto and Kozik.  The theorem can be described informally as follows: Start with finite algebras $\bA_0$, $\bA_1$, $\dots$, $\bA_{n-1}$ in a Taylor variety with minimal absorbing subalgebras $\bB_i \minabsorbing \bA_i$. Suppose at most one of these algebras is abelian and the nonabelian algebras are simple.  Let $\bR$ be a subdirect product of $\myprod_{\nn} \bA_i$. Then (under a few more assumptions) if $R$ intersects nontrivially with the product $\myprod_{\nn}B_i$, then $R$ contains the entire product $\myprod_{\nn}B_i$.

Our proof of this result will depend on a number of useful lemmas that we now prove.  

\noindent {\bf Notation}.
\begin{itemize}
\item $\nn :=\{0, 1, 2, \dots, n-1\}$.
\item For $\sigma \subseteq \nn $, $A = A_0 \times A_1 \times \cdots \times A_{n-1}$, and $\ba = (a_0, a_1\dots, a_{n-1})$, let
  \[ \eta_\sigma := \ker(A \onto \Pi_\sigma A_i) = \{(\ba, \ba') \in A^2 \mid a_i = a_i' \text{ for all } i \in \sigma\} \]
  be the kernel of the projection of $A$ onto coordinates $\sigma$.
\item For $\bR \sdp \bA_0 \times \bA_1 \times \cdots \times \bA_{n-1}$, let
  \[
  \etaR_\sigma := \ker(R \onto \Pi_\sigma A_i) = \{(\br, \br') \in R^2 \mid r_i = r_i' \text{ for all } i \in \sigma\}
  \]
  be the kernel of the projection of $R$ onto the coordinates $\sigma$. Thus, $\etaR_\sigma = \eta_\sigma \cap R^2$.
\item For $\sigma\subseteq \nn$, let $\sigma' := \nn -\sigma$, and take $\bR \sdp \myprod_\sigma \bA_i \times \myprod_{\sigma'}\bA_i$ to meanthe
  \ifthenelse{\boolean{footnotes}}{%
    following:\footnote{Note that the expression
    $\bR \sdp \myprod_\sigma \bA_i \times \myprod_{\sigma'}\bA_i$ does not mean
    $\bR$ is a subalgebra of $\myprod_\sigma \bA_i \times \myprod_{\sigma'}\bA_i$.
    Generally speaking, such an interpretation would require permuting the coordinates
    of elements of $\bR$, which is feasible but unnecessary if we use the definition
    given in the text above.}
  }{following:}
  \begin{enumerate}
  \item $\bR$ is a subalgebra of $\myprod_{\nn} \bA_i$;
  \item $\Proj_\sigma\bR = \myprod_\sigma \bA_i$;
  \item $\Proj_{\sigma'}\bR = \myprod_{\sigma'} \bA_i$;
  \end{enumerate}
  we say that $\bR$ is a \defn{subdirect product of}
  $\myprod_\sigma \bA_i$ \emph{and} $\myprod_{\sigma'}\bA_i$ in this case.
\item
  The subdirect product $\bR \sdp \myprod_\sigma \bA_i \times
  \myprod_{\sigma'}\bA_i$
  is said to be \defn{linked} if $\etaR_\sigma \join \etaR_{\sigma'} = 1_R$.
\item We sometimes use $\bR_\sigma$ as shorthand for $\Proj_\sigma\bR$, the projection
  of $\bR$ onto coordinates $\sigma$.

\end{itemize}

\subsection{Rectangularity theorem}%
\label{sec:rect-theor}
With a few more preliminary results, we will be ready to prove the general Rectangularity Theorem. The first of these comes from combining Lemma~\ref{lem:min-abs-prod}, transitivity of absorption, Corollary~\ref{cor:gen-abs1}, and Lemma~\ref{lem:linked-absorber}.
\begin{lem}\label{lem:general-linked}
 Let $\bA_0$, $\bA_1$, $\dots$, $\bA_{n-1}$ be finite algebras in a Taylor variety, let $\bB_i\minabsorbing \bA_i$ for each $i\in \nn $, and let $\nn  = \sigma \cup {\sigma'}$ be a partition of $\{0, 1, \dots, n-1\}$ into two nonempty disjoint subsets. Assume $\bR$ is a \emph{linked} subdirect product of $\myprod_{\sigma} \bA_i$ and $\myprod_{\sigma'}\bA_i$, and suppose $R' := R \cap \myprod_i B_i \neq \emptyset$. Then $\myprod_i B_i\subseteq R$, so $\bR' = \myprod_i \bB_i$.
\end{lem}
\begin{proof}
  By Lemma~\ref{lem:sdp-general}, $\bR' \sdp \myprod_\sigma \bB_i \times \myprod_{\sigma'} \bB_i$, and by Corollary~\ref{cor:gen-abs1}, $\bR' \absorbing \bR$.  Therefore, Lemma~\ref{lem:linked-absorber} implies $\bR'$ is linked. By Lemma~\ref{lem:min-abs-prod} and transitivity of absorption, it follows that $\myprod_{\sigma} \bB_i$ and $\myprod_{\sigma'} \bB_i$ are both absorption-free, so the Absorption Theorem implies that $\bR' = \myprod_{\sigma}\bB_i \times \myprod_{\sigma'}\bB_i$.
\end{proof}

Next, we recall a theorem of Keith Kearnes and Emil Kiss, proved in~\cite{MR3076179}, about algebras in a Taylor variety.

\begin{thmC}[\protect{\cite[Thm~3.27]{MR3076179}}]\label{thm:kearnes-kiss-3.27}
  Suppose $\alpha$ and $\beta$ are congruences of a Taylor algebra. Then $\CC{\alpha}{ \alpha}{ \alpha \meet \beta}$ if and only if $\CC{\alpha \join \beta}{ \alpha \join \beta}{ \beta}$.
\end{thmC}
The context in which we apply Theorem~\ref{thm:kearnes-kiss-3.27} is that of the following corollary. 
\begin{cor}\label{cor:common-meets}
  Let $\alpha_0$, $\alpha_1$, $\beta$, $\delta$ be congruences of a Taylor algebra $\bA$, and suppose that $\alpha_0 \meet \beta = \delta = \alpha_1 \meet \beta$ and $\alpha_0 \join \alpha_1 = \alpha_0 \join \beta =\alpha_1 \join\beta =1_A$.  Then, $\CC{1_A}{ 1_A}{ \alpha_0}$ and $\CC{1_A}{ 1_A}{ \alpha_1}$, so $\bA/\alpha_0$ and $\bA/\alpha_1$ are abelian algebras.
\end{cor}
\begin{proof}
  The hypotheses of Lemma~\ref{lem:common-meets} hold, so $\CC{\beta}{\beta}{\delta}$ and $\beta/\delta$ is an abelian congruence of $\bA/\delta$.  Now, since $\delta = \alpha_i \meet \beta$, we have $\CC{\beta}{ \beta}{ \alpha_i \meet \beta}$, so Theorem~\ref{thm:kearnes-kiss-3.27} implies $\CC{\alpha_i\join \beta}{ \alpha_i\join \beta}{ \alpha_i}$. This yields $\CC{1_A}{ 1_A}{ \alpha_i}$, since $\alpha_i \join \beta =1_A$.  By Lemma~\ref{lem:centralizers}~(\ref{centralizing_factors}) then, $\CC{1_A/\alpha_i}{ 1_A/\alpha_i}{ \alpha_i/\alpha_i}$. Equivalently, $\CC{1_{A/\alpha_i}}{ 1_{A/\alpha_i}}{ 0_{A/\alpha_i}}$. That is, $\bA/\alpha_i$ is abelian.
\end{proof}

Before proceeding, let us remark that the statements of many results to follow will include the hypothesis that subdirect products do not involve redundant factors (since such factors can be combined into one without changing the solution set of the corresponding \csp instance).  This is expressed by assuming that kernels of projections onto distinct coordinates are distinct, that is, $\etaR_i \neq \etaR_j$ for all $i\neq j$.

\begin{lem}[Linking Lemma]%
  \label{lem:Link-2}
  Let $n\geq 2$ and $k' := \nn-\{k\}$. Let $\bA_0$, $\bA_1$, $\dots$, $\bA_{n-1}$ be finite algebras in a Taylor variety, and let $\bB_i \minabsorbing \bA_i$ for all $0\leq i < n$. Suppose
  \begin{itemize}
  \item at most one $\bA_i$ is abelian,
  \item all nonabelian factors are simple,
  \item $\bR \sdp \bA_0 \times \bA_1 \times \cdots \times \bA_{n-1}$, and
  \item $\etaR_i \neq \etaR_j$ for all $i\neq j$.
  \end{itemize}
  Then there exists $k$ such that $\bR\sdp \bA_k \times \bR_{k'}$ is linked.
\end{lem}
\ifthenelse{\boolean{extralong}}{%
  \begin{rem}
    To be pedantic, the expression  $\bR\sdp \bA_k \times \bR_{k'}$ that appears in the conclusion of the lemma only makes sense if we swap the $0$-th and $k$-th coordinates of all elements of $\bR$ so that the ``$0$-th projection'' is projection onto $\bA_k$. This is merely a defect of the notation.
  \end{rem}
}{}

\begin{proof}
 By way of contradiction, suppose $\etaR_k \join \etaR_{k'} < 1_R$ for all $0\leq k< n$.

 \medskip\noindent
 \textit{Case 1}. Every $\bA_i$ is nonabelian.

 \smallskip\noindent
    If $n=2$, the result holds by Corollary~\ref{cor:rect-two_factors-pre}. Assume $n>2$. Since every factor is simple, each $\etaR_k$ is a coatom, so $\etaR_k \join \etaR_{k'} < 1_R$ implies $\etaR_k \geq \etaR_{k'}$.  Therefore,
    \begin{equation}\label{eq:1111}
      \etaR_{k'}:= \Meet_{i \neq k}\etaR_i = \bigl(\Meet_{i \neq k}\etaR_i\bigr) \meet \etaR_k = \etaR_{\nn} = 0_R.
    \end{equation}
    Note that (\ref{eq:1111}) holds for all $k\in \nn$. Now, let $\tau \subseteq \nn$ be a subset that is maximal among those satisfying $\etaR_\tau > 0_R$. By what we just showed, $\tau$ omits at least two indices, say, $j$ and $\ell$. By maximality, $\etaR_\tau \meet \etaR_j = 0_R = \etaR_\tau \meet \etaR_\ell$,, and $\etaR_\tau \nleq \etaR_j$, and $\etaR_\tau \nleq \etaR_\ell$. Since all factors are simple, $\etaR_j$ and $\etaR_\ell$ are (distinct) coatoms, so $\etaR_\tau \join \etaR_j  = \etaR_\tau \join \etaR_\ell = \etaR_j\join \etaR_\ell = 1_R$. Therefore, by Corollary~\ref{cor:common-meets}, both $\bA_j \cong \bR/\etaR_j$ and $\bA_\ell\cong \bR/\etaR_\ell$ are abelian, which contradicts the assumption that there are no abelian factors.

\medskip\noindent
 \textit{Case 2}. Exactly one $\bA_i$ is abelian.

\smallskip\noindent
    Assume without loss of generality that  $\bA_0$ is the abelian factor.  Then, for $k > 0$, $\bA_k$ is simple and nonabelian, so $\etaR_k$ is a coatom of $\Con\bR$. Therefore,  $\etaR_k \join \etaR_{k'} < 1_R$ implies $\etaR_{k'}\leq \etaR_k$, so
    \begin{equation}\label{eq:110}
      \etaR_{k'}= \etaR_{k'} \meet \etaR_k = \Meet_{i\in \nn}\etaR_i = 0_R.
    \end{equation}
    (Note that this holds for every $0<k<n$.)

    Now, let $\theta = \etaR_{0'} := \Meet_{i \neq 0} \etaR_i$.  Let $\tau$ be a maximal subset of $0'$ such that $\etaR_\tau > \theta$.  Obviously $\tau$ is a proper subset of $0'$. Moreover, by~(\ref{eq:110}) there exists $j\in 0'$ such that $\etaR_\tau \nleq \etaR_j$. Since $\etaR_j$ is a coatom, $\etaR_\tau \join \etaR_j = 1_R$, and by maximality of $\tau$, we have $\etaR_\tau \meet \etaR_j = \theta$.

    Theorem~\ref{thm:kearnes-kiss-3.27} yields $\CC{\etaR_\tau \join \etaR_j}{ \etaR_\tau \join \etaR_j}{ \etaR_j}$ if and only if $\CC{\etaR_\tau}{ \etaR_\tau}{  \etaR_\tau \meet \etaR_j}$; equivalently,
   \[\CC{1_R}{ 1_R}{ \etaR_j} \quad \iff \quad \CC{\etaR_\tau}{ \etaR_\tau}{  \theta}.\]
    However, $\CC{1_R}{ 1_R}{ \etaR_j}$ implies $\bA_j$ is abelian, but $\bA_0$ is the only abelian factor, so
    \begin{equation}\label{imp:12}
      \CC{\etaR_\tau}{ \etaR_\tau}{  \theta} \; \text{{\it does not hold}}.
    \end{equation}
    Since $\bA_0$ is abelian,  $\CC{ 1_R}{ 1_R}{ \etaR_0}$; {\it a fortiori},  $\CC{\etaR_\tau \join \etaR_0}{ \etaR_\tau \join \etaR_0}{ \etaR_0}$, which holds if and only if  $\CC{\etaR_\tau}{ \etaR_\tau}{  \etaR_\tau \meet \etaR_0}$ by Theorem~\ref{thm:kearnes-kiss-3.27}. Now, if $\etaR_0 \meet \etaR_\tau \leq \theta$, then Lemma~\ref{lem:abelian-quotients} implies $\CC{\etaR_\tau}{ \etaR_\tau}{ \theta}$, which is false by (\ref{imp:12}). Thus, $\etaR_0 \meet \etaR_\tau \nleq \theta$. It follows that $\etaR_0 \meet \etaR_\tau \neq 0_R$. By~(\ref{eq:110}), then, there are at least two distinct indices, say, $j$ and $\ell$, in $\nn$ that do not belong to $\tau$.   By maximality of $\tau$, $\etaR_\tau \meet \etaR_j = \theta = \etaR_\tau \meet \etaR_\ell$.  By Corollary~\ref{cor:common-meets} (with $\alpha_0 = \etaR_j$, $\alpha_1 = \etaR_\ell$, $\beta = \etaR_\tau$, $\delta = \theta$), it follows that $\bR/\etaR_j \cong \bA_j$ and $\bR/\etaR_\ell \cong \bA_\ell$ are both abelian---a contradiction.
\end{proof}
We have assembled all the tools we will use to accomplish the main goal of this subsection, which is to prove the following:
\begin{thm}[Rectangularity Theorem, cf.~\protect{\cite[Theorem 38]{Barto:2017}}]\label{thm:rectangularity}
  Let $\bA_0$, $\bA_1$, $\dots$, $\bA_{n-1}$ be finite algebras in a Taylor variety, and let $\bB_i \minabsorbing \bA_i$ for all $0 \leq i < n$. Suppose
\begin{itemize}
 \item at most one $\bA_i$ is abelian,
 \item all nonabelian factors are simple,
 \item $\bR \sdp \bA_0 \times \bA_1 \times \cdots \times \bA_{n-1}$,
 \item $\etaR_i \neq \etaR_j$ for all $i\neq j$, 
 \item $R':= R \cap (B_0 \times B_1 \times \cdots \times B_{n-1}) \neq \emptyset$.
\end{itemize}
Then, $\bR' = \bB_0 \times \bB_1 \times \cdots \times \bB_{n-1}$. 
\end{thm}

\begin{proof}
  It suffices to prove $B_0 \times B_1 \times \cdots \times B_{n-1} \subseteq R$,
  which we do by induction on the number of factors in the product $\bA_0 \times \bA_1 \times \cdots \times \bA_{n-1}$.

  For $n=2$ the result holds by Lemma~\ref{lem:rect-two_factors}. Fix $n>2$ and assume for all $2 \leq k < n$ that the result holds for subdirect products of $k$ factors. We prove the result for subdirect products of $n$ factors.

  Let $\sigma$ be a nonempty proper subset of $\nn$ (so $1\leq |\sigma| < n$). As above $\sigma'$ is the complement of $\sigma$ in $\nn$, and $\bR_\sigma$ is the projection of $\bR$ onto coordinates in $\sigma$.  Each of the projections $\bR_\sigma$ and  $\bR_{\sigma'}$ satisfies the assumptions of the theorem. (Indeed, $\bR_{\sigma} \sdp \myprod_{\sigma}\bA_i$ and for all $i\neq j$ in~$\sigma$, $\ker(R_{\sigma} \onto A_i) \neq \ker(R_{\sigma}\onto A_j)$; similarly for $\bR_{\sigma'}$.) Therefore, the induction hypothesis implies $\myprod_{\sigma}\bB_i \leq \bR_{\sigma}$ and $\myprod_{\sigma'}\bB_i \leq \bR_{\sigma'}$. By Lemma~\ref{lem:min-abs-prod}, $\myprod_{\sigma}\bB_i  \minabsorbing \myprod_{\sigma} \bA_i$, and
  from $\myprod_{\sigma}\bB_i \leq \bR_{\sigma}\leq \myprod_{\sigma} \bA_i$ follows $\myprod_{\sigma}\bB_i \absorbing \bR_{\sigma}$. In fact, $\myprod_{\sigma}\bB_i \minabsorbing \bR_{\sigma}$ as well, by minimality of $\myprod_{\sigma}\bB_i \minabsorbing \myprod_{\sigma}\bA_{i}$, and transitivity of absorption. To summarize, for every $\emptyset \subsetneqq \sigma\subsetneqq \nn$,
  \begin{equation}\label{eq:20}
  \bR \sdp \bR_{\sigma} \times \bR_{\sigma'}, \quad \myprod_\sigma \bB_i \minabsorbing \bR_{\sigma}, \quad \myprod_{\sigma'} \bB_i \minabsorbing \bR_{\sigma'}.
  \end{equation}
  By the Linking Lemma, $\bR \sdp \bR_{k} \times \bR_{k'}$ is linked for some $k$. Thus, by Lemma~\ref{lem:general-linked}, the proof is complete.
\end{proof}

Theorem~\ref{thm:rectangularity} can be adapted to handle cases in which there are multiple abelian factors.
\begin{cor}\label{cor:tayl-vari-abel-fact}
  Let $\bA_0$, $\bA_1$, $\dots$, $\bA_{n-1}$ be finite algebras in a Taylor variety, and let $\bB_i \minabsorbing \bA_i$ for all $0 \leq i < n$. Suppose $\alpha \subseteq \nn$ and
\begin{itemize}
  \item $\bA_i$ is abelian for each $i \in \alpha$,
  \item $\bA_i$ is nonabelian and simple for each $i \in \alpha'$,
  \item $\bR \sdp \bA_0 \times \bA_1 \times \cdots \times \bA_{n-1}$,
  \item $\etaR_i \neq \etaR_j$, for all $i,j\in \alpha'$ with $i\neq j$, and
  \item $R':= R \cap (B_0 \times B_1 \times \cdots \times B_{n-1}) \neq \emptyset$.
\end{itemize}
Then $\bR'= \bR_\alpha  \times \myprod_{\alpha'}\bB_i$.
\end{cor}
\begin{proof}
  If $\alpha= \emptyset$ then the product has no abelian factors and the result follows immediately from the Rectangularity Theorem.  If $\alpha\neq \emptyset$, let $\alpha' = \{i_0, i_1, \dots, i_{m-1}\}$.  Clearly,
  $\bR \sdp \bR_\alpha \times \bA_{i_0} \times \bA_{i_1} \times \cdots \times \bA_{i_{m-1}}$, and this product has a single abelian factor,
  $\bR_\alpha \leq \myprod_{\alpha} \bA_i$. If $\theta$ denotes the kernel of the projection of $R$ onto $R_\alpha$, then for all $j<m$ we have
  $\theta \neq \etaR_{i_j}$, since $\bR_\alpha$ is abelian while $\bA_{i_j}$ is not. We can now apply the Rectangularity Theorem to this subdirect
  representation since the projection congruences are pairwise distinct.
\end{proof}

We conclude this subsection with two observations that facilitate application of the results above to \csp problems.
\begin{cor}\label{cor:RT-cor}
  Let $\bA_0$, $\bA_1$, $\dots$, $\bA_{n-1}$ be finite algebras in a Taylor variety with $\bB_i \minabsorbing \bA_i$ for all $0\leq i < n$, and let $\alpha \subseteq \nn$. Suppose $\bR$ and $\bS$ are subdirect products of $\myprod_{\nn} \bA_i$ and
\begin{enumerate}
  \item $\bA_i$ is abelian for each $i \in \alpha$,
  \item $\bA_i$ is nonabelian and simple for each $i \notin \alpha$,
  \item $\etaR_i^R \neq \etaR_j^R$ and $\etaR_i^S \neq \etaR_j^S$, for all $i,j \in \alpha'$ with $i\neq j$,
  \item $R$ and $S$ intersect $\myprod_{\nn} B_i$ nontrivially, and
  \item there exists $\bx \in R_\alpha \cap S_\alpha$.
\end{enumerate}
Then $R \cap S \neq \emptyset$.
\end{cor}
\begin{proof}
  By Corollary~\ref{cor:tayl-vari-abel-fact}, $\bR'= \bR_\alpha   \times \myprod_{\alpha'}\bB_i$ and $\bS'= \bS_\alpha   \times \myprod_{\alpha'}\bB_i$.  Therefore, $\bx \in R_\alpha \cap S_\alpha$ implies $\{\bx\} \times \myprod_{\alpha'}\bB_i \subseteq R \cap S$.
\end{proof}
The generalization to more than two subdirect products is trivial. Nonetheless, for future reference, we record the \textit{general formulation of the Rectangularity Theorem} in
\begin{cor}%
  \label{cor:RT-cor-gen}
  Let $\bA_0$, $\bA_1$, $\dots$, $\bA_{n-1}$ be finite algebras in a Taylor variety with $\bB_i \minabsorbing \bA_i$ for all $0\leq i < n$, and let $\alpha \subseteq \nn$. Suppose $\bR_\ell \sdp \myprod_{\nn} \bA_i$ ($0\leq \ell < m$) and
  \begin{enumerate}
  \item $\bA_i$ is abelian for each $i \in \alpha$,
  \item $\bA_i$ is nonabelian and simple for each $i \notin \alpha$,
  \item $\forall \ell \in \mm$, $\forall i, j \in \alpha'$, $i\neq j \implies
    \etaR^\ell_i \neq \etaR^\ell_j$ (where $\etaR^\ell_i := \ker(\bR_\ell \onto \bA_i)$),
  \item\label{item:RT-cor-gen-4} each $R_\ell$ intersects $\myprod B_i$ nontrivially,
  \item\label{item:RT-cor-gen-5} there exists $\bx \in \bigcap \Proj_\alpha R_\ell$.
  \end{enumerate}
  Then $\bigcap R_\ell \neq \emptyset$. In fact $\{\bx\} \times \prod\limits_{i\in \alpha'} \bB_i
  \subseteq \bigcap\limits_{\ell <m} \bR_\ell$.
\end{cor}

\ifthenelse{\boolean{draft}}{\newpage}{}

\section{CSP Applications}%
\label{sec:applications}
In this section we give a precise definition of what we mean by a ``constraint satisfaction problem,'' and what it means for such a problem to be ``tractable.'' We then give examples demonstrating how to use the algebraic tools developed above to prove tractability.

\subsection{Definition of a constraint satisfaction problem}%
\label{sec:defin-constr-satisf}
We give a definition of ``constraint satisfaction problem'' that is convenient for our purposes. This is not the most general definition possible, but let us postpone consideration of the scope of our setup.

Let $\bA = \<A, \sF\>$ be a finite idempotent algebra, and let $\Sub(\bA)$ and $\sansS(\bA)$ denote the set of subuniverses and subalgebras
of $\bA$, respectively.
\begin{defi}\label{def:csp}
  Let $\mathfrak{A}$ be a collection of algebras of the same similarity type. We define $\CSP(\mathfrak{A})$ to be the following decision problem:
\begin{itemize}
\item  An  \defn{$n$-variable instance} of $\CSP(\mathfrak{A})$ is a triple $\<\sV, \sA, \sC\>$, where
  \begin{itemize}
  \item $\sV$ is a set of $n$ \emph{variables}; often we take $\sV$ to be $\nn= \{0, 1, \dots, n-1\}$;
  \item $\sA$ is a list $(\bA_0, \bA_1, \dots, \bA_{n-1})\in \mfA^n$ of algebras from $\mathfrak{A}$, one for each variable;
  \item $\sC$ is a list $((\sigma_0, R_0), \dots, (\sigma_{J-1}, R_{J-1}))$ of \emph{constraints}; each $\sigma_j$ is a \emph{scope function} with arity $\ar(\sigma_j) = m_j$; each $R_j$ is a \emph{constraint relation}, which we take to be the universe of a subdirect product of the algebras in $\sA$ whose indices belong to $\im \sigma_j$; that is, \[\bR_j \sdp \prod_{0\leq i < m_j}\bA_{\sigma_j(i)}.\]
  \end{itemize}
\item A \defn{solution} to the instance $\<\sV, \sA, \sC\>$ is an assignment $f \colon \sV \to \bigcup_{\nn}A_i$ of values to variables that satisfies all constraint relations.  More precisely, $f\in \myprod_{\nn}A_i$ and $f \circ \sigma_j \in R_j$ holds for all $0\leq j < J$. We denote the set of solutions to $\<\sV, \sA, \sC\>$ by $\Sol(\sC, \nn)$.

\end{itemize}
\end{defi}

\begin{rem}\
  \begin{enumerate}[label={(\roman*)}]
  \item The scope function $\sigma_i \colon \mm_i \to \sV$ picks the $m_i$ variables involved in the constraint relation $R_i$. Thus,
    $((\sigma_0, R_0), \dots, (\sigma_{J-1}, R_{J-1})) \in \myprod_{i< J}\sV^{m_i}$.
  \item
    Frequently we require that arities of scope functions are bounded above by, say, $m \geq \ar(\sigma_j)$ $(j<J)$, which yields the
    \emph{local constraint satisfaction problem} $\CSP(\mathfrak A, m)$, with instances so restricted.
  \item
    If $(\sigma, R) \in \sC$ is a constraint of an $n$-variable instance, and if $\Sol((\sigma, R), \nn)$ denotes the set of tuples in $\prod_{\nn}A_i$
    that satisfy $(\sigma, R)$, then
    \[ \Sol((\sigma, R), \nn) = R^{\overleftarrow{\sigma}} := \bigl\{\bx \in \prod_{i\in \nn}A_i \mid \bx \circ \sigma \in R\bigr\}.\]
    Thus, the instance $\<\sV, \sA, \sC\>$ has solution set $\Sol(\sC, \nn) = \bigcap_{(\sigma, R) \in \sC} R^{\overleftarrow{\sigma}}$.

  \item If $\mathfrak A$ contains a single algebra, we write $\CSP(\bA)$ instead of $\CSP(\{\bA\})$.  It is important to note that, in our definition of an instance of $\CSP(\mathfrak A)$, a constraint relation is a subdirect product of algebras in $\mathfrak A$. Thus, constraint relations of an insxtance of $\CSP(\bA)$ are subdirect powers of $\bA$. In the literature it is conventional to allow instances of $\CSP(\bA)$ with constraint relations that are (not necessarily subdirect) subpowers of $\bA$.  However, there is a simple ``arc-consistency'' algorithm (see, e.g.,~\cite[Sec. 4.2.1]{Barto:2017}) that takes such an instance and either shows it has no solution or transforms it into an instance that has subdirect constraint relations and the same solution set as the original instance.
\end{enumerate}
\end{rem}

\subsection{Instance size and tractability}%
\label{sec:inst-size-tract}
The computational complexity of an algorithm for solving instances of a \csp is measured as a function of input size. Thus, in order to say what it means for $\CSP(\mathfrak A)$ to be ``computationally tractable,'' we need a definition of input size. This amounts to determining the number of bits required to completely specify an instance of the problem. In practice, an upper bound on the size is usually sufficient.

Using the notation in Definition~\ref{def:csp} as a guide, we bound the size of an instance $\sI=\<\sV, \sA, \sC\>$ of $\CSP(\mathfrak A)$.
Let $q=\max(\card{A_0},\, \card{A_1},\dots,\card{A_{n-1}})$, let $r$ be the maximum rank of an operation symbol in the similarity type, and $p$ the number of operation symbols. Then each member of the list $\sA$ requires at most $pq^r\log q$ bits to specify. Thus
\begin{equation*}
\size(\sA) \leq npq^r\log q.
\end{equation*}
Similarly, each constraint scope $\sigma_j\colon \mm_j \to \nn$ can be encoded using $m_j\log n$ bits. Taking $m=\max(m_1,\dots,m_{J-1})$,
we have
\begin{equation*}
\size(\sigma_0, \sigma_1, \dots, \sigma_{J-1}) \leq Jm\log n.
\end{equation*}
Finally, the constraint relation $R_j$ requires at most $q^{m_j}\cdot m_j \cdot \log q$ bits. Thus, 
\begin{equation*}
\size(R_0, R_1, \dots, R_{J-1}) \leq Jq^m\cdot m\log q.
\end{equation*}
Combining these encodings and using the fact that $\log q \leq q$, we deduce
\begin{equation}\label{eqn:size}
\size(\sI) \leq npq^{r+1} + Jmq^{m+1} + Jmn.
\end{equation}
In particular, for the problem $\CSP(\mathfrak A,m)$, the maximum scope arity, $m$, is fixed, as is~$r$. In this case, we can assume $J\leq n^m$. Consequently, $\size(\sI) \in O((nq)^{m+1})$ which yields a polynomial bound (in $nq$) for the size of the instance.

A problem is called \defn{tractable} if there exists a deterministic algorithm that solves all instances of the problem and does so in an amount of time that is bounded above by a polynomial function of the input size.  We can use Definition~\ref{def:csp} above to classify the complexity of an algebra $\bA$, or collection of algebras $\mathfrak A$, according to the complexity of their corresponding constraint satisfaction problems.

An algorithm $\sansA$ is called a \defn{polynomial-time algorithm} for $\CSP(\mathfrak A)$ 
if there exist constants $c$ and $d$ such that, given an instance $\sI$ of $\CSP(\mathfrak A)$ of size $S= \size(\sI)$, $\sansA$ halts in at most $c S^d$ steps and outputs whether or not $\sI$ has a solution.  In this case, we say $\sansA$ ``solves the decision problem $\CSP(\mathfrak A)$ in polynomial time'' and we call the algebras in $\mathfrak A$ ``jointly tractable.'' We say that $\mathfrak A$ is \emph{jointly locally tractable} if, for every
natural number, $m$, there is a polynomial-time algorithm $\sansA_m$ that solves $\CSP(\mathfrak A,m)$.
We call an algebra $\bA$ \emph{tractable} when $\mathfrak A = \sansS(\bA)$ is jointly tractable.

We emphasize that $\CSP(\mathfrak A)$ is a \emph{decision problem,} that is, the algorithm is only required to respond ``yes'' or ``no'' to the question of whether a particular instance has a solution, it does not have to actually produce a solution. However, it turns out that if $\CSP(\mathfrak A)$ is tractable then the corresponding \emph{search problem} is also tractable; thus, one could design the algorithm to find a solution in polynomial time, if a solution exists,  see~\cite[Cor~4.9]{MR2137072}.

\subsection{Sufficient conditions for tractability}%
\label{ssec:edge-sdm}
A \emph{meet semidistributive} lattice is one that satisfies the quasiidentity $x\meet y \approx x\meet z \rightarrow x\meet y \approx x\meet (y\join z)$. A variety is called \sdm if every member of the variety has a meet semidistributive congruence lattice. Idempotent \sdm varieties are known to be Taylor varieties~\cite{HM:1988}. In~\cite{MR2893395}, Barto and Kozik proved the following.

\begin{thmC}[\cite{MR2893395}]\label{thm:sdm-tractable}
A finite idempotent algebra in an \sdm variety is tractable.
\end{thmC}

A second important technique for establishing tractability is the ``few subpowers algorithm'' of~\cite{MR2678065}, which its discoverers describe as a broad generalization of Gaussian elimination.

\begin{defi}\label{defn:edge-term}
Let $\var{V}$ be a variety and $k$ an integer, $k>1$. A $(k+1)$-ary term $t$ is called a \emph{$k$-edge term for $\var{V}$} if the following $k$ identities hold in $\var{V}$:
\begin{align*}
t(y,y,x,x,x,\dots,x) &\approx x\\
t(y,x,y,x,x,\dots,x) &\approx x\\
t(x,x,x,y,x,\dots,x) &\approx x\\
&\vdots\\
t(x,x,x,x,x,\dots,y) &\approx x.
\end{align*}
\end{defi}
Clearly, every \malcev term and every near unanimity term is an edge term, and every edge term is idempotent and Taylor. Combining the main results of~\cite{MR2563736} and~\cite{MR2678065} yields the following theorem.

\begin{thm}\label{thm:edge-tractable}
Let \bA\ be a finite idempotent algebra with an edge term. Then \bA\ is tractable.
\end{thm}

Finally, we note that tractability is largely preserved by familiar algebraic constructions.

\begin{thmC}[\cite{MR2137072}]\label{thm:HSP-tract}
Let \bA\ be a finite, idempotent, tractable algebra. Every subalgebra and finite power of \bA\ is tractable. Every homomorphic image of \bA\ is locally tractable.
\end{thmC}

\subsection{Rectangularity Theorem: obstacles and applications}
The goal of this subsection is to consider aspects of the Rectangularity Theorem that limit its utility as a tool for proving tractability of \csps.
We give a brief overview of the potential obstacles, and then consider each one in more detail.
\begin{enumerate}
\item\label{item:abelian-potatoes-tractable}
  {\bf Abelian factors must have easy partial solutions.} One potential limitation concerns abelian factors in the product algebra associated with a \csp instance. Indeed, application of Corollary~\ref{cor:RT-cor-gen} requires a partial solution $\bx \in \bigcap \Proj_\alpha R_\ell$ to the restricted instance derived from projecting constraints onto abelian factors. Fortunately, Section~\ref{sec:tract-abel-algebr} will show that this obstacle is easily overcome.

\item {\bf Intersecting products of minimal absorbing subalgebras.}
  The Rectangularity Theorem (and corollaries) assumes the universes of the subdirect products in question intersect nontrivially with a single product $\myprod B_i$ of minimal absorbing subuniverses.
  We refer to ``minimal absorbing subuniverses'' quite frequently, so from now on we call them \defn{masses}; that is,
  \begin{itemize}
    \item \defn{mass} := {\bf m}inimal {\bf a}bsorbing {\bf s}ubuniver{\bf s}e; 
    \item \defn{mass product} := product of \masses. 
  \end{itemize}
  Assumption~(\ref{item:RT-cor-gen-4}) of Corollary~\ref{cor:RT-cor-gen} requires all constraint relations intersect nontrivially with a single \mas product. This is a real limitation, as we demonstrate in Section~\ref{sec:mass-products}.

\item {\bf Nonabelian factors must be simple.}
  This is the most obvious limitation of the theorem and at this point we don't have a completely general means of overcoming it.  However, there are some work-arounds that can be useful in special cases; we describe these below.
\end{enumerate}

\noindent
In the next two subsections we address potential limitations (1) and (2). In Section~\ref{sec:var-reduc} we develop alternative methods for proving tractability of nonsimple algebras, and in Section~\ref{sec:csps-comm-idemp} we apply these methods in the special setting of ``commutative idempotent binars.''

\subsubsection{Tractability of abelian algebras}%
\label{sec:tract-abel-algebr}
To address concern~(\ref{item:abelian-potatoes-tractable}) above we observe that finite abelian algebras yield tractable \csps. We show how to use this and the Rectangularity Theorem to find solutions to \csp instances (or prove none exists). To begin, recall the result of tame congruence theory~\cite[Thm~7.12]{HM:1988}  that we reformulated above in Lemma~\ref{lem:HM-thm-7-12}.
As noted in Remark~\ref{rem:abelian-quotients} above, this result has the following important corollary.
(For a proof that avoids tame congruence theory, see~\cite[Thm~5.1]{MR3374664}.)
\begin{thm}%
  \label{thm:type2cp}
Let $\var{V}$ be a locally finite variety with a Taylor term. Every finite abelian member of $\var{V}$ generates a congruence-permutable variety. Consequently, every finite abelian member of $\var{V}$ is tractable.
\end{thm}

Let $\bA$ be a finite algebra in a Taylor variety and fix an instance $\sI = \<\sV, \sA, \sC\>$ of $\CSP(\sansS(\bA))$ with $n = |\sV|$ variables.
Suppose all nonabelian algebras in the list $\sA$ are simple. Let $\alpha \subseteq \nn$ denote the indices of the abelian algebras in $\sA$,
and assume without loss of generality that $\alpha = \{0,1,\dots, q-1\}$. That is, $\bA_0$, $\bA_1$, $\dots$, $\bA_{q-1}$ are finite idempotent abelian algebras. Consider the restricted instance $\sI_\alpha$ obtained by dropping all constraints with scopes that do not intersect $\alpha$, and by restricting the remaining constraint relations to the abelian factors. Since the only algebras involved in $\sI_\alpha$ are abelian, this is an instance of a
tractable \csp.  Therefore, we can check in polynomial-time whether or not $\sI_\alpha$ has a solution. If there is no solution, then the original instance $\sI$ has no solution. On the other hand, suppose $f_\alpha\in \myprod_{j\in \alpha}A_j$ is a solution to $\sI_\alpha$. In Corollary~\ref{cor:RT-cor-gen}, to reach the conclusion that the full instance has a solution, we required a partial solution $\bx \in \bigcap \Proj_\alpha R_\ell$. This is precisely what $f_\alpha$ provides.

To summarize, we look for a partial solution by restricting the instance to abelian factors and, if successful, we use the partial solution for $\bx$ in Corollary~\ref{cor:RT-cor-gen}. Then, if the remaining hypotheses of Corollary~\ref{cor:RT-cor-gen} hold, we conclude that a solution to the original instance exists. If no solution to the restricted instance exists, then the original instance has no solution. Thus, assumption~(\ref{item:RT-cor-gen-5}) of
Corollary~\ref{cor:RT-cor-gen} does not limit the application of the result.

\subsubsection{Mass products}%
\label{sec:mass-products}
This section concerns products of minimal absorbing subalgebras, or ``\mas products.'' Hypothesis~(\ref{item:RT-cor-gen-4}) of Corollary~\ref{cor:RT-cor-gen} assumes that all constraint relations intersect nontrivially with a single \mas product. Instances where this hypothesis does not hold are easy to contrive. For example, take the algebra $\bA = \<\{0,1\}, m\>$, where $m \colon A^3 \to A$ is an idempotent majority operation---that is,
$m(x,x,x)\approx x$ and $m(x,x,y)\approx m(x,y,x)\approx m(y,x,x) \approx x$. Let $\bR = \<R, m\>$ and $\bS=\<S, m\>$ be subdirect products of
$\bA^3$ with universes $R = \{(0,0,0), (0,0,1), (0,1,0), (1,0,0)\}$ and $S = \{(0,1,1), (1,0,1), (1,1,0), (1,1,1)\}$.
Then there are \mas products that intersect nontrivially with either $R$ or $S$, and each of $R$ and $S$ fully contains every \mas product that it intersects. However, there is no single \mas product intersecting nontrivially with both $R$ and $S$.


The example described above is very special. In particular, there is no solution to the instance with constraint relations $R$ and $S$. We might hope that when an instance \emph{does} have a solution, then there must be a solution that passes through a \mas product. As we now demonstrate, this is false. In fact, Example~\ref{ex:mass-products-3} presents subdirect powers that intersect nontrivially (so the instance has a solution), yet their intersection avoids every \mas product (so the Rectangularity Theorem cannot locate a solution nor prove none exists).
\begin{prop}\label{claim:mass-products-2}
There exists an algebra $\bA$ with subdirect powers $\bR$ and $\bS$ such that $R \cap S \neq \emptyset$ and
$R \cap S \cap \myprod B_i = \emptyset$, for every collection $\{B_i\}$ of \masses.
\end{prop}

\noindent We prove this by simply producing an example that meets the stated conditions.

\begin{exa}\label{ex:mass-products-3}
Let $\bA = \<\{0,1,2\}, \circ\>$ be an algebra with binary operation $\circ$ given by
\vskip3mm
 \begin{center}
 \begin{tabular}{c|ccc}
      $\circ$ & 0 & 1 & 2 \\
      \hline
      0 & 0 & 1 & 2\\
      1 & 1 & 1 & 0\\
      2 & 2 & 0 & 2
    \end{tabular}
 \end{center}
\vskip3mm
The proper nonempty subuniverses of $\bA$ are $\{0\}$, $B_1:=\{1\}$, $B_2:=\{2\}$, $\{0,1\}$, and $\{0,2\}$. Note that $B_1$ and $B_2$ are masses with respect to the term $t(x,y,z,w) = (x \circ y) \circ (z \circ w)$. Note also that $\{0\}$ is not an absorbing subuniverse of~$\bA$. For if $\{0\}\absorbing \bA$, then $\{0\}\absorbing \{0,1\}$ which is  false, since $\{0,1\}$ is a semilattice with absorbing element $1$.

Let $\bA_0 \cong \bA \cong \bA_1$, $R = \{(0,0), (1,1), (2,2)\}$, and $S = \{(0,0), (1,2), (2,1)\}$.  Then $\bR$ and $\bS$ are subdirect products of $\bA_0\times \bA_1$ and $R\cap S= \{(0,0)\}$.  There are four masses of $\bA_0 \times \bA_1$, namely,
$B_i \times B_j = \{(i, j)\}$, for $i, j \in \{1, 2\}$. Finally, observe,
\[R\cap (B_i \times B_j) = \begin{cases}  \{(i,j)\}, & i=j,\\  \emptyset, & i\neq j;\end{cases}
\qquad
S\cap (B_i \times B_j) = \begin{cases}  \emptyset, & i=j\\  \{(i,j)\}, & i\neq j.\end{cases}\]
Thus, $R \cap S \cap (B_i \times B_j) = \emptyset$ for all $i, j \in \{1, 2\}$, proving Proposition~\ref{claim:mass-products-2}.
\end{exa}

\subsection{Algorithm synthesis for heterogeneous problems}%
\label{sec:heter-potat}
We now present a new result (Theorem~\ref{thm:fry-pan2}) which, like the Rectangularity Theorem, aims to describe some of the elements that
must belong to certain subdirect products. The conclusions we draw are weaker than those of Theorem~\ref{thm:rectangularity}, but the hypotheses required here are simpler, and the motivation is different.

We have in mind subdirect products of ``heterogeneous''  families of algebras. What we mean by this is described and motivated as follows:
let $\sC_1$, $\dots$, $\sC_m$ be classes of algebras, all of the same signature. Perhaps we are lucky enough to possess a single algorithm that
proves the algebras in $\bigcup\sC_i$ are jointly tractable. Suppose instead that we have no such single algorithm at hand, but at least we do happen to know that the classes $\sC_i$ are \emph{separately tractable}; that is, for each $i$ we have a proof (or program) $\P_i$ certifying the tractability of $\CSP(\sC_i)$.  It is natural to ask whether and under what conditions it might be possible to derive from $\{\P_i \mid 1\leq i \leq m\}$ a single proof (program) certifying that the algebras in $\bigcup \sC_i$ are \emph{jointly tractable}, so that $\CSP(\bigcup \sC_i)$ is tractable.

The results in this section take a step in this direction by considering two special classes of algebras that are known to be separately tractable,
and demonstrating that they are, in fact, jointly tractable. We apply this tool in Section~\ref{sec:block-inst} to prove tractability of a \csp involving algebras that were already known to be tractable, but not previously known to be jointly tractable.

The results here involve special term operations called cube terms and transitive terms. A $k$-ary idempotent term $t$ is a \emph{cube term} if for every coordinate $i \leq k$, $t$ satisfies an identity of the form $t(x_1, \dots, x_k) \approx y$, where $x_1,\dots, x_k \in \{x, y\}$ and $x_i = x$. (See~\cite{MR2563736}.) A $k$-ary operation $f$ on a set $A$ is called \emph{transitive in the $i$-th coordinate} if for every $u, v \in A$, there exist $a_1,\dots, a_k \in A$ such that $a_i = u$ and $f(a_1 ,\dots, a_n) = v$. Operations that are transitive in every coordinate are called \emph{transitive}. (See~\cite{MR3374664}.)

\begin{fact}%
  \label{fact:cubes-are-trans}
  Let $\bA$ be a finite idempotent algebra and suppose $t$ is a cube term operation on $\bA$. Then $t$ is a transitive term operation on $\bA$.
\end{fact}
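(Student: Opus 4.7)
The plan is to unpack the two definitions and observe that the cube-term identity for coordinate $i$ is itself exactly a witness to transitivity in coordinate $i$, via the obvious substitution.

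More precisely, fix any coordinate $i$ with $1 \leq i \leq k$. By hypothesis there is a cube-term identity
\[
t(x_1, x_2, \dots, x_k) \approx y
\]
holding in $\bA$, with each $x_j \in \{x, y\}$ and $x_i = x$. Given arbitrary $u, v \in A$, my plan is to produce $a_1, \dots, a_k \in A$ with $a_i = u$ and $t^{\bA}(a_1, \dots, a_k) = v$. Define
\[
a_j := \begin{cases} u, & \text{if } x_j = x,\\ v, & \text{if } x_j = y, \end{cases}
\]
for each $1 \leq j \leq k$. Since $x_i = x$, we get $a_i = u$ as required. Substituting $x \mapsto u$ and $y \mapsto v$ into the cube-term identity above then yields $t^{\bA}(a_1, \dots, a_k) = v$, which completes the verification of transitivity in coordinate $i$. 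Repeating this for every $i$ shows $t$ is transitive.

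There is essentially no obstacle here: the result is a direct syntactic consequence of the definitions, and idempotence of $\bA$ is not explicitly invoked (it is already built into the cube-term definition via the case $x = y$). One could also remark, for emphasis, that finiteness of $\bA$ plays no role and that the conclusion holds for arbitrary algebras admitting a cube term.
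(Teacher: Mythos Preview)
Your proof is correct and follows essentially the same approach as the paper: fix a coordinate, invoke the cube-term identity for that coordinate, and substitute $u$ for $x$ and $v$ for $y$. Your additional remark that neither finiteness nor the explicit idempotence hypothesis is actually used is accurate and worth noting.
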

\begin{proof}
  This follows immediately from the definitions. 
\end{proof}

\begin{fact}\label{fact:pseudovar}
  The class
  \begin{equation}\label{eq:0001}
    \sT = \{\bA \mid \bA \text{ finite and every subalgebra of $\bA$ has a transitive term op}\}
  \end{equation}
  is closed under the taking of homomorphic images, subalgebras, and finite products. That is, $\sT$ is a \emph{pseudovariety}.
\end{fact}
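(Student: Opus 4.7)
The plan is to verify the three closure properties $\sansS$, $\sansH$, and $\sansP_{\mathrm{fin}}$ in turn.

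\textbf{Closure under $\sansS$} is immediate from the definition: if $\bA\in\sT$ and $\bB\leq\bA$, then any subalgebra of $\bB$ is itself a subalgebra of $\bA$ and therefore inherits a transitive term from the hypothesis $\bA\in\sT$.

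\textbf{Closure under $\sansH$} proceeds by pulling back. Suppose $h\colon\bA\twoheadrightarrow\bB$ is a surjective homomorphism with $\bA\in\sT$, and let $\bC\leq\bB$. The preimage $\bD=h^{-1}(C)$ is a subalgebra of $\bA$, so by hypothesis it has a transitive term $t$. To see that $t$ is transitive on $\bC$, fix a coordinate $i$ and elements $u,v\in C$; choose any preimages $u',v'\in D$ (possible because $h$ restricts to a surjection $\bD\twoheadrightarrow\bC$), apply transitivity of $t$ on $\bD$ to obtain $a_1,\dots,a_k\in D$ with $a_i=u'$ and $t^\bD(a_1,\dots,a_k)=v'$, and push forward via $h$: the tuple $h(a_1),\dots,h(a_k)\in C$ satisfies $h(a_i)=u$ and $t^\bC(h(a_1),\dots,h(a_k))=h(v')=v$.

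\textbf{Closure under $\sansP_{\mathrm{fin}}$} reduces by induction to binary products. Let $\bA_0,\bA_1\in\sT$ and take an arbitrary $\bR\leq\bA_0\times\bA_1$; I must exhibit a transitive term on $\bR$. The projections $\bR_i=\Proj_i\bR$ are subalgebras of $\bA_i$, hence are in $\sT$ by the $\sansS$-step, so each carries a transitive term, say $t_0$ of arity $\ell$ on $\bR_0$ and $t_1$ of arity $m$ on $\bR_1$. The natural candidate is the $\ell m$-ary star composition $s=t_0\star t_1$. Given a coordinate $(i,j)\in\ell\times m$ and elements $\bu=(u_0,u_1)$, $\bv=(v_0,v_1)\in R$, the construction of witnesses $\bbr_{pq}\in R$ with $\bbr_{ij}=\bu$ and $s^\bR(\bbr)=\bv$ is based on (a) using transitivity of $t_0$ on $\bR_0$ to choose $\alpha_1,\dots,\alpha_\ell\in R_0$ with $\alpha_i=u_0$ and $t_0^{\bR_0}(\alpha_1,\dots,\alpha_\ell)=v_0$ for the first coordinate, (b) using transitivity of $t_1$ for the second coordinate to control the $i$-th inner block, and (c) filling the off-pattern entries using idempotence so that inner blocks collapse to single values.

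\textbf{Expected main obstacle.} The argument of the previous paragraph only directly delivers witnesses in $R_0\times R_1$, whereas we need them in the subdirect product $R$ itself. The way to overcome this is to exploit the auxiliary observation that for each $\alpha\in R_0$ the fiber $R|_\alpha=\{\beta\in R_1\mid(\alpha,\beta)\in R\}$ is a subuniverse of $\bR_1$ (this uses idempotence), and dually for fibers of $R$ over $R_1$. By the $\sansS$-closure already established, each such fiber is itself in $\sT$ and carries its own transitive term, which can be used inside each row of the matrix $(\bbr_{pq})$ to produce values compatible with the first-coordinate choices $\alpha_p$. The cleanest way to package this is to replace $t_0$ and $t_1$ by a single ``uniform'' term $t^*$ that is simultaneously transitive on every subalgebra of $\bA_i$, obtained by $\star$-composing the finitely many witness terms (this preserves transitivity because $\star$-composing a transitive term with idempotent terms yields a transitive term). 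With uniform witnesses in place, the row-by-row filling procedure succeeds and delivers a transitive term on $\bR$, completing the proof.
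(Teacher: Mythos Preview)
The paper states this fact without proof, so there is no argument to compare against directly. Your proposal is correct in the paper's idempotent setting. The $\sansS$ and $\sansH$ steps are complete as written. For $\sansP_{\mathrm{fin}}$, your diagnosis of the obstacle and your fix (pass to a single uniform transitive term on each factor via $\star$-composition, then exploit that fibers of $R$ over points of $R_0$ or $R_1$ are subalgebras by idempotence) are exactly right; the ``row-by-row filling'' can be made explicit as follows. With $s=t_0\star t_1$, given $\bu,\bv\in R$ and a target position $(i,j)$: use transitivity of $t_1$ on $R_1$ to choose lifts $\bc_1,\dots,\bc_m\in R$ with $\bc_j=\bu$ whose second coordinates map to $v_1$ under $t_1$; place these along row~$i$ and compute $\bd=t_1(\bc_1,\dots,\bc_m)=(d_0,v_1)\in R$; then use transitivity of $t_0$ on the fiber $\{a\mid (a,v_1)\in R\}$ (which contains both $d_0$ and $v_0$) to pick $\alpha_1,\dots,\alpha_\ell$ with $\alpha_i=d_0$ and $t_0(\alpha)=v_0$, and fill each remaining row $p\neq i$ with the constant $(\alpha_p,v_1)\in R$. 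Idempotence collapses those constant rows under $t_1$, and applying $t_0$ yields $\bv$.

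One point worth flagging explicitly in your write-up: idempotence is doing real work here (both for the fiber-subalgebra claim and for collapsing constant rows), and while it is a standing hypothesis throughout the paper, the Fact as stated does not mention it.
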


We also require the following obvious fact about nontrivial algebras in a Taylor variety. (We call an algebra \emph{nontrivial} if it has more than one element in its universe.)
\begin{fact}\label{fact:nonconstant-terms-exist}
  If $\bA$ and $\bB$ are nontrivial algebras in a Taylor variety $\var{V}$, then for some $k>1$ there is a $k$-ary term $t$ in $\var{V}$ such that $t^{\bA}$ and $t^{\bB}$ each depends on at least two of its arguments.
\end{fact}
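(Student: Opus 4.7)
The plan is to show that a single witness suffices: the Taylor term $t$ of the variety $\var{V}$ itself works in every nontrivial algebra of $\var{V}$. Let $t$ be a Taylor term of $\var{V}$, of arity $k$. First I would observe that $k \ge 2$: a unary Taylor term would satisfy $t(x) \approx t(y)$ (the Taylor identity at the single coordinate), forcing $t$ to be constant, while idempotency forces $t(x) \approx x$; so on any nontrivial algebra these are incompatible and in fact $\var{V}$ itself would be trivial.

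Next I would fix a nontrivial algebra $\bA \in \var{V}$ and argue that $t^{\bA}$ depends on at least two of its arguments. Suppose, for contradiction, that $t^{\bA}$ depends on at most one argument. Using the definition of ``depends on its $j$-th argument'' from Section~\ref{sec:absorption}, a $k$-ary operation that depends on at most one argument is either constant or a function of a single coordinate; combined with idempotency ($t^{\bA}(x,\dots,x)=x$), this forces $t^{\bA}$ to be a projection $p_j$ for some $j \in \{1,\dots,k\}$. But then the Taylor identity at coordinate~$j$,
\[
t(\ast,\dots,\ast,x,\ast,\dots,\ast) \approx t(\ast,\dots,\ast,y,\ast,\dots,\ast),
\]
evaluates in $\bA$ to $x \approx y$, which forces $|A| = 1$, contradicting nontriviality of $\bA$. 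Hence $t^{\bA}$ depends on at least two arguments; the identical argument applied to $\bB$ shows $t^{\bB}$ depends on at least two arguments as well.

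Thus the very same $k$-ary Taylor term $t$ of $\var{V}$ is a witness for both $\bA$ and $\bB$ simultaneously, establishing the claim. There is no real obstacle here: the only mildly subtle point is ruling out the essentially unary case and confirming that an idempotent operation which depends on at most one variable is forced to be a projection, a standard observation. Note that the argument uses nothing beyond idempotency of $\var{V}$ and the definition of a Taylor term, so no appeal to weak difference terms, centralizers, or absorption is required.
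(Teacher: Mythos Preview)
Your proof is correct. The paper does not supply a proof of this fact at all; it simply labels it ``obvious'' and moves on. Your argument---that the Taylor term itself already witnesses the claim in every nontrivial algebra, because an idempotent operation depending on at most one coordinate is a projection, and a projection violates the Taylor identity at that coordinate---is exactly the natural way to justify this, and nothing more is needed.
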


Finally, we are ready for the main result of this section.

\begin{thm}\label{thm:fry-pan2}
  Let $\bA_0, \bA_1, \dots, \bA_{n-1}$ be finite idempotent algebras in a Taylor variety and assume there exists a proper nonempty subset $\alpha \subset \nn$ such that
  \begin{itemize}
  \item $\bA_i$ has a cube term for all $i \in \alpha$,
  \item $\bA_j$ has a sink $s_j \in A_j$ for all $j \in \alpha'$;  let $\bs \in \prod_{\alpha'}A_j$ be a tuple of sinks.
  \end{itemize}
   If $\bR \sdp \prod_{\nn} \bA_i$, then the set $X:=R_\alpha \times \{\bs\}$ is a subuniverse of $\bR$.
\end{thm}
\begin{rem}
To foreshadow applications of Theorem~\ref{thm:fry-pan2}, imagine we have algebras of the type described, and a collection $\sR$ of subdirect products of these algebras. Suppose also that we have somehow determined that the intersection of the $\alpha$-projections of these subdirect
products is nonempty, say, $\bx_\alpha \in \bigcap_{R \in \sR} R_\alpha$. Then the full intersection $\bigcap \sR$ will also be nonempty, since according to the theorem it must contain the tuple that is $\bx_\alpha$ on $\alpha$ and $\bs$ off $\alpha$.
\end{rem}

\begin{proof}
  Fix $\bx \in X$, so $\bx\circ \alpha' = \bs$ and $\bx \circ \alpha \in R_\alpha$. We prove $\bx \in R$.
  Define $\bA_{\alpha} =\prod_{\alpha}\bA_i$ and $\bA_{\alpha'} =\prod_{\alpha'}\bA_i$, so $R_{\alpha}$ and $R$ are subuniverses of $\bA_\alpha$ and $\bA_\alpha \times \bA_{\alpha'}$, respectively.

  By Fact~\ref{fact:cubes-are-trans},  for each $i \in \alpha$ every subalgebra of $\bA_i$ has a transitive term operation.  Moreover, the class $\sT$ defined in~(\ref{eq:0001}) is a pseudovariety, so every subalgebra of $\bA_{\alpha}$ has a transitive term operation.  Suppose there are $J$ subalgebras of $\bA_{\alpha}$ and let $\{t_j\mid 0\leq j < J\}$ denote the collection of transitive term operations, one for each subalgebra. Then it is not hard to prove that $t := t_0 \star  t_1 \star \cdots \star  t_{J-1}$ (defined in Sec.~\ref{sec:absorption-thm}) is a transitive term for every subalgebra of $\bA_\alpha$.  In particular $t$ is transitive for the subalgebra with universe $R_{\alpha}$.

Assume $t$ is $N$-ary. Fix $j\in \alpha'$.  
We may assume without loss of generality that $t^{\bA_j}$ depends on its first argument, at position 0. (It must depend on at least one argument by idempotence.) Now, since $\bR$ is a subdirect product of $\prod_{\nn}\bA_i$, there exists $\br^{(j)} \in R$ such that $\br^{(j)}(j) = s_j$, the sink in $\bA_j$.  Since   $\bx\circ \alpha\in R_\alpha$ and since $t^{\bA_\alpha}$ is transitive over $\bR_\alpha$, there exist $\br_1, \dots, \br_{N-1}$ in $R$ such that
  \begin{align*}
   \vy^{(j)}&:= t^{\bA_{\alpha} \times \bA_{\alpha'}}(\br^{(j)}, \br_1, \dots, \br_{N-1})\\
      &= (t^{\bA_{\alpha}}(\br^{(j)}\circ \alpha, \br_1\circ \alpha, \dots, \br_{N-1}\circ \alpha),
    t^{\bA_{\alpha'}}(\br^{(j)}\circ \alpha', \br_1\circ \alpha', \dots, \br_{N-1}\circ \alpha'))
  \end{align*}
  belongs to $R$  and satisfies $\vy^{(j)}\circ \alpha = \bx \circ \alpha$ and $\vy^{(j)}(j) = s_j$.


  If $\alpha' = \{j\}$, we are done, since $\vy^{(j)} = \bx$ in that case. If $|\alpha'|>1$, then we repeat the foregoing procedure for each $j \in \alpha'$ and obtain a subset $\{\vy^{(j)} \mid j \in \alpha'\}$ of $R$, each member of which agrees with $\bx$ on $\alpha$ and has a sink in some position
  $j\in \alpha'$.

  Next, choose distinct $j, k \in \alpha'$. Suppose $w$ is a Taylor term for $\var{V}$. Then by Fact~\ref{fact:nonconstant-terms-exist} we may assume without loss of generality that $w^{\bA_j}$ depends on its $p$-th argument and $w^{\bA_k}$ depends on its $q$-th argument, for some $p\neq q$. Consider
  \begin{align*}
    \bz:= w^{\Pi \bA_i}(\vy^{(j)}, \dots, \vy^{(j)}, & \, \vy^{(k)}, \vy^{(j)}, \dots, \vy^{(j)})\\
    &\; ^{\widehat{\lfloor}} \, \text{$q$-th argument}
  \end{align*}
  Evidently, $\bz(j) = s_j$, $\bz(k) = s_k$, and $\bz \circ \alpha = \bx \circ \alpha$ by idempotence, since, when restricted to indices in $\alpha$, all the input arguments agree and are equal to $\bx \circ \alpha$.  If $\alpha' = \{j, k\}$, we are done.  Otherwise, choose $\ell \in \alpha'-\{j, k\}$, and again $w^{\bA_\ell}$ depends on at least one of its arguments, say, the $u$-th.  Let
  \begin{align*}
    \bz':= w^{\Pi \bA_i}(\bz, \dots, \bz, & \, \vy^{(\ell)}, \bz \dots, \bz).\\
    &\; ^{\widehat{\lfloor}} \, \text{$u$-th argument}
  \end{align*}
  Then $\bz'$ belongs to $R$, agrees with $\bx$ on $\alpha$, and satisfies $\bz'(j) = s_j$, $\bz'(k) = s_k$, and   $\bz'(\ell) = s_\ell$.
  Continuing in this way until the set $\alpha'$ is exhausted produces an element in $R$ that agrees with $\bx$ everywhere.
  In other words,  $\bx$ itself belongs to $R$.
\end{proof}

In Section~\ref{sec:block-inst} we apply Theorem~\ref{thm:fry-pan2} in the special case where ``has a cube term''  in the first hypothesis
is replaced with ``is abelian.'' Let us be explicit.

\begin{cor}\label{cor:fry-pan}
  Let $\bA_1, \dots, \bA_{n-1}$ be finite idempotent algebras in a locally finite Taylor variety.  Suppose there exists $0< k < n-1$ such that
  \begin{itemize}
  \item $\bA_i$ is abelian for all $i < k$;
  \item $\bA_i$ has a sink $s_i \in A_i$ for all $i \geq k$.
  \end{itemize}
  If $\bR \sdp \prod \bA_i$, then $R_{\kk} \times \{s_k\} \times \{s_{k+1}\} \times \cdots \times \{s_{n-1}\}  \subseteq R$.
\end{cor}
\begin{proof}
  Since $\bA_{\kk} := \prod_{i<k} \bA_i$ is abelian and lives in a locally finite Taylor variety, there exists a term $m$ such that $m^{\bA_{\kk}}$ is a \malcev operation on $\bA_{\kk}$ (Theorem~\ref{thm:type2cp}). Since a \malcev term is a cube term, the result follows from Theorem~\ref{thm:fry-pan2}.
\end{proof}
\ifthenelse{\boolean{draft}}{\newpage}{}

\section{Problem Instance Reductions}\label{sec:var-reduc}
In this section we develop some useful notation for taking an instance of
a \csp and restricting or reducing it in various ways,
either by removing variables or by reducing modulo a sequence of congruence relations.
The utility of these tools will be demonstrated in Section~\ref{sec:csps-comm-idemp}.

Throughout this section, $\bA$ will denote a finite idempotent algebra.
The problem we will focus on is $\CSP(\mathfrak A)$, defined in
Section~\ref{sec:defin-constr-satisf}, and we will be particularly interested
in the special case in which $\mathfrak A = \sansS(\bA)$.

Recall, we denote an $n$-variable instance of $\CSP(\sansS(\bA))$ by $\sI = \< \nn, \sA, \sC\>$, where $\nn = \{0, 1, \dots, n-1\}$
represents the set of variables, $\sA = (\bA_0, \bA_1, \dots, \bA_{n-1})$ is a list of $n$ subalgebras of $\bA$, and
$\sC = ((\sigma_0, R_0), (\sigma_1, R_1), \dots, (\sigma_{J-1}, R_{J-1}))$ is a list of constraints with respective arities $\ar(\sigma_j) = m_j$.
Thus, $R_j\subseteq \prod_{i < m_j} A_{\sigma(i)}$. Much of the discussion below refers to an arbitrary constraint in $\sC$. In such cases it will simplify notation to drop subscripts and denote the constraint by $(\sigma, R)$.

\subsection{Variable reductions}\label{sec:variable-reductions}
\subsubsection{Partial scopes and partial constraints}
Consider the restriction of an $n$-variable \csp instance $\sI$ to the first $k$ of its variables, for some $k\leq n$.
This results in a new \emph{partial scope} given by the function $\restr{\sigma}{\sigma^{-1}(\kcapsigma)}$.
Call this the \emph{$k$-partial scope of} $\sigma$ and, to simplify the notation, let
\[\restr{\sigma}{\overleftarrow{\kk}}=\restr{\sigma}{\sigma^{-1}(\kcapsigma)}.\]
If $\kcapsigma = \emptyset$ then the $k$-partial scope of $\sigma$ is the empty function.
To obtain the \emph{$k$-partial constraint of} $(\sigma, R)$, we take the $k$-partial scope of $\sigma$ as the new scope; for the constraint relation we take the restriction of each tuple in $R$ to its first $p = |\kcapsigma|$ coordinates.  If we let
$\restr{R}{\overleftarrow{\kk}}=\restr{R}{\sigma^{-1}(\kcapsigma)}$, then the \emph{$k$-partial constraint} of $(\sigma, R)$ is given by $(\restr{\sigma}{\overleftarrow{\kk}}, \restr{R}{\overleftarrow{\kk}})$.

For example, let $\sigma$ be a scope consisting of variables $2$, $4$, $7$, so $(\sigma(0), \sigma(1),\sigma(2)) = (2,4,7)$.
To find the 5-partial constraint of $(\sigma, R)$, restrict $(\sigma, R)$ to the first $k= 5$ variables of the instance.
Then, $\sigma^{-1}(\kcapsigma) = \sigma^{-1}\{2,4\} = \{0,1\}$, and $\restr{\sigma}{\overleftarrow{\kk}} = (\sigma(0), \sigma(1)) = (2,4)$, so
$\restr{R}{\overleftarrow{\kk}} = \{(x,y) \mid (x,y,z) \in R\}$.

\subsubsection{Partial instances}\label{sec:partial-instances}
The \emph{$k$-partial instance} $\sI_{\kk}$ of $\sI$ is the restriction of $\sI$ to its first $k$ variables. Thus, $\sI_{\kk}$ is the instance with constraint set $\sC_{\kk}$ equal to the set of all $k$-partial constraints of $\sI$. If we let $\Sol(\sI, \kk)$ denote the set of solutions to $\sI_{\kk}$, then $f \in  \Sol(\sI, \kk)$ if and only if for each $j \in J$, we have $f \circ \restr{\sigma_j}{\overleftarrow{\kk}}\; \in\; \restr{R_j}{\overleftarrow{\kk}}$. We might be tempted to call $\Sol(\sI, \kk)$ the set of ``partial solutions,'' but that's a bit misleading since an $f \in \Sol(\sI, \kk)$ may or may not extend to a solution to the full instance $\sI$.

\subsection{Quotient reductions}
Let $\bA$ be a finite idempotent algebra and $\sI = \< \nn, \sA, \sC\>$ an $n$-variable instance of $\CSP(\sansS(\bA))$.
As above, we assume $\sA = (\bA_0, \bA_1, \dots, \bA_{n-1}) \in \sansS(\bA)^n$, $\sC = ((\sigma_0, R_0), (\sigma_1, R_1), \dots, (\sigma_{J-1}, R_{J-1}))$, and $\bR_j\sdp \prod_{\mm_j} A_{\sigma(i)}$, for $0\leq j < J$.

\subsubsection{Quotient instances}
Suppose $\theta_i \in \Con (\bA_i)$ ($0\leq i < n$) and define $\btheta = (\theta_0, \dots, \theta_{n-1})$. 
For $\bx = (x_0, x_1, \dots, x_{n-1}) \in \prod_{\nn} A_i$, let $\bx/\btheta$ denote the tuple whose $i$-th component is the $\theta_i$-class of $\bA_i$ containing $x_i$, so $\bx/\btheta = (x_0/\theta_0, x_1/\theta_1, \dots, x_{n-1}/\theta_{n-1}) \in\myprod_{i \in \nn} A_i/\theta_i$.

We need a more general notation than this since our tuples may come from $\prod_{\mm} A_{\sigma(i)}$ for some scope function
$\sigma\colon \mm \to \nn$.  Viewing  $\bx \in \prod_{\nn} A_i$ (resp., $\btheta$) as a function from $\nn$ to $\bigcup A_i$ (resp., $\bigcup \Con (\bA_i)$), write $\bx \circ \sigma \in \prod_{\mm} A_{\sigma(i)}$ and $\btheta \circ \sigma \in \prod_{\mm} \Con (\bA_{\sigma(i)})$ for the corresponding scope-restricted tuples, and then define
\[(\bx \circ \sigma)/(\btheta \circ \sigma) = (x_{\sigma(0)}/\theta_{\sigma(0)}, x_{\sigma(1)}/\theta_{\sigma(1)}, \dots, x_{\sigma(m-1)}/\theta_{\sigma(m-1)})
\in  \prod_{i\in \mm} A_{\sigma(i)}/\theta_{\sigma(i)}.\]
Given $\btheta$ and $(\sigma, R) \in \sC$, define the set $\sC/\btheta$ of \emph{quotient constraints} by $\{(\sigma, R/\btheta)\mid (\sigma, R) \in \sC\}$, where
\begin{equation}\label{eq:5}
  R/\btheta := \bigl\{\br/(\btheta\circ \sigma)
  \in \prod_{i \in \mm} A_{\sigma(i)}/\theta_{\sigma(i)} \mid \br \in R\bigr\}.
\end{equation}


Here are a few easily proved facts about quotient constraints.
\begin{fact}\label{fact:quotient-sdp}
  If $\btheta \in \prod_{\nn}\Con (\bA_i)$ and if $\bR \sdp \prod_{\mm} \bA_{\sigma(i)}$, then $R/\btheta$ defined in (\ref{eq:5})
  is a subuniverse of $\prod_{\mm}\bA_{\sigma(i)}/\theta_{\sigma(i)}$ and the corresponding subalgebra is subdirect.
\end{fact}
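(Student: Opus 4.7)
The plan is to recognize this as a straightforward exercise in unpacking the definition in~(\ref{eq:5}) and then applying two standard closure facts about homomorphic images. First, I would identify the natural map at work: for each $i\in \mm$, the canonical quotient $q_i \colon \bA_{\sigma(i)} \onto \bA_{\sigma(i)}/\theta_{\sigma(i)}$ is a surjective algebra homomorphism, and these assemble coordinate-wise into a surjective homomorphism
\[
q\colon \myprod_{\mm}\bA_{\sigma(i)} \longrightarrow
\myprod_{\mm}\bA_{\sigma(i)}/\theta_{\sigma(i)}, \qquad
\br \longmapsto \br/(\btheta\circ \sigma).
\]
Unwinding~(\ref{eq:5}) shows that $R/\btheta$ is literally $q(R)$, the image of the subuniverse $R$ under the homomorphism $q$. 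Since homomorphic images of subuniverses are subuniverses, this gives the first conclusion.

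For subdirectness, I would fix $j\in \mm$ and an arbitrary coset $a/\theta_{\sigma(j)}$. Since $\bR \sdp \myprod_{\mm}\bA_{\sigma(i)}$, the projection $\Proj_j R = A_{\sigma(j)}$ is surjective, so choose $\br\in R$ with $\br(j) = a$. Then $\br/(\btheta\circ \sigma) \in R/\btheta$ has $j$-th coordinate $a/\theta_{\sigma(j)}$, establishing that $\Proj_j(R/\btheta) = A_{\sigma(j)}/\theta_{\sigma(j)}$ for every $j$.

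There is essentially no obstacle here. The only thing that requires any attention is checking that the definition in~(\ref{eq:5}) truly matches the coordinate-wise image under $q$, which is immediate from the fact that $\btheta\circ \sigma$ lists exactly the congruences attached to the coordinates selected by $\sigma$. Because both conclusions reduce to well-known, one-line properties (images of subalgebras are subalgebras; surjectivity is preserved when each coordinate map is surjective), a brief two-sentence write-up suffices and no auxiliary lemmas beyond basic universal algebra are invoked.
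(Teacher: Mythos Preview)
Your proof is correct and complete; the paper itself offers no proof of this fact, merely listing it among ``a few easily proved facts about quotient instances,'' so your elementary argument via the coordinate-wise quotient homomorphism is exactly what the authors have in mind.
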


\begin{fact}\label{fact:quotient-instance}
  If $\btheta \in \prod_{\nn}\Con (\bA_i)$ and if  $\sI$ is an $n$-variable instance of $\CSP(\bA)$, then the constraint set $\sC/\btheta$ defines an $n$-variable instance of $\CSP(\sansH \sansS (\bA))$.
\end{fact}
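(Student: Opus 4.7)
The plan is to verify directly each clause of Definition~\ref{def:csp} for the proposed quotient instance, which is purely a matter of bookkeeping once Fact~\ref{fact:quotient-sdp} is in hand. The quotient instance I would build is the quadruple $\langle \nn,\, \sA/\btheta,\, \sS,\, \sR/\btheta\rangle$, where $\sS = (\sigma_0,\dots,\sigma_{J-1})$ is the original list of scope functions, the algebra list is
\[
\sA/\btheta := (\bA_0/\theta_0,\, \bA_1/\theta_1,\, \dots,\, \bA_{n-1}/\theta_{n-1}),
\]
and the constraint relations are $\sR/\btheta := (R_0/\btheta,\, R_1/\btheta,\, \dots,\, R_{J-1}/\btheta)$ with $R_j/\btheta$ defined as in~(\ref{eq:5}).

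First I would note that the variable set $\nn$ and the list of scope functions $\sS$ are transferred verbatim from $\sI$, so the arity requirements $\ar(\sigma_j) = m_j$ and the index bookkeeping $\bR_j \sdp \prod_{i<m_j} \bA_{\sigma_j(i)}$ remain meaningful. Next, for the algebra list, each $\bA_i$ is by hypothesis a subalgebra of $\bA$ and each $\theta_i$ is a congruence of $\bA_i$, so $\bA_i/\theta_i$ is a homomorphic image of a subalgebra of $\bA$, i.e., $\bA_i/\theta_i \in \sansH\sansS(\bA)$. Thus $\sA/\btheta \in (\sansH\sansS(\bA))^n$, matching the definitional requirement that the algebra list be drawn from the claimed class.

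The only clause requiring actual content is that each $R_j/\btheta$ is the universe of a subdirect product of the quotient algebras indexed by $\im \sigma_j$; this is precisely Fact~\ref{fact:quotient-sdp} applied to each constraint $(\sigma_j, R_j)$. That fact, in turn, is what I expect to be the closest thing to an obstacle, though it too is routine: one verifies that $R_j/\btheta$ is closed under the basic operations because $R_j$ is (the natural map $\prod \bA_{\sigma_j(i)} \to \prod \bA_{\sigma_j(i)}/\theta_{\sigma_j(i)}$ is a homomorphism, and homomorphic images of subuniverses are subuniverses), and that the $i$-th projection of $R_j/\btheta$ is surjective onto $\bA_{\sigma_j(i)}/\theta_{\sigma_j(i)}$ because the corresponding projection of $R_j$ is surjective onto $\bA_{\sigma_j(i)}$ and the quotient map $\bA_{\sigma_j(i)} \onto \bA_{\sigma_j(i)}/\theta_{\sigma_j(i)}$ is surjective.

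Assembling these observations, the quadruple $\langle \nn,\, \sA/\btheta,\, \sS,\, \sR/\btheta\rangle$ satisfies every clause of Definition~\ref{def:csp} with $\mathfrak{A} = \sansH\sansS(\bA)$, completing the verification.
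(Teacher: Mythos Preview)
Your proposal is correct and is exactly the routine verification the paper has in mind; indeed, the paper states Fact~\ref{fact:quotient-instance} without proof, listing it among ``a few easily proved facts about quotient instances,'' so there is no alternative argument to compare against.
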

\noindent The \emph{quotient instance} $\sI/\btheta$ is the instance of $\CSP(\sansH \sansS (\bA))$ described in Fact~\ref{fact:quotient-instance}. 

\begin{fact}%
  \label{fact:soln-quotient}
  If $\bx$ is a solution to $\sI$, then $\bx/\btheta$ is a solution to $\sI/\btheta$.
\end{fact}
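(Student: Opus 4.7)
The plan is to verify the conclusion directly from the definitions, as this fact is essentially bookkeeping: taking the quotient componentwise commutes with restricting to a scope, and constraint satisfaction is preserved by the quotient since $R/\btheta$ is defined to contain exactly the images of the tuples of $R$.

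First I would fix an arbitrary constraint $(\sigma, R) \in \mathcal{C}$, with $\sigma\colon \mm \to \nn$, and note that since $\bx$ solves $\sI$ we have $\bx \circ \sigma \in R$. The goal is then to show that $(\bx/\btheta) \circ \sigma \in R/\btheta$. Since this is true for every constraint, $\bx/\btheta$ will solve $\sI/\btheta$.

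The key computation is simply that quotienting commutes with composition by $\sigma$: working coordinatewise, for each $i \in \mm$ we have $((\bx/\btheta) \circ \sigma)(i) = (\bx/\btheta)(\sigma(i)) = x_{\sigma(i)}/\theta_{\sigma(i)}$, which is exactly the $i$-th component of $(\bx \circ \sigma)/(\btheta \circ \sigma)$. Hence
\[
(\bx/\btheta) \circ \sigma = (\bx \circ \sigma)/(\btheta \circ \sigma).
\]
Now setting $\br := \bx \circ \sigma$, which lies in $R$, the definition of $R/\btheta$ in~(\ref{eq:5}) gives $\br/(\btheta \circ \sigma) \in R/\btheta$. Combining, $(\bx/\btheta) \circ \sigma \in R/\btheta$, as required.

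There is no real obstacle here; the only thing to be careful about is to respect the two levels of indexing (the global $\nn$-indexing versus the $\mm$-indexing via $\sigma$) so that $\btheta \circ \sigma$ is the correct tuple of congruences relative to which the $\sigma$-scoped tuple is quotiented. Once that is clear, the verification is a one-line unpacking of~(\ref{eq:5}) and the definition of the quotient instance~(\ref{eq:100}).
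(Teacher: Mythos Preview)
Your proposal is correct and is exactly the direct definition-chase the paper has in mind; indeed, the paper lists this fact among ``a few easily proved facts about quotient instances'' and gives no separate proof. Your careful handling of the two levels of indexing and the identity $(\bx/\btheta)\circ\sigma = (\bx\circ\sigma)/(\btheta\circ\sigma)$ is precisely the content of the verification.
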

\noindent By Fact~\ref{fact:soln-quotient}, if there is a quotient instance with no solution, then $\sI$ has no solution. However, there may be solutions to every proper quotient instance and yet no solution to $\sI$.

\subsubsection{Block instances}%
\label{sec:block-inst}
Let $\mathfrak A$ be a collection of finite idempotent algebras.
Let $\sI = \< \nn, \sA, \sC\>$ be an $n$-variable instance of $\CSP(\mathfrak A)$, where $\sA = (\bA_0, \bA_1, \dots, \bA_{n-1})$ and
$\sC$ is a finite set of constraints. If $\btheta \in \prod_{\nn}\Con (\bA_i)$ and  $\bx \in \prod_{\nn}A_i$, then the tuple $\bx/\btheta$ of $\theta_i$-blocks is actually a list of algebras, by idempotence.

For each constraint $(\sigma, R)\in \sC$, consider restricting the relation $R$ to the $x_i/\theta_i$-classes in its scope $\sigma$.
In other words, replace the constraint $(\sigma, R)$ with the \emph{block constraint} $(\sigma, R \cap \Pi_{\sigma} \bx/\btheta)$, where we have defined $\Pi_{\sigma} \bx/\btheta$ to be the product of the blocks $x_{\sigma(i)}/\theta_{\sigma(i)}$ ($1\leq i \leq m$). Finally, let $\sI_{\bx/\btheta} = \< \nn, \bx/\btheta, \sC_{\bx/\btheta}\>$ denote the problem instance of $\CSP(\sansS(\sA))$ specified by the constraint set
$\sC_{\bx/\btheta} := \{(\sigma, R \cap \Pi_{\sigma}\bx/\btheta)\mid(\sigma, R) \in \sC\}$.
We call $\sI_{\bx/\btheta}$ the \emph{$\bx/\btheta$-block instance of} $\sI$.
Evidently, a solution to $\sI_{\bx/\btheta}$ is also a solution to $\sI$.

\subsection{The Quotient-Block strategy}
The notions ``quotient instance'' and ``block instance'' suggest the following strategy for solving \csps that works in certain special cases (one of which we will see in Example~\ref{ex:quot-inst}): First search for a solution $\bx/\btheta$ to the quotient instance $\sI/\btheta$ for some conveniently chosen $\btheta$. If no quotient solution exists, then the original instance $\sI$ has no solution. Otherwise, if $\bx/\btheta$ is a solution to $\sI/\btheta$, then try to solve the $\bx/\btheta$-block instance of $\sI$. If successful, then the instance $\sI$ has a solution. Otherwise, try again with a different solution to $\sI/\btheta$. If all solutions to the quotient instance are exhausted without finding a solution to a corresponding block instance, then $\sI$ has no solution.

This approach is effective as long as we can find a quotient instance $\sI/\theta$ for which there is a polynomial bound on the number of quotient solutions. Although this is not always possible (consider instances involving only simple algebras!), there are situations in which an appropriate choice of congruences makes it easy to check that every instance has quotient instance with a very small number of solutions.  We present an example.

\begin{exa}\label{ex:quot-inst}
Let $\bA = \<\{0,1,2,3\}, \cdot \>$, be an algebra with a single binary operation given by the table on the left in Figure~\ref{tab:final-4}.
\begin{figure}
\centering
 \begin{tabular}{c|cccc}
      $\cdot $ & 0 & 1 & 2 & 3\\
      \hline
      0 & 0 & 0 & 3& 2\\
      1 & 0 & 1 & 3& 2\\
      2 & 3 & 3 & 2 & 1\\
      3 & 2 & 2 & 1 & 3
 \end{tabular}
 \hskip1cm
  \begin{tabular}{c|ccc}
   $\cdot^{\Sq 3}$&0&1&2\\
  \hline
  0&0&2&1\\
  1&2&1&0\\
  2&1&0&2
  \end{tabular}
 \hskip1cm
  \begin{tabular}{c|cccc}
      $t$ & 0 & 1 & 2 & 3\\
      \hline
      0 & 0 & 0 & 0& 0\\
      1 & 0 & 1 & 1& 1\\
      2 & 2 & 2 & 2 & 2\\
      3 & 3 & 3 & 3 & 3
    \end{tabular}
  \caption{Operation tables of Example~\ref{ex:quot-inst}:
    for the algebra $\bA$ (left);
    for the quotient $\bA/\Theta \cong \Sq 3$ (middle);
    for $t(x,y) = x\cdot (y\cdot (x\cdot y))$ (right).}%
  \label{tab:final-4}
\end{figure}
The proper nonempty subuniverses are $\{0,1\}$, $\{1,2,3\}$, and the singletons. The algebra \bA\ has a single proper nontrivial congruence relation, $\Theta$, with partition $|01|2|3|$. The quotient algebra $\bA/\Theta$ is a 3-element Steiner quasigroup, which happens to be abelian. Denote the latter by \Sq 3 and note the middle table in Figure~\ref{tab:final-4} describing its binary operation. Observe that the subalgebra $\{1,2,3\}$ of $\bA$ is also isomorphic to $\Sq 3$.

As Theorem~\ref{thm:type2cp} predicts, the algebra $\bA/\Theta$ has a \malcev term $q(x,y,z)=y\cdot(x\cdot z)$.
Let $s(x,y) = q(x,y,y) = y\cdot(x\cdot y)$. Then $\bA/\Theta\vDash s(x,y)\approx x$. By iterating the term $s$ on its second variable, we arrive at a term $t(x,y)$ such that, in~\bA, $t(x,t(x,y)) = t(x,y)$. In fact, $t(x,y)=x\cdot(y\cdot(x\cdot y))$. To summarize
\begin{align*}
\bA &\vDash t(x,t(x,y)) \approx t(x,y) \\
\bA/\Theta &\vDash t(x,y) \approx x.
\end{align*}
The table for $t$ appears on the right in Figure~\ref{tab:final-4}.

For the next result, we denote the two-element semilattice by $\slt=\<\{0,1\},\meet\>$.

\begin{lem}\label{lem:sq3s2}
 Let $\mathfrak A = \{\slt,\Sq 3\}$. Then $\CSP(\mathfrak A)$ is tractable.
\end{lem}

\begin{proof}
Let $\sI=\<\nn,\sA, \sC\>$ be an instance of $\CSP(\mathfrak A)$. Recall, the set of solutions to $\sI$ is $\Sol(\sC, \nn) = \bigcap_{\sC} R^{\overleftarrow{\sigma}}$. We shall apply Corollary~\ref{cor:fry-pan} to establish tractability.
We have $\sA = (\bA_0, \bA_1,\dots,\bA_{n-1}) \in \mathfrak A^n$. Let $\alpha=\{\,i : \bA_i \cong \Sq 3\,\}$ and $\alpha'=\{\,i : \bA_i \cong \slt\,\}$, so $\nn = \alpha \cup \alpha'$.  We may assume without loss of generality that, for each constraint $(\sigma, R)$ appearing in $\sC$, the associated algebra $\bR$ is a subdirect product of $\prod_{\mm}\bA_{\sigma(i)}$, where $\bA_{\sigma(i)} \cong \Sq 3$ for all $\sigma(i) \in \alpha$, and  $\bA_{\sigma(i)} \cong \slt$ for all $\sigma(i) \in \alpha'$.

Let $0$ denote the bottom element of each semilattice. For each constraint $(\sigma, R)$, let $\restr{R}{\overleftarrow{\alpha}}$ be an abbreviation for  $\restr{R}{\sigma^{-1}(\alpha \cap \im \sigma)}$ (the projection of $R$ onto factors in its scope with indices in $\alpha$). Let $\bzero$ denote a tuple of $0$'s of length $|\alpha' \cap \im \sigma|$. Then Corollary~\ref{cor:fry-pan} implies that
for each constraint $(\sigma, R) \in \sC$ we have
\begin{equation}\label{eq:fry-pan}
 \restr{R}{\overleftarrow{\alpha}} \times \{\bzero\} \subseteq R.
\end{equation}
Obviously, if $\sI$ has a solution, then the $\alpha$-partial instance $\sI_\alpha$ also has a solution. Conversely, if $f \in \prod_{\alpha}A_i$ is a solution to $\sI_\alpha$, then (\ref{eq:fry-pan}) implies that $g \in \prod_{\nn} A_i$ defined by
\[g(i) =
  \begin{cases}
    f(i), & \text{if $i \in \alpha$,} \\
    0, & \text{if $i \in \alpha'$},
  \end{cases}
\]
is a solution to $\sI$. We conclude that $\sI$ has a solution if and only if the $\alpha$-partial instance has a solution.  Since abelian algebras yield tractable \csps, the proof is complete.
\end{proof}

Now we address the tractability of the example at hand.

\begin{prop}\label{prop:quotient-block}
  If $\langle \bA, \cdot\rangle$ is the four-element algebra with operation table given in Figure~\ref{tab:final-4}, then $ \CSP(\sansS(\bA))$ is tractable.
\end{prop}
\begin{proof}
  Let $\sI = \<\nn, \sA, \sC\>$ be an instance of $\CSP(\sansS(\bA))$ where$\sA= (\bA_0, \bA_1, \dots, \bA_{n-1}) \in \sansS(\bA)^n$.
  For each $i \in \nn$, let $\theta_i = \Theta$ if $A_i = A$, and let $\theta_i = 0_{A_i}$ otherwise. The universes $A_i/\theta_i$ of the quotient algebras satisfy
  \begin{equation*}
  A_i/\theta_i =
  \begin{cases}
    \{\{0,1\}, \{2\}, \{3\}\},  & \text{if $A_i = \{0,1,2,3\}$,}\\
    \{\{1\},\{2\},\{3\}\}, & \text{if $A_i = \{1,2,3\}$,}\\
    \{\{0\},\{1\}\}, & \text{if $A_i = \{0,1\}$.}
  \end{cases}
  \end{equation*}
  In the first two cases $\bA_i/\theta_i$ is isomorphic to $\Sq 3$ and in the third case $\bA_i/\theta_i$ is a 2-element meet semilattice.
  (Singleton factors, $\bA_i/\theta_i =\{a\}$, are safely ignored because all solutions must assign the the value $a$ to the associated variable.)

  Let $\btheta = (\theta_0, \theta_1, \dots, \theta_{n-1})$ and consider the quotient instance $\sI/\btheta$ which, by the observations in the preceding paragraph, is an instance of $\CSP(\mathfrak A)$, where  $\mathfrak A  = \{\Sq 3,\slt\}$. By Lemma~\ref{lem:sq3s2}, $\sI/\btheta$ can be solved in polynomial-time. If $\sI/\btheta$ has no solution, then neither does $\sI$.  Otherwise, let $f  \in \Sol(\sI/\btheta, \nn)$ be a solution to the quotient instance. We will use $f$ to construct a solution $g \in \Sol(\sI, \nn)$ to the original instance. (In so doing, we often map a singleton set to its inhabitant via the usual set union operation; e.g., $\bigcup\{x\} = x$.)

  Assume $\nn = \alpha\cup \beta \cup \gamma$ is a disjoint union where
  \[ A_i =
  \begin{cases}
    \{0,1,2,3\},  & \text{if $i \in \alpha$,}\\
    \{1,2,3\}, & \text{if $i \in \beta$,}\\
    \{0,1\}, & \text{if $i \in \gamma$.}
  \end{cases}
  \]
  \noindent In other terms, $\prod_{\nn} A_i \cong \{0,1,2,3\}^\alpha \times \{1,2,3\}^\beta \times \{0,1\}^\gamma$.  Assume the factors are rearranged if necessary so that this correspondence is on-the-nose equality. By definition,
  $\theta_i$ is $\Theta$ for $i \in \alpha$ and $0_{A_i}$ for $i\notin \alpha$.
  Thus, the $i$-th entry of $\sA/\btheta$ is
  \[
  A_i/\theta_i =
  \begin{cases}
    \{\{0,1\}, \{2\}, \{3\}\},  & \text{if $i \in \alpha$,}\\
    \{\{1\},\{2\},\{3\}\}, & \text{if $i \in \beta$,}\\
    \{\{0\},\{1\}\}, & \text{if $i \in \gamma$.}
  \end{cases}
  \]
  The $i$-th value $f(i)$ of the quotient solution must belong to $A_i/\theta_i$.
  Let $\alpha = \alpha_0 \cup \alpha_1$ be such that $f(i) = \{0,1\}$ when $i \in \alpha_0$ and $f(i) \in\{ \{2\}, \{3\}\}$ when $i \in \alpha_1$. Then,
  \begin{equation}\label{eq:10}
  f \in \{\{0,1\}\}^{\alpha_0} \times \{\{2\}, \{3\}\}^{\alpha_1} \times
  \{\{1\},\{2\},\{3\}\}^{\beta} \times
  \{\{0\},\{1\}\}^{\gamma}\!.
  \end{equation}
  Note that $f(i)$ is a singleton for all $i \notin \alpha_0$, and recall that $\bigcup \{a\}= a$. Define $g \in \prod_{\nn}A_i$ as follows: $g(i) = 0$ if $i \in \alpha_0\cup \gamma$, and $g(i) = \bigcup f(i)$, if $i\in \alpha_1 \cup \beta$.  We will prove $g\in \Sol(\sI, \nn)$.

Fix an arbitrary constraint $(\sigma, R)\in \sC$, let $S := \im \sigma$ be the set of indices in the scope of $R$, and let $S^c := \nn - S$ denote the complement of $S$ with respect to $\nn$. Since we assumed $f$ solves $\sI/\btheta$ and satisfies (\ref{eq:10})  there must exist $\br\in \prod_{\nn}A_i$ satisfying not only $(\sigma, R)$ but also the following: $\br(i) \in \{0,1\}$ for $i \in \alpha_0$ and $\br(i) = \bigcup f(i)$ for $i \notin \alpha_0$. 

Now consider other elements satisfying the (arbitrary) constraint $(\sigma, R)$.  By subdirectness of $R$, for each $\ell \in \alpha_0 \cup \gamma$ there exists $\bx^\ell \in \prod_{\nn}A_i$ satisfying $(\sigma, R)$ and $\bx^\ell(\ell) = 0$. Since $\br$ and $\bx^\ell$ satisfy $(\sigma, R)$---that is $\br \circ \sigma\in R$ and $\bx^\ell \circ \sigma\in R$---we also have $t(\br,  \bx^\ell) \circ \sigma\in R$.
  Consider the entries of $t(\br, \bx^\ell)$, using the partition $\nn = \alpha_0 \cup \alpha_1 \cup \beta \cup \gamma$ defined above.
  Since $\bx^\ell(\ell) = 0$, we have $t(\br, \bx^\ell)(\ell)=0$.  For $i \neq \ell$, observe the following:
  \begin{itemize}
  \item  If $i\in \alpha_0 - \{\ell\}$, then $\br(i) \in \{0,1\}$, 
    so $t(\br, \bx^\ell)(i)\in\{0,1\}$.
  \item  If $i\in \alpha_1$, then $\br(i) = \bigcup f(i)\in \{2,3\}$, 
    so $t(\br, \bx^\ell)(i)=\bigcup f(i)$.
  \item  If $i\in \beta$, then $\br(i)= \bigcup f(i)\in \{1,2,3\}$, 
    so $t(\br, \bx^\ell)(i) = \bigcup f(i)$.
  \item  If $i\in \gamma-\{\ell\}$, then $\br(i)= \bigcup f(i)\in \{0,1\}$,  
    so $t(\br, \bx^\ell)(i) \in \{0,1\}$.
  \end{itemize}
  To summarize, for each $\ell \in \alpha_0 \cup \gamma$, there exists $t(\br, \bx^\ell) \in \prod_{\nn}A_i$ satisfying $(\sigma, R)$ and having values $t(\br, \bx^\ell)(i) \in \{0,1\}$ for $i \in \alpha_0\cup \gamma$ and
  \[ t(\br, \bx^\ell)(i) =
  \begin{cases}
    0, & i = \ell,\\
    \bigcup f(i), & i\in \alpha_1 \cup \beta.
  \end{cases}
  \]
  Finally, take the product of all members of $\{t(\br, \bx^\ell) \mid \ell \in \alpha_0 \cup \gamma\}$ with respect to the binary operation of the original algebra $\<A, \cdot\>$ to obtain $t(\br, \bx^{\ell_1}) \cdot t(\br, \bx^{\ell_2}) \cdot \cdots \cdot t(\br, \bx^{\ell_{L}}) = g$, where $L = |\alpha_0\cup \gamma|$. This proves that $g \in \Sol((\sigma, R), \nn)$. Since the constraint $(\sigma, R)$ was arbitrary, we have $g \in \Sol(\sI, \nn)$, as desired.
\end{proof}
\end{exa}

Example~\ref{ex:quot-inst} and Proposition~\ref{prop:quotient-block} do not fully illustrate the scope of the quotient-block strategy since, as we saw in the proof, every solution $f$ to the quotient instance $\sI/\btheta$ induced a solution $g$ on the block instance $f/\theta$.  In general, we may not be so lucky. Nonetheless, we demonstrated the quotient-block strategy on this example because the analysis carried out in Section~\ref{sec:csps-comm-idemp} requires that the associated \csp is tractable (as asserted in Proposition~\ref{prop:quotient-block}).

\subsection{The least block algorithm}
This section describes an algorithm that was apparently discovered by ``some subset of'' Petar Markovi\'c and Ralph McKenzie.  We first learned about this by reading Markovi\'c's slides~\cite{Markovic:2011} from a 2011 talk in Krakow which gives a one-slide description of the algorithm. Since no proof (or even formal statement of the assumptions) from the originators of this result seems to be forthcoming, we include our own version and proof in this section.

Let \bA\ be a finite idempotent algebra and let $\theta$ be a congruence on \bA. Recall that by idempotence, every $\theta$-class of $A$ will be a subalgebra of \bA. Suppose that, as an algebra, each $\theta$-class comes from a variety, $\var{E}$, possessing a $(k+1)$-ary edge term, $e$. Define $s(x,y)= e(y,y,x,x,\dots,x)$. Then according to Definition~\ref{defn:edge-term},  $\var{E} \vDash s(x,y) \approx x$.
By iterating $s$ on its second variable, we obtain a binary term $t$ such that
\begin{align}
\bA &\vDash t(x,t(x,y)) \approx t(x,y)\label{eqn:ILT} \\
\var{E} &\vDash t(x,y) \approx x.\label{eqn:Lz}
\end{align}
Assume, finally, that the induced algebra $\<A/\theta, t^{\bA/\theta}\>$ is a linearly ordered
meet semilattice. When this occurs, we call \bA\ an \emph{Edge-by-Chain} (\ec) algebra, and we say, ``$\alg A$ is \ec.'' Our objective in this subsection is to prove the following theorem.

\begin{thm}\label{thm:least-block}
Every finite, idempotent \ec algebra is tractable.
\end{thm}

\begin{proof}
Recall, an algebra \bA\ is \emph{tractable} if the problem $\CSP(\sansS(\bA))$ is solvable in polynomial time. (See Subsection~\ref{sec:inst-size-tract}.)

Let  \bA\ be finite, idempotent and \ec. We continue to use the congruence $\theta$ and term $t$ developed above. Let $\sI$ be an instance of the problem $\CSP(\sansS(\bA))$. Thus $\sI$ is a triple $\<\sV,\sA,\sC\>$ in which
$\sA=(\bA_0, \bA_1,\dots,\bA_{n-1})$ is a list of subalgebras of \bA, see Definition~\ref{def:csp}.  The congruence $\theta$ induces a congruence (which we will continue to call $\theta$) on each $\bA_i$, and the term $t$ will induce a linearly ordered semilattice structure on $A_i/\theta$. In other words, each $\bA_i$ is also an \ec algebra.

Therefore, for each $i<n$, the structure of $\<A_i/\theta,t\>$ will be of the form $C_{i,0} < C_{i,
1} < \cdots < C_{i,q_i}$ in which each $C_{i,j}$ is a $\theta$-class of $\bA_i$. We have the
following useful relationship.
\begin{equation}\label{eqn:almost-meet}
i<n,\, j\leq k,\, u\in C_{i,j},\, v\in C_{i,k} \implies   t(u,v) = u.
\end{equation}
To see this, let $w=t(u,v)$. Then $w\in C_{i,j}$, since the quotient structure is a semilattice under $t$. Therefore, by~\eqref{eqn:ILT}, $t(u,w)=t(u,t(u,v)) = t(u,v) = w$. On the other hand, since $u,w \in C_{i,j} \in \var{E}$ we have $t(u,w)=u$ by~\eqref{eqn:Lz}. Thus $u=w$.

Recall that for an index $k\leq n$, the $k$th-partial instance of $\sI$ is the instance, $\sI_{\kk}$, obtained by restricting $\sI$ to the first $k$ variables (Section~\ref{sec:partial-instances}). The set of solutions to $\sI_{\kk}$ is denoted $\Sol(\sI,\kk)$.

For every $k < n$ we recursively define the index $j_k$ by
\begin{equation*}
j_k = \text{least  $j\leq q_k$ such that $\Sol(\sI,\kplus)\cap \prod_{i=1}^k C_{i,j_i} \neq \emptyset$.}
\end{equation*}
If there is a $k$ such that no such $j_k$ exists, then $j_k, j_{k+1}, \dots$ are undefined.

To complete the proof of Theorem~\ref{thm:least-block}, we need to prove two final claims.
\begin{clm}\label{clm:1}
  $\sI$ has a solution if and only if $j_{n-1}$ is defined.
\end{clm}
\begin{clm}\label{clm:2} $j_i$ can be computed in polynomial time for all $i<n$.
\end{clm}


One direction of Claim~\ref{clm:1} is immediate: if $j_{n-1}$ is defined then $\Sol(\sI)= \Sol(\sI,\nn)$ is nonempty. We address the converse. Assume that $\sI$ has solutions. We argue, by induction on~$k$, that every $j_k$ is defined. For the base step, choose any $g\in \Sol(\sI)$.  Since $g(0) \in A_0$, there is some $\ell$ such that $g(0) \in C_{0,\ell}$. Then $j_0$ is
defined and is at most~$\ell$.

Now assume that $j_0, j_1,\dots,j_{k-1}$ are defined, but $j_k$ is undefined. Let $f\in \Sol(\sI,
\kk)\cap \prod_{i=0}^{k-1} C_{i,j_i}$. Since $j_k$ is undefined, $f$ has no extension to a
member of $\Sol(\sI,\kplus)$. We shall derive a contradiction. Since $\sI$ has a solution,
the restricted instance $\sI_{k+1}$ certainly has solutions. Choose any $g\in \Sol(\sI,k+1)$ with $g(k)
\in C_{k,\ell}$ for the smallest possible~$\ell$. Let $\sC= \bigl((\sigma_0,R_0),
(\sigma_1,R_1),\dots,(\sigma_{J-1},R_{J-1})\bigr)$ be the list of constraints of $\sI$.
By assumption, for each $m<J$, $f$ is a solution to $\restr{(\sigma_m,R_m)}{\kk}$ (the
restriction of $R_m$ to its first $k$ variables). Therefore, there is an
extension, $f_m$, of $f$, to $k+1$ variables such that $f_m$ is a solution to
$\restr{(\sigma_m,R_m)}{\kplus}$.

Let $m<J$ and set $h_m = t(g,f_m)$. Since both $g$ and $f_m$ are
solutions to $\restr{(\sigma_m,R_m)}{\kplus}$, and $t$ is a term, $h_m$ is also
a solution to $\restr{(\sigma_m,R_m)}{\kplus}$.
For every $i<k$ we have $f_m(i)= f(i) = f_0(i)$. Thus
$h_m(i) = t\bigl(g(i), f_m(i)\bigr) = t\bigl(g(i),f_0(i)\bigr) = h_0(i)$ for $i<k$, and $h_m(k) = t\bigl(g(k),f_m(k)\bigr) = g(k)$.
The latter relation holds by~\eqref{eqn:almost-meet}
and the choice of $g$. It follows that every $h_m$ coincides with $h_0$. Since $h_m$ satisfies
$R_m$, we have $h_0 \in \Sol(\sI,\kplus)$. This contradicts our assumption that $f$ has no
extension to a member of $\Sol(\sI,\kplus)$.

Finally, we must verify Claim~\ref{clm:2}. Assume we have computed $j_i$, for $i<k$. To
compute $j_k$ we proceed as follows.
\begin{algorithmic}
 \For{$j=0$ to $q_k$}
 \State $\sB \leftarrow (\bC_{0,j_0}, \bC_{1,j_1}, \dots, \bC_{k-1,j_{k-1}},\bC_{k,j})$
 \If{$\sI_{\sB}$ has a solution in $\CSP(\sB)$}
 \Return $j$
 \EndIf
 \EndFor
 \Return \textsc{Failure}
 \end{algorithmic}
 In this algorithm, $\sI_{\sB}$ is the instance of $\CSP(\sB)$ obtained by first restricting $\sI$ to
 its first $k+1$ variables, and then restricting each constraint relation to $\prod\sB$. Since the
members of $\sB$ all lie in the edge-term variety $\var{E}$, it follows from
Theorems~\ref{thm:edge-tractable} and~\ref{thm:HSP-tract} that $\CSP(\sB)$ runs in polynomial-time.
 \end{proof}

\ifthenelse{\boolean{draft}}{\newpage}{}

\section{CSPs of Commutative Idempotent Binars}%
\label{sec:csps-comm-idemp}

Taylor terms were defined in Section~\ref{ssec:term-ops}. It is not hard to see that a binary
term is Taylor if and only if it is idempotent and commutative. This suggests, in light of the
algebraic \csp-dichotomy conjecture, that we study commutative, idempotent binars (\cib's for
short), that is, algebras with a single basic binary operation that is commutative and
idempotent. If the conjecture is true, then every finite \cib should be tractable.

The associative {\cib}s are precisely the semilattices. Finite semilattices have long been known to be tractable,~\cite{MR1481313}. The variety of semilattices is \sdm and is equationally complete. Every nontrivial semilattice
must contain a subalgebra isomorphic to the unique two-element semilattice, $\slt=\<\{0,1\},\meet\>$, and conversely, \slt\ generates the variety of all semilattices.

As we shall see, the algebra \slt plays a central role in the structure of {\cib}s. In particular, the omission of \slt implies tractability.

\begin{thm}%
  \label{thm:omit5-cib}
  If $\bA$ is a finite \cib then the following are equivalent:
  \begin{enumerate}[label={(\arabic*)}]
  \item $\bA$ has an edge term.
  \item $\var{V}(\bA)$ is congruence modular.
  \item $\slt \notin \sansH \sansS (\bA)$
    \end{enumerate}
\end{thm}

\noindent
The proof that (1) implies (2) appears in~\cite{MR2563736} and holds for general
finite idempotent algebras. The contrapositive of (2) implies (3) is easy:
if
$\slt \in \sansH \sansS (\bA)$, then
$\slt^2 \in \var{V}(\bA)$, and the congruence
lattice of $\slt^2$ is not modular.
So it remains to show that (3) implies (1). For this we use the idea of a
``cube-term blocker'' (\cite{MR2926316}).
  A \emph{cube term blocker} (\ctb) for an algebra $\bA$ is a pair $(D, S)$ of subuniverses
  with the following properties: $\emptyset < D < S\leq A$ and for every term
  $t(x_0, x_1, \dots, x_{n-1})$ of $\bA$ there is an index $i \in \nn$ such that,
  for all $\bs = (s_0, s_1, \dots, s_{n-1}) \in S^n$, if $s_i \in D$
  then $t(\bs)\in D$.

\begin{thmC}[\protect{\cite[Thm 2.1]{MR2926316}}]%
\label{thm:ctb}
Let $\bA$ be a finite idempotent algebra. Then $\bA$ has an edge term iff
it possesses no cube-term blockers.
\end{thmC}

(The notions of cube term and edge term both originate in~\cite{MR2563736}. In that paper it is proved that a finite algebra has a cube term if and only if it has an edge term.)

\begin{lem}%
\label{lem:cib-ctb}
  A finite \cib $\bA$ has a \ctb if and only if $\slt \in \sansH \sansS (\bA)$.
\end{lem}
\begin{proof}
  Assume $(D, S)$ is a \ctb for $\bA$. Then there exists $s\in S-D$.  Consider
  $D^+ :=D \cup\{s\}$.  Evidently  $D^+$ is a subuniverse of $\bA$, and
  if $\bD^+$ denotes the corresponding subalgebra, then
  $\theta_D := D^2 \cup \{(s,s)\}$ is a congruence of $\bD^+$. It's easy to see that
  $\bD^+/\theta_D \cong \slt$, so $\slt \in \sansH \sansS (\bA)$.

  Conversely, if $\slt \in \sansH \sansS (\bA)$, then there exists $\bB \leq \bA$ and
  a surjective homomorphism $h\colon \bB \to \slt$. Let $B_0=h^{-1}(0)$. Then $\emptyset \neq B_0 < B\leq A$, and $(B_0, B)$ is a \ctb for $\bA$.
  \end{proof}

Lemma~\ref{lem:cib-ctb}, along with Theorem~\ref{thm:ctb}, completes the proof of
Theorem~\ref{thm:omit5-cib} by showing (1) is false if and only if (3) is false.
In fact, something stronger is true. Kearnes has shown that if $\var{V}$ is any variety of {\cib}s that omits \slt, then $\var{V}$ is congruence permutable~\cite{Kearnes:comm}.

Before proving the next corollary, we need two more definitions.  A variety is called \defn{regular} if it is defined by regular identities. In contrast, a variety is called \defn{strongly irregular} if it satisfies an identity $t(x, y) \approx x$ for some binary term $t$ in which both $x$ and $y$ appear~\cite{MR3350338}. Every strongly irregular variety has an equational base consisting of a set of regular identities and a single strongly irregular identity. Note that most ``interesting'' varieties are strongly irregular---most Maltsev conditions involve a strongly irregular identity. For example, the Maltsev condition for congruence-permutability has a ternary term $q(x, y, z)$ satisfying $q(x, y, y) \approx x$, which is a strongly irregular identity. By contrast, the variety of semilattices is regular.

The proof of the next result will refer to \textit{strongly irregular varieties} and \textit{P\l onka sums}. Our definitions of these concepts are the same as those found in~\cite{MR3350338}, but we include them here for completeness. A variety is called \defn{strongly irregular} if it satisfies an identity $t(x, y) \approx x$ for some binary term $t$ in which both $x$ and $y$ appear. A similarity type of algebras is said to be \defn{plural} if it contains no nullary operation symbols, and at least one non-unary operation symbol.

\newcommand{\cupdot}{\mathbin{\mathaccent\cdot{\bigcup}}}

Let $\langle S, \vee \rangle$ be a semilattice, $\{\alg A_s \mid s \in S\}$ a collection of algebras of plural type $\rho : \mathcal F \to \mathbb N$, and $\{\varphi_{s,t} : \alg A_s \to \alg A_t \mid s \leq_\vee t\}$ a collection of homomorphisms satisfying $\varphi_{s,s} = 1_{A_s}$ and $\varphi_{t,u} \circ \varphi_{s,t} = \varphi_{s,u}$. The \defn{P\l onka sum} of the system $\langle \alg A_s : s \in S; \varphi_{s,t} : s \leq_\vee t\rangle$ is the algebra $\alg A$ of type $\rho$ with universe $A = \cupdot \{\alg A_s | s \in S\}$ and for $f \in \mathcal F$ a basic $n$-ary operation,
\[
f^{\alg A} (x_1, x_2, \dots, x_n) = f^{\alg A_s} (\varphi_{s_1,s} (x_1 ), \varphi_{s_2,s} (x_2), \dots, \varphi_{s_n,s} (x_n))
\]
in which $s = s_1 \vee s_2 \vee \cdots \vee s_n$ and $x_i \in A_{s_i}$ for $1 \le i \le n$.

\begin{cor}\label{cor:edge-prod}
Let $\bA_0$, $\bA_1$,\dots,$\bA_{n-1}$ be finite {\cib}s satisfying the equivalent conditions in Theorem~\ref{thm:omit5-cib}. Then both $\bA_0\times \cdots\times \bA_{n-1}$ and $\bA_0\times\cdots\times \bA_{n-1} \times \slt$ are tractable.
\end{cor}

\begin{proof}
By the theorem, $\var{V}(\bA_0)$ and $\var{V}(\bA_1)$ each have an edge term. By Theorem~\ref{thm:robust}, $\sansH\bigl(\var{V}(\bA_0)\circ \var{V}(\bA_1)\bigr)$ has an edge term as well. Since $\bA_0\times \bA_1$ lies in this variety, it possesses that same edge term. Iterating this process, the algebra $\bB=\bA_0 \times \cdots \times \bA_{n-1}$ has an edge term, hence is tractable by Theorem~\ref{thm:edge-tractable}.

Let $t(x_1,\dots,x_{k+1})$ be an edge term for $\bB$. Then according to the first identity in Definition~\ref{defn:edge-term}, $\var{V}(\bB)$ is a strongly irregular variety. Consequently, the algebra $\bB\times \slt$ is a P\l onka sum of two copies of \bB. As a P\l onka sum of tractable algebras, Theorem~4.1 of~\cite{MR3350338} implies that $\bB\times \slt$ is tractable as well.
\end{proof}

It follows from the corollary and Theorem~\ref{thm:HSP-tract} that every finite member of $\var{V}(\bA_0\times\bA_1\times\cdots\times \bA_{n-1}\times \slt)$ is at least locally tractable.

As a counterpoint to Theorem~\ref{thm:omit5-cib}, we mention the following. The proof requires some basic tame congruence theory, for which we refer the reader to~\cite[Thm~9.10]{HM:1988}.

\begin{thm}\label{thm:no-abelians}
Let \bA\ be a finite \cib\ and suppose that $\sansH\sansS(\bA)$ contains no nontrivial abelian algebras. Then $\var{V}(\bA)$ is \sdm. Consequently, \bA\ is a tractable algebra.
\end{thm}

\begin{proof}
Since every \cib\ has a Taylor term, and since $\sansH\sansS(\bA)$ has no abelian algebras, $\sansS(\bA)$ omits types $\{1,2\}$. Then by~\cite[Cor~2.2]{Freese:2009}, $\var{V}(\bA)$ omits types $\{1,2\}$. But then, by~\cite[Thm~9.10]{HM:1988}, $\var{V}(\bA)$ is \sdm.
\end{proof}

In a congruence-permutable variety, an algebra is abelian if and only if it is polynomially equivalent to a faithful, unital module over a ring. See~\cite[Thm~7.35]{MR2839398} for a full discussion. For example, consider the ring $\mathbb Z_3$ as a module over itself. Define the binary polynomial $x\cdot y = 2x+2y$ on this module. The table for this operation is given at the left of Figure~\ref{fig:cib3}. Conversely, we can retrieve the module operations from `$\cdot$' by $x+y = 0\cdot(x\cdot y)$ and $2x=0\cdot x$. Consequently, the commutative idempotent binar $\Sq 3$ is abelian.

In light of Theorem~\ref{thm:type2cp}, every finite, abelian \cib\ will be of this form. We make the following observation.

\begin{prop}\label{prop:cib_ab_odd}
A finite, abelian \cib\ has odd order.
\end{prop}

\begin{proof}
Let \bA\ be a finite, abelian \cib. By Theorem~\ref{thm:type2cp} (since every \cib is Taylor), \bA\ is  polynomially equivalent to a faithful, unital module $M$ over a ring $R$.  The basic operation of $\bA$ must be a polynomial of $M$. Thus there are $r,s\in R$ and $b\in M$ such that $x\cdot y = rx + sy +b$. Let 0 denote the zero element of $M$. Then $0=0\cdot 0 = r0 +s0 +b =b$. The condition $x\cdot y = y \cdot x$ implies  $rx+sy = ry+sx$, so (by taking $y=0$ and since $M$ is faithful), $r=s$. Finally by idempotence, $2r=1$.

Now, if $A$, hence $M$, has even cardinality, there is an element $a\in A$, $a\neq 0$, of additive order~2. Thus $a= a\cdot a = 2ra = 0$
which is a contradiction.
\end{proof}

To demonstrate the utility of the techniques developed in this paper, we now show that every \cib of cardinality at most~4 is tractable. It is already known that the algebraic dichotomy conjecture holds for every idempotent algebra of cardinality at most~3~\cite{MR2212000, MR521057}. However, our arguments are relatively short and help us better understand {\cib}s and their distinctive properties.

For the remainder of this section, let \bA\ be a \cib with universe $\nn$. We consider the various possibilities for $n$ and \bA\ (up to isomorphism), and in each case show that the associated \csp is tractable.  In order to make it clear that our inventory is complete, a few of the arguments and computations are postponed until the end.

\casespec{$\bm{n=2}$} There is a unique 2-element \cib, namely \slt. Idempotence determines the diagonal entries in the table, and by commutativity, the remaining two entries must be equal. The choices $0$ and $1$ for the off-diagonal yield a meet-semilattice and a join-semilattice respectively. As we remarked above, every finite semilattice is tractable, and, in fact, $\Sl$, the variety of semilattices, is \sdm.

\casespec{$\bm{n=3}$, \bA\ not simple} Let $\theta \in \Con \bA$ be nontrivial (i.e., $0_A < \theta < 1_A$). By idempotence, every congruence class is a subalgebra. Evidently, $\bA/\theta$ has cardinality~2, one $\theta$ class has 2 elements, and the other has~1. From the uniqueness of the 2-element algebra, $\bA/\theta$ and each of the $\theta$-classes are semilattices. Consequently, $\bA \in \Sl \circ \Sl$. By Theorem~\ref{thm:robust}, $\Sl\circ \Sl$ is \sdm, so by Theorem~\ref{thm:sdm-tractable}, \bA\ is tractable.

\begin{figure}
\centering
\begin{tabular}{ccc}
  \begin{tabular}{c|ccc}
   $\cdot$&0&1&2\\
  \hline
  0&0&2&1\\
  1&2&1&0\\
  2&1&0&2
  \end{tabular} \hspace*{2em} &
  \begin{tabular}{c|ccc}
  $\cdot$&0&1&2\\
  \hline
  0&0&0&1\\
  1&0&1&2\\
  2&1&2&2
  \end{tabular} \hspace*{2em}&
  \begin{tabular}{c|ccc}
  $\cdot$&0&1&2\\
  \hline
  0&0&0&2\\
  1&0&1&1\\
  2&2&1&2
  \end{tabular} \\
  \vrule height 12pt width 0pt $\Sq 3$ & $\bT_1$ & $\bT_2$
 \end{tabular}
 \caption{The simple {\cib}s of cardinality 3}\label{fig:cib3}
 \end{figure}

\casespec{$\bm{n=3}$, \bA\ simple} If \bA\ has no proper nontrivial subalgebras, then in \bA, $x\neq y \implies x\cdot y \notin \{x,y\}$. It follows that \bA\ must be the 3-element Steiner quasigroup, $\Sq 3$, of Figure~\ref{fig:cib3}. By Corollary~\ref{cor:edge-prod}, \bA\ is tractable.

Finally, if \bA\ has a proper nontrivial subalgebra, that subalgebra must be isomorphic to \slt. Thus no nontrivial subalgebra of \bA\ is abelian. By Theorem~\ref{thm:no-abelians}, \bA\ generates an \sdm variety, so is tractable by Theorem~\ref{thm:sdm-tractable}. For future reference, we remark that there are two non-isomorphic algebras meeting the description in this paragraph, $\bT_1$ and $\bT_2$.  Their tables are given in Figure~\ref{fig:cib3}.

\casespec{$\bm{n=4}$, \bA\ not simple} Let $\theta$ be a maximal congruence of \bA. Then $\bA/\theta$ is simple, so $\bA/\theta$ is isomorphic to one of $\bT_1$, $\bT_2$, $\Sq 3$, or $\slt$. If $\bA/\theta \cong \bT_i$, then the $\theta$-classes must have sizes $1$, $1$, and $2$. Consequently, each $\theta$-class is a semilattice, so $\bA\in \Sl \circ \var{V}(\bT_i)$ which is \sdm by the computations above and Theorem~\ref{thm:robust}. Thus $\bA$ is tractable.

The next case to consider is $\bA/\theta \cong \Sq 3$. Without loss of generality, assume that~$\theta$ partitions the universe as $|01|2|3|$, and further that $0\cdot 1=0$. From these data we deduce that the operation table for \bA\ must be
\begin{equation*}
\begin{tabular}{c|cccc}
      $\cdot $ & 0 & 1 & 2 & 3\\
      \hline
      0 & 0 & 0 & 3& 2\\
      1 & 0 & 1 & 3& 2\\
      2 & 3 & 3 & 2 & $a$\\
      3 & 2 & 2 & $a$ & 3
 \end{tabular}\qquad \text{with $a\in \{0,1\}$.}
\end{equation*}
If $a=0$, then the only way to obtain 1 as a product is $1\cdot1$. In that case there is a homomorphism onto \slt with kernel $\psi=|023|1|$. We have $\theta \cap \psi = 0_A$, so $\bA$ is a subdirect product of $\Sq 3 \times \slt$. Therefore, by Corollary~\ref{cor:edge-prod}, \bA\ is tractable.  More problematic is the case $a=1$, but we already established tractability of that algebra in Example~\ref{ex:quot-inst}.

Finally, suppose that $\bA/\theta \cong \slt$. The possible sizes of the $\theta$-classes are $3$, $1$ or $2$, $2$.  If neither class is isomorphic to \Sq 3 then the classes both lie in an \sdm variety, so by Theorem~\ref{thm:robust}, \bA\ too lies in an \sdm variety and is thus tractable. If one of the classes is isomorphic to \Sq 3 (there are 7 such algebras), then \bA\ is an \ec algebra (the term $t(x,y)=y\cdot(x\cdot y)$ is Maltsev, hence is an edge term) and we can apply Theorem~\ref{thm:least-block} to establish that \bA\ is tractable.

\casespec{$\bm{n=4}$, \bA\ simple} By Theorem~\ref{thm:no-abelians}, if $\sansH\sansS(\bA)$ contains no nontrivial abelian algebra, then \bA\ is tractable. Thus, we may as well assume that $\sansH\sansS(\bA)$ contains an abelian algebra. By Proposition~\ref{prop:cib_ab_odd}, \bA\ itself is nonabelian. Consequently (since \bA\ is simple) \bA\ must have a subalgebra isomorphic to \Sq 3. Without loss of generality, let us assume that $\{1,2,3\}$ forms this subalgebra. Similarly, by Corollary~\ref{cor:edge-prod}, we can assume that $\sansH\sansS(\bA)$ contains a copy of \slt. Examining the 2- and 3-element \cibs, it is easy to see that if $\slt\in \sansH\sansS(\bA)$, then already $\slt\in\sansS(\bA)$. By the symmetry of \Sq 3, and the fact that it contains no semilattice, we can assume that $\{0,1\}$ forms the semilattice. Thus, the table for \bA\ must have one of the following two forms.
\begin{equation*}
\begin{tabular}{c|cccc}
$\cdot$&0&1&2&3\\
\hline
0&0&0&$u_2$&$u_3$\\
1&0&1&3&2\\
2&$u_2$&3&2&1\\
3&$u_3$&2&1&3
\end{tabular}
\qquad
\begin{tabular}{c|cccc}
$\cdot$&0&1&2&3\\
\hline
0&0&1&$v_2$&$v_3$\\
1&1&1&3&2\\
2&$v_2$&3&2&1\\
3&$v_3$&2&1&3
\end{tabular}
\end{equation*}
By checking the 32 possible tables, either using the Universal Algebra
Calculator~\cite{UAcalc} (with algebra file~\cite[\href{https://github.com/UACalc/AlgebraFiles/tree/master/Bergman}{\small
  \texttt{CIB4-SL-nonSDmeet.ua}}]{william_demeo_2016_53936}),
or directly using Freese's algorithm~\cite{Freese2008}, one determines that
there are 7 pairwise nonisomorphic algebras of one of these two forms.
Interestingly, every simple algebra of the second form is isomorphic to one of
the first form. Thus we will use the first form for all the candidates. These 7
algebras are indicated in Figure~\ref{fig:simple7}.

\begin{figure}
\centering
\begin{tabular}{l|cc|l}
&$u_2$&$u_3$&Proper Nontrivial Subalgebras\\
\hline
$\bA_0$&0&1&$\{0,1\},\, \{0,2\},\, \{1,2,3\}$\\
$\bA_1$&1&1&$\{0,1\},\, \{1,2,3\}$\\
$\bA_2$&1&2&$\{0,1\},\, \{1,2,3\}$\\
$\bA_3$&0&3&$\{0,1\},\, \{0,2\},\, \{0,3\},\, \{1,2,3\}$\\
$\bA_4$&1&3&$\{0,1\},\, \{0,3\},\, \{1,2,3\}$\\
$\bA_5$&2&2&$\{0,1\},\, \{0,2\},\, \{0,3\}\, \{1,2,3\}$\\
$\bA_6$&2&3&$\{0,1\},\, \{0,2\},\, \{0,3\}\, \{1,2,3\}$
\end{tabular}
 \caption{The 7 simple \cibs of size 4.}\label{fig:simple7}
 \end{figure}

Our intent is to apply the rectangularity theorem to establish the tractability of these~7 algebras. For this, we need to determine the minimal absorbing subuniverses of each subalgebra of \bA. First observe that if $\{a,b\}$ forms a copy of \slt\ with $a\cdot b =a$, then $\{a\} \absorbing_t \{a,b\}$, with $t(x,y)=x\cdot y$, while $\{b\}$ is not absorbing since $a$ is a sink (Lemma~\ref{lem:sink}). Second, by Lemma~\ref{lem:abelian-AF}, \Sq 3 is absorption-free, which is to say, it is its own minimal absorbing subalgebra.

 Now let $B$ be a proper minimal absorbing subalgebras of $\bA_i$, for $i<7$. Then either $B\cap
 \{1,2,3\}$ is absorbing in $\{1,2,3\}$ or it is empty (Lemma~\ref{lem:restriction}). As $\{1,2,3\}\cong
 \Sq 3$ is absorption free and $B$ is a proper subalgebra of \bA, we must have
 $B=\{1,2,3\}$ or $B=\{0\}$. However the first of these is impossible because $\{1,2,3\}
 \cap \{0,1\} = \{1\}$, which is not absorbing in $\{0,1\}$ since $0$ is a sink. Thus the
 only possible proper absorbing subalgebra of \bA\ is $\{0\}$.

 For $i\leq 2$, $\{0\}$ is indeed absorbing in $\bA_i$. We believe that the shortest
 binary absorbing term in each case is as follows.
 \begin{equation}\label{eq:abs-term}
   \begin{aligned}
     \bA_0 \quad& \bigl(x(xy)\bigr)\bigl(y(xy)\bigr)\\
     \bA_1 \quad& \bigl(x(x(xy))\bigr)\bigl(y(y(xy))\bigr)\\
     \bA_2 \quad& ((x(x(xy)))((y(xy))(x(x(xy)))))((y(x(xy)))(x(x(x(xy))))).
   \end{aligned}
 \end{equation}
 However, observe that if $i>2$ then
 according to Figure~\ref{fig:simple7} either $u_2=2$ or $u_3=3$. If $u_2=2$ then $2$ is a
 sink for the subuniverse $\{0,2\}$. In that case $\{0\}= B\cap \{0,2\}$ is not absorbing
 in $\{0,2\}$ contradicting the fact that $B$ is absorbing in \bA. A similar argument
 works for $u_3=3$ and $\{0,3\}$. Thus $\bA_i$ is absorption free for $i>2$.

 We summarize these computations in the following table.
\begin{center}
  \begin{tabular}{lc}
  Algebra&Minimal Absorbing Subalgebra\\\toprule
  $\slt=\{0,1\}$ & $\{0\}$ \\
  $\{1,2,3\}$ & $\{1,2,3\}$ \\
   $\bA_0, \bA_1, \bA_2$ & $\{0\}$\\
  $\bA_3,\dots,\bA_6$ & $A$
 \end{tabular}
\end{center}
The crucial point here is that every member of $\sansS(\bA_i)$ has a unique minimal
absorbing subalgebra.

We make one more observation. Let $i<7$ and let $h$ be an automorphism of $\bA_i$. Then
$h(0)=0$. This holds because $\{1,2,3\}$ must be preserved by $h$, as it is the unique
3-element subalgebra. Of course since \slt has no nontrivial automorphisms, $h(0)=0$ also
holds in that algebra.

We now proceed to argue that each algebra from Figure~\ref{fig:simple7} is
tractable. Following Definition~\ref{def:csp}, let $\sI=\langle \sV, \sA, \sS, \sR\rangle
$ be an instance of $\CSP(\bA_i)$, for some $i<7$. We shall simply write $\bA$ instead of
$\bA_i$.  We have $\sA=(\bC_0, \bC_1,\dots,\bC_{n-1})$, in which each $\bC_i$ is a subalgebra of $\bA$, and
$\sR=(R_0,\dots,R_{J-1})$ is a list of subdirect products of the various $\bC_k$'s. We
must show that there is an algorithm that determines, in polynomial time, whether this
instance has a solution. For this we will use Corollary~\ref{cor:RT-cor-gen}.

For every $k<n$ let $\bB_k$ be the minimal absorbing subuniverse of $\bC_k$. As mentioned, $\bB_k$ is unique. Let $\alpha=\left\{\,k<n : \text{$\bC_k$ is abelian}\,\right\}$ and set $\alpha'=\nn-\alpha$. By reordering the factors, we may assume that $\alpha=\{0,1,\dots,p-1\}$ for some $p<n$.

Fix $j<J$. There is an equivalence relation on $\alpha'$ defined by $k\sim_j \ell$ iff $\etaR^j_k = \etaR^j_\ell$. Call the least index in each equivalence class the \emph{leader.}  The remaining elements are the \emph{followers.} When $k\sim_j \ell$, subdirectness of $R_j$ implies that $\bC_k \cong \bR_j/\etaR^j_k = \bR_j/\etaR^j_\ell \cong \bC_\ell$. Thus, there is an isomorphism $h^j_{k,\ell}\colon \bC_k \to \bC_\ell$ such that $\Proj_{k,\ell}(R_j) =   \bigl\{\,(x,h^j_{k,\ell}(x)) : x\in C_k\,\bigr\}$.
These maps are consistent in the sense that if $k\sim \ell \sim m$ then $h_{k,m} = h_{\ell,m}\circ h_{k,\ell}$. Observe also that $h_{k,\ell}(B_k) = B_\ell$, since the minimal absorbing subalgebra is unique in each factor. Moreover, from our comments about automorphisms, and
since $k,\ell \in \alpha'$, we have $h_{k,\ell}(0)=0$. Finally, for $k\sim_j \ell$, define $H^j_{k,\ell} = \Proj_{k,\ell}(R_j)$, so $(0,0)\in H^j_{k,\ell}$ whenever $k\sim_j \ell$.

For each $j<J$, let the relation $\bar R_j$ be obtained from $R_j$ as follows. First, for
each follower $\ell$, remove the $\ell$-th variable from the scope of $R_j$. Then, define
$\bar R_j = R_j \times \prod_{\ell \notin \im\sigma_j} C_\ell$. Let
$\bar\sR=(\bar R_j : j<J)$ and $\bar\sH=(H^j_{k,\ell} : j<J, k\sim_j \ell)$.  Lastly, let
$\bar \sI$ denote the modification of $\sI$ with $\bar\sR \cup \bar \sH$ replacing $\sR$
and every scope equal to $\sV$. Then the solution set of $\bar \sI$ is identical to the
solution set of the original instance $\sI$. (To be more precise, the relations in
$\bar\sR \cup \bar\sH$ are pp-definable (see\cite[Sec.~3.1]{BartoKrokhinWillard2017})
from $\sR$ and conversely.)

Consider the partial instance $\bar\sI_{\pp}$. This is an instance of the problem
$\CSP(\Sq3)$, which is tractable. If the partial instance has no solution, then neither
does $\sI$, and we halt. So suppose that $\bar\sI_{\pp}$ has a solution, $\bx$. (We do not
actually need to know $\bx$, just that it exists.) Since $\Sq3$ is absorption-free, $\bx
\in \prod_\alpha C_k = \prod_\alpha B_k$.

We now wish to apply Corollary~\ref{cor:RT-cor-gen} to show that
\begin{equation}
\{\bx\}\times \prod_{k\in \alpha'} B_k \subseteq \bigcap \bar\sR.\label{eq:3}
\end{equation}
The first 3 conditions follow from the
definition of $\alpha$ and the construction of the $\bar R_j$'s. The fifth condition holds
because of the solution $\bx$ to the partial instance. We must address the fourth requirement.

Fix $j<J$. We shall show that $\bar R_j$ intersects $\prod B_k$. We have two cases to
consider. First, suppose that $i<3$. In that case the minimal absorbing subalgebra of
$\bA$ is $\{0\}$, with corresponding absorbing term given in~\eqref{eq:abs-term}. Think of
that term as a nonassociative product. For every $\ell \in \alpha$,
$B_\ell = \{1,2,3\}= C_\ell$, while for $\ell \in \alpha'$, $B_\ell = \{0\}$. Since
$\bar R_j$ is subdirect, for every $\ell \in \alpha'$ there is a tuple $\br^\ell \in R_j$
with $r^\ell_\ell = 0$. Let $\br$ be the ``product'' (under the absorbing term) of all the
$\br^\ell$'s. (For definiteness, associate to the right.) Then, since $\{0\}$ is
absorbing, $\br_\ell = 0\in B_\ell$ for every $\ell \in \alpha'$, and for
$\ell \in \alpha$, $\br_\ell \in C_\ell = B_\ell$. Thus the condition is satisfied.

The argument when $i\geq 3$ is not very different. In this case, let
$\beta = \{\ell <n : \bC_\ell = \slt \}$. Note that if $\ell \notin \beta$ then
$B_\ell = C_\ell$.  As before, for $\ell \in \beta$ let $\br^\ell$ be a tuple with
$\br^\ell_\ell = 0$, and let $\br$ be the product. (This time simply use the basic binary
operation to compute the product.  It is, after all, an absorbing term for \slt.) Then for
$\ell \in \beta$, $r_\ell = 0 \in B_\ell$, while, for $\ell \notin \beta$,
$r_\ell \in C_\ell = B_\ell$.

Thus inclusion~\eqref{eq:3} holds. Let $\mathbf 0$ denote a tuple of $0$'s indexed by
$\alpha'$, and set $\vy$ to be the concatenation of $\bx$ with $\mathbf 0$. Then $\vy
\in \bigcap \bar\sR$ by~\eqref{eq:3}. On the other hand, for any $j<J$ and $k\sim_j \ell$,
we have $(y_k, y_\ell) = (0,0) \in H^j_{k,\ell}$. Thus $\vy$ is a solution to $\bar\sI$.

\section{Acknowledgements}
The authors wish to thank the anonymous referee for identifying errors, suggesting corrections, and providing many other valuable suggestions for improvements.  The authors also gratefully acknowledge the support of National Science Foundation grant no.~1500218.

\ifthenelse{\boolean{draft}}{\newpage}{}


\appendix

\ifthenelse{\boolean{arxiv}}{   
\section{Nonrectangularity of abelian algebras}
The example in this section reveals why the rectangularity theorem cannot be generalized to products of abelian algebras. Before examining the example, we state a lemma that will be useful when discussing the example. The proof is straightforward.
\begin{lem}\label{lem:abs2}
Let $\bA_1, \dots, \bA_n$ be finite simple algebras in a Taylor variety and suppose
\begin{itemize}
\item each $\bA_i$ is absorption-free,
\item $\bR \sdp \bA_1 \times \cdots \times \bA_n$,
\item $\etaR_i \neq \etaR_j$ for all $i\neq j$, and
\item $\mu \subseteq \nn $ is minimal among the sets in $\{\sigma \subseteq \nn  \mid \bigwedge_{\sigma} \etaR_i = 0_R\}$.
\end{itemize}
Then, $|\mu|>1$ and $\Proj_{\tau} \bR = \Pi_\tau \bA_i$ for every set $\tau \subseteq \nn $ with $1< |\tau|\leq |\mu|$.
\end{lem}

\begin{exa}
  Let $\bA$ be the algebra $\<\{0,1\}, \{f\}\>$, with universe $\{0,1\}$, and a single basic operation given by the ternary function $f(x,y,z) = x+y+z$ where addition is modulo 2.  This algebra is clearly simple and has two proper subuniverses, $\{0\}$ and $\{1\}$, neither of which is absorbing, so $\bA$ is absorption-free.  Let $\bR \sdp \bA \times \bA \times \bA$ be the subdirect power of $\bA$ with universe
  \[ R = \{(x,y,z)\in A^3\mid x+y+z=0 \text{ (mod $2$)}\} = \{(0,0,0), (1,1,0), (1,0,1), (0,1,1) \}.\]
  It's convenient to give short names to the elements of $R$.  Let us use the integers that they (as binary tuples) represent.
  That is, $0 = (0,0,0)$, $3 = (1,1,0)$,  $5 = (1,0,1)$,  and $6 = (0,1,1)$,  so $R = \{0, 3, 5, 6\}$. Let $\etaR_i = \ker(\bR \onto \bA_i)$, and identify each congruence with the associated partition of the set $R$. Then, $\etaR_1 = |0,6|3,5|; \,  \etaR_2 = |0,5|3,6|; \,  \etaR_3 = |0,3|5,6|$,
  so $\etaR_i \meet \etaR_j = 0_R$ and $\etaR_i \join \etaR_j = 1_R$. In fact, the three projection kernels are the only nontrivial congruence relations of $\bR$.

  \tikzstyle{lat} = [circle,draw,inner sep=0.8pt]
  \begin{center}
  \begin{tikzpicture}[scale=1.2]
    \node[lat] (bot) at (0,0) {};
    \node[lat] (top) at (0,2) {};
    \node[lat] (a) at (-1,1) {};
    \node[lat] (b) at (0,1) {};
    \node[lat] (c) at (1,1) {};
    \draw[semithick] (bot) -- (a) -- (top) -- (b) -- (bot) -- (c) -- (top);
    \draw (top) node [right]{$1_R$};
    \draw (bot) node [right]{$0_R$};
    \draw (a) node [left]{$\etaR_1$};
    \draw (b) node [right]{$\etaR_2$};
    \draw (c) node [right]{$\etaR_3$};
    \draw (-2.5,1) node {$\Con(\bR) = $};
  \end{tikzpicture}
  \end{center}

  Each projection of $\bR$ onto 2 coordinates of $\bA^3$ is ``linked;''  these binary projections are all readily seen to be
  $\{(0,0), (0,1), (1,0), (1,1)\} =  A \times A$ in this case.  Lemma~\ref{lem:abs2} tells us that this must be so.  For the set
  $\{S \subseteq \nn  \mid \bigwedge_{i\in S} \etaR_i = 0_R\}$ that appears in the lemma is, in this example,
  $\sS = \{\{1,2\}, \{1,3\}, \{2,3\}, \{1,2,3\}\}$, and the projection of $R$ onto the coordinates in each minimal  set in $\sS$ must
  equal $A \times A$ by the lemma.  Now let $\bS \sdp \bA \times \bA \times \bA$ be the subdirect power of $\bA$
  with universe
  \begin{align*}
    S &= \{(x,y,z)\in A^3\mid x+y+z=1 \text{ (mod $2$)}\}\\
    &= \{(1,0,0), (0,1,0), (0,0,1), (1,1,1) \}\\
    &= \{1, 2, 4, 7\}.
  \end{align*}
  All of the facts observed above about $\bR$ are also true of $\bS$.  In particular, $\Proj_{\{i,j\}} S = A_i \times A_j$ for
  each pair $i\neq j$ in $\{1,2,3\}$. If both $\bR$ and $\bS$ belong to the set of relations (or ``constraints'') of a single $\CSP(\sansS(\bA))$ instance, then the instance has no solution since $R \cap S = \emptyset$, despite the fact that $\Proj_{\{i,j\}}\bR = \Proj_{\{i,j\}}\bS$  for each pair $i\neq j$ in $\{1,2,3\}$. To put it another way, $\bR$ and $\bS$ witness a failure of the \emph{2-intersection property}. The ``potato diagram'' of $\bA^3$ below depicts the elements of $R$ and $S$ as colored lines (with colors chosen to emphasize equality of the projections onto $\{1,2\}$).

  \begin{center}
  \begin{figure}
    \caption{potatoes}
  \begin{tikzpicture}[scale=1]
    \draw (0,.5) ellipse (4mm and 12mm);
    \draw (2,.5) ellipse (4mm and 12mm);
    \draw (4,.5) ellipse (4mm and 12mm);
    \node[lat] (00) at (0,0) {};
    \node[lat] (01) at (0,1) {};
    \node[lat] (10) at (2,0) {};
    \node[lat] (11) at (2,1) {};
    \node[lat] (20) at (4,0) {};
    \node[lat] (21) at (4,1) {};
    \draw (00) node [below]{$0$};
    \draw (01) node [above]{$1$};
    \draw (10) node [below]{$0$};
    \draw (11) node [above]{$1$};
    \draw (20) node [below]{$0$};
    \draw (21) node [above]{$1$};
    \draw[red,thick] (00) -- (10) -- (20);
    \draw[blue,thick] (00) -- (11) -- (21);
    \draw[green,thick] (01) -- (10) -- (21);
    \draw[yellow,thick] (01) -- (11) -- (20);
    \draw (7,0.5) node {(lines are elements of $R$)};
  \end{tikzpicture}
  \vskip2mm
  \begin{tikzpicture}[scale=1]
    \draw (0,.5) ellipse (4mm and 12mm);
    \draw (2,.5) ellipse (4mm and 12mm);
    \draw (4,.5) ellipse (4mm and 12mm);
    \node[lat] (00) at (0,0) {};
    \node[lat] (01) at (0,1) {};
    \node[lat] (10) at (2,0) {};
    \node[lat] (11) at (2,1) {};
    \node[lat] (20) at (4,0) {};
    \node[lat] (21) at (4,1) {};
    \draw (00) node [below]{$0$};
    \draw (01) node [above]{$1$};
    \draw (10) node [below]{$0$};
    \draw (11) node [above]{$1$};
    \draw (20) node [below]{$0$};
    \draw (21) node [above]{$1$};
    \draw[yellow,thick] (01) -- (11) -- (21);
    \draw[red,thick] (00) -- (10) -- (21);
    \draw[green,thick] (01) -- (10) -- (20);
    \draw[blue,thick] (00) -- (11) -- (20);
    \draw (7,0.5) node {(lines are elements of $S$)};
  \end{tikzpicture}
  \end{figure}
  \end{center}

\end{exa}
}{}

\section{Miscellaneous Proofs}\label{sec:proofs-elem-facts}
\subsection{Proof of Lemma~\ref{lem:fact1}}\label{sec:fact1}
To simplify notation, let $ABB\cdots B := A\times B \times B \times \cdots \times B$ (the usual Cartesian product) and let $t=f\star g$. We  handle the $j=1$ case of the definition of absorption. (The general case is no harder, but the notation becomes tedious.)
That is, we prove
\[
t[ABB\cdots B] = f\bigl[g[ABB\cdots B] \times g[BB\cdots B] \times \cdots \times g[BB \cdots B]\bigr]\sseq B.
\]
If $f$ is the absorbing term, then since $B$ is a subalgebra, we have $g[BB\cdots B] \subseteq B$. Hence, $t[ABB\cdots B] \subseteq f[ABB\cdots B] \sseq B$. On the other hand, if $g$ is the absorbing term, then $t[ABB \cdots B] \sseq f[BB \cdots B] \sseq B$. \hfill \qedsymbol%

\subsection{Proof of Corollary~\ref{cor:fact2}}\label{sec:fact2}
Since $\bB_0 \absorbing_f \bA_0$, it follows that $\bB_0\times \bA_1 \absorbing_f \bA_0\times \bA_1$. Similarly $\bA_0 \times \bB_1 \absorbing_g \bA_0\times \bA_1$. Hence, $\bB_0\times \bB_1=(\bB_0\times \bA_1) \cap (\bA_0\times \bB_1)$ is absorbing in $\bA_0\times \bA_1$ with respect to $f\star g$, by Lemma~\ref{lem:bk-prop-2-4}. \hfill \qedsymbol%

\subsection{Proof of Lemma~\ref{lem:min-abs-prod}}\label{sec:proof-cor-min-abs-prod}
In case $n=2$, the fact that $\bB \absorbing_s \bA$ follows directly from Corollary~\ref{cor:fact2}. We first extend this result to \emph{minimal} absorbing subalgebras, still for $n=2$, and then an easy induction argument will complete the proof for arbitrary finite $n$.

  Assume $\bB_0 \minabsorbing_{t_0} \bA_0$ and $\bB_1 \minabsorbing_{t_1} \bA_1$. Then $\bB \absorbing_s \bA$ (Cor.~\ref{cor:fact2}), where $\bB := \bB_0\times \bB_1$, $\bA := \bA_0\times \bA_1$, and $s = t_0\star t_1$. To show $\bB_0\times \bB_1$ is minimal absorbing, let $\bS$ be a proper subalgebra of $\bB_0\times \bB_1$. By transitivity of absorption, it suffices to prove that $\bS$ is not absorbing in $\bB_0\times \bB_1$.

  Let $t$ be a term of arity $q$. For $b \in B_1$, the set $S^{-1}b := \{b_0 \in B_0 \mid (b_0, b) \in S\}$ is a subuniverse of $\bB_0$.
Since $\bS$ is a proper subalgebra, there exists $b^* \in B_1$ such that $S^{-1}b^*$ is not all of $B_0$. Therefore, $S^{-1}b^*$ is a proper subuniverse of $\bB_0$, so $S^{-1}b^*$ is not absorbing in $\bA_0$ (by minimality of $\bB_0$). Consequently,
  \begin{align*}
    \exists x_i \in S^{-1}b^*, &\quad \exists b\in B_0,  \quad \exists j< q, \quad \exists  b'\notin S^{-1}b^* \quad \text{ such that }  \\
    \;& t^{\bA_0}(x_0, x_1, \dots, x_{j-1}, b, x_{j+1}, \dots, x_{q-1}) = b'.
  \end{align*}
  Thus, $t^{\bA_0\times \bA_1}((x_0,b^*), \dots, (x_{j-1}, b^*), (b, b^*), (x_{j+1}, b^*), \dots, (x_{q-1}, b^*)) = (b', b^*) \notin S$ (since $b' \notin S^{-1}b^*$). Finally, because $(x_i, b^*) \in S$ for all $i$, and since $t$ was an arbitrary term, $\bS$ is not absorbing in $\bB_0\times \bB_1$.

  Now fix $n>2$ and assume the result holds when there are at most $n-1$ factors. Let $\bB' := \bB_0 \times \cdots \times \bB_{n-2}$ and
  $\bA' := \bA_0\times \cdots \times \bA_{n-2}$. By the induction hypothesis, $\bB' \minabsorbing_{s'} \bA'$, and since $\bB_{n-1} \minabsorbing_{t_{n-1}} \bA_{n-1}$ we have (by the $n=2$ case) $\bB' \times \bB_{n-1} \minabsorbing_s \bA'\times \bA_{n-1}$, where $s = s' \star t_{n-1}$.
\hfill \qedsymbol

\subsection{Proof of Corollary~\ref{cor:gen-abs1}}\label{sec:gen-abs1}
Let $\bA = \myprod_i \bA_i$ and $\bB = \myprod_i \bB_i$. By Lemma~\ref{lem:min-abs-prod}, $\bB \absorbing_t \bA$, so the result follows from Lemma~\ref{lem:restriction} if we put $C = R$. \hfill \qedsymbol%

This is proved Section~\ref{sec:proof-lemma-sdp-general}.

\subsection{Proof of Lemma~\ref{lem:sdp-general}}\label{sec:proof-lemma-sdp-general}

\ifthenelse{\boolean{extralong}}{
  The argument used to prove the next lemma also works in the more general case of $n$-fold products. Although the arguments are almost identical, and although we present the general proof below, we begin with a proof of the $2$-fold case since it is so much easier to read.

  \noindent {\bf Lemma.} Let $\bA_1$ and $\bA_2$ be finite algebras in an idempotent variety, and suppose $\bB_i \minabsorbing \bA_i$ for $i=1,2$. Let $\bR \sdp \bA_1 \times \bA_2$, and let $R' = R \cap (B_1 \times B_2)$.  If $R'\neq \emptyset$, then $\bR' \sdp \bB_1 \times \bB_2$.

  \begin{proof}
  If $R'\neq \emptyset$, then $S_1 := \Proj_1 R'$ and $S_2 := \Proj_2 R'$ are also nonempty.   We want to show $S_i = B_i$ for $i=1,2$. By minimality of $\bB_1 \minabsorbing \bA_1$ and by transitivity of  absorption, it suffices to prove $\bS_1 \absorbing \bB_1$. Assume $\bB_1 \minabsorbing \bA_1$ with respect to $t$, say, $k = \ar(t)$. Fix $s_1, \dots, s_k \in S_1$, $b \in B_1$, and $j\leq k$.  Then $\tilde{b} := t(s_1, \dots, s_{j-1}, b, s_{j+1}, \dots, s_k) \in B_1$, and we must show $\tilde{b} \in S_1$.

  Since $\bR$ is subdirect, there exists $a\in A_2$ with $(b,a) \in R$.  Also, there exist $s_1', \dots, s_k'$ in $S_2$ such that for all $i=1, \dots, n$ we have $(s_i, s_i')\in R'$.  Since all the pairs belong to $R$, the following expression is also in $R$:
  \begin{align}  
    &t^{\bA_1\times \bA_2}((s_1,s_1'), \dots, (s_{j-1}, s_{j-1}'),  (b, a),  (s_{j+1}, s_{j+1}'), \dots, (s_k, s_k')) \nonumber \\
    &=(t^{\bA_1}(s_1, \dots, s_{j-1}, b, s_{j+1}, \dots, s_k), t^{\bA_2}(s_1', \dots, s_{j-1}', a, s_{j+1}', \dots, s_k')) = (\tilde{b}, \tilde{a}).
  \end{align}
  Since $\bB_2$ is absorbing in $\bA_2$, we see that $\tilde{a}$ belongs to $\bB_2$. Therefore, $(\tilde{b}, \tilde{a}) \in R\cap (B_1\times B_2) = R'$,  which means $\tilde{b}\in S_1$, as desired. Of course, the same argument works to prove $S_2 = B_2$.
  \end{proof}
  As mentioned, the result generalizes to $n$-fold products and the proof is nearly identical to the one above.
}{}


  If $R'\neq \emptyset$, then for $1\leq i\leq n$ the projection $S_i := \Proj_i R'$ is also nonempty.  We want to show $S_i = B_i$.
  By minimality of $\bB_i \minabsorbing \bA_i$ and by transitivity of absorption, it suffices to prove $\bS_i \absorbing \bB_i$.
  Assume $\bB_i \minabsorbing \bA_i$ with respect to $t$, say, $k = \ar(t)$. Fix $s_1, \dots, s_k \in S_i$, $b \in B_1$, and $j\leq k$.  Then
  $\tilde{b_i} := t^{\bA_i}(s_1, \dots, s_{j-1}, b, s_{j+1}, \dots, s_k) \in B_i$, and we must show $\tilde{b_i} \in S_i$.

  Since $\bR$ is subdirect, there exist $a_i \in A_i$ (for all $i\neq j$) such that
  \[\ba^\ast:= (a_1, \dots, a_{j-1}, b, a_{j+1}, \dots, a_n)\in R.\]
  Also, for each $1\leq j\leq n$ there exist $s_1{(j)}, \dots, s_k{(j)}$ in $S_j$ such that for all $1\leq \ell \leq k$ we have
  \[
  \bs_\ell := (s_\ell{(1)},\dots, s_\ell{(i-1)}, s_\ell, s_\ell{(i+1)}, \dots, s_\ell{(n)})\in R'.
  \]
  Since all these $n$-tuples belong to $R$, so does
    $t^{\bA_1\times \cdots \times \bA_n}(\bs_1, \dots, \bs_{j-1},\ba^\ast, \bs_{j+1}, \dots, \bs_{k})$, and the latter
    %
reduces to $(\tilde{b_1},\dots, \tilde{b_{n}})$.
  Since $\bB_i \absorbing \bA_i$, we have
  $(\tilde{b_1},\dots, \tilde{b_{n}})\in (B_1 \times \cdots \times B_n)$.
  Thus,
  $(\tilde{b_1},\dots, \tilde{b_{n}})\in R\cap B_1 \times \cdots \times B_n$,
  so $\tilde{b_i}\in S_i$, as desired. Of course, the same argument
  works for all $1\leq i\leq n$.
\hfill \qedsymbol

\ifthenelse{\boolean{arxiv}}{
  \subsection{Direct Proof of Corollary~\ref{cor:fry-pan}}%
  \label{sec:proof-fry-pan-cor}

  We prove the following:  Let $\bA_0, \dots, \bA_{n-1}$ be finite
  idempotent algebras in a Taylor variety and suppose $\bR \sdp \prod \bA_i$.
  For some $0< k < n-1$, assume the following:
  \begin{itemize}
  \item if $0\leq i < k$ then $\bA_i$ is abelian;
  \item if $k\leq i < n$ then $\bA_i$ has a sink $s_i \in A_i$.
  \end{itemize}
  Then
   $Z := R_{\kk}
  \times \{s_k\} \times \{s_{k+1}\} \times \cdots \times \{s_{n-1}\}  \subseteq R$,
  where $R_{\kk} = \Proj_{\kk}R$.

\begin{proof}
  Since $\bA_{\kk} := \prod_{i<k} \bA_i$ is abelian and lives in a Taylor variety,
  there exists a term $m$ such that $m^{\bA_{\kk}}$ is a \malcev term operation on
  $\bA_{\kk}$ (Theorem~\ref{thm:type2cp}).  Since we are working with idempotent terms, we can be sure
  that for each $i\in \nn$ the term operation $m^{\bA_i}$ is not
  constant (so depends on at least one of its arguments).

  Fix $\bz :=(r_0, r_1, \dots, r_{k-1}, s_k, s_{k+1}, \dots, s_{n-1}) \in Z$.
  We will show that $\bz \in R$.
  Since $\bz_{\kk}\in R_{\kk}$, there exists $\br \in R$
  whose first $k$ elements agree with those of $\bz$.
  That is, $\br_{\kk} =  (r_{0}, r_{1}, \dots, r_{k-1}) = \bz_{\kk}$.

  Now, since $\bR$ is subdirect, there exists $\bx^{(0)} \in R$ such that
  $\bx^{(0)}(k) = s_k$, the sink in $\bA_k$.
  If the term operation $m^{\bA_k}$ depends on its second or third argument,
  consider $\vy^{(0)}= m(\br, \bx^{(0)}, \bx^{(0)}) \in R$.
  (Otherwise, $m^{\bA_k}$ depends on its
  first argument, so consider $\vy^{(0)}= m(\bx^{(0)}, \bx^{(0)}, \br)$.)
  For each $0\leq i < k$ we have
  $\vy^{(0)}(i) = m^{\bA_i}(r_i, \bx^{(0)}(i), \bx^{(0)}(i)) = r_i$,
  since $m^{\bA_i}$ is \malcev.   Thus, $\vy^{(0)}_{\kk} = \bz_{\kk}$.
  At index $i = k$,  we have
  $\vy^{(0)}(k) = m^{\bA_k}(r_k, s_k, s_k) = s_k$, since $s_k$ is a sink in $\bA_k$.
  By the same argument, but starting with $\bx^{(1)} \in R$ such that
  $\bx^{(1)}(k+1) = s_{k+1}$, there exists $\vy^{(1)} \in R$ such that
  $\vy^{(1)}_{\kk} = \bz_{\kk}$ and $\vy^{(1)}(k+1) = s_{k+1}$.

  Let $t$ be any term of arity $\ell\geq 2$ that depends on at least two of its
  arguments, say, arguments $p$ and $q$, and
  consider $t(\vy^{(0)}, \dots, \vy^{(0)},\vy^{(1)}, \vy^{(0)}, \dots, \vy^{(0)})$, where
  $\vy^{(1)}$ appears as argument $p$ (or $q$) and $\vy^{(0)}$ appears elsewhere.
  By idempotence, and by the fact that $s_k$ and $s_{k+1}$ are sinks, we have
  \[
  t\left( \begin{array}{cccc}
    (r_0,\dots,r_{k-1},&s_k,&\ast,&\ast, \dots,\ast)\\
        \vdots & &&\vdots\\
    (r_0,\dots,r_{k-1},&s_k,&\ast,&\ast, \dots,\ast)\\
    (r_0, \dots, r_{k-1},&\ast,&s_{k+1},&\ast, \dots, \ast)\\
    (r_0,\dots,r_{k-1},&s_k,&\ast,&\ast, \dots,\ast)\\
        \vdots  && &\vdots\\
    (r_0,\dots,r_{k-1},&s_k,&\ast,&\ast, \dots,\ast)\\
  \end{array}\right) = (r_0, \dots, r_{k-1}, s_k, s_{k+1}, \ast, \dots, \ast),
  \]
  where the wildcard $\ast$ represents unknown elements.
  Denote this element of $R$ by
  $\br^{(1)} = (r_0, \dots, r_{k-1}, s_k, s_{k+1}, \ast, \dots, \ast)$.
  Continuing as above, we find
$\vy^{(2)} = (r_0, \dots, r_{k-1}, \ast, \ast, s_{k+2}, \ast, \dots, \ast) \in R$,
and compute
\begin{align*}
\br^{(2)}:= t(\br^{(1)}, \dots, \br^{(1)}, & \, \vy^{(2)}, \br^{(1)}, \dots, \br^{(1)}) =
(r_0, \dots, r_{k-1}, s_k, s_{k+1}, s_{k+2}, \ast, \dots, \ast),\\
&\; ^{\widehat{\lfloor}} \, \text{$p$-th argument}
\end{align*}
which also belongs to $R$.  In general, once we have
\begin{align*}
\br^{(j)}&:= (r_0,  \dots, r_{k-1}, s_k, \dots, s_{k+j}, \ast, \dots, \ast) \in R, \text{ and }\\
\vy^{(j+1)} &:= (r_0,  \dots, r_{k-1}, \ast, \dots, \ast, s_{k+j+1}, \ast, \dots, \ast) \in R,
\end{align*}
we compute
\begin{align*}
\br^{(j+1)} &= t(\br^{(j)},  \dots, \br^{(j)}, \vy^{(j+1)}, \br^{(j)}, \dots, \br^{(j)}) \\
              & = (r_0,  \dots, r_{k-1}, s_k, \dots, s_{k+j+1}, \ast, \dots, \ast) \in R.
\end{align*}
Proceeding inductively in this way yields $\bz =
(r_0, \dots, r_{k-1}, s_k, \dots, s_{n-1}) \in R$, as desired.
\end{proof}

}{}

\ifthenelse{\boolean{arxiv}}{

\subsection{Other elementary facts}
The remainder of this section collects some observations that can be useful when
trying to prove that an algebra is abelian.  We have moved the statements and
proofs of these facts to the appendix since we didn't end up using any of them
in the paper.

Denote the diagonal of $A$ by $D(A) := \{(a,a)\mid a \in A\}$.

\begin{lem}%
\label{lem:diagonal}
An algebra $\bA$ is abelian if and only if there is some $\theta \in \Con (\bA^2)$ that has
the diagonal set $D(A)$ as a congruence class.
\end{lem}
\begin{proof}
  ($\Leftarrow$) Assume $\Theta$ is such a congruence.  Fix
  $k<\omega$,
  $t^{\bA}\in \sansClo_{k+1}(\bA)$,
  $u, v \in A$, and
  $\bx, \vy \in A^k$.
  We will prove the implication~(\ref{eq:22}), which in the present context is
  \begin{equation*}
    t^\bA(\bx,u) = t^\bA(\vy,u) \quad \Longrightarrow \quad
    t^{\bA}(\bx,v) = t^{\bA}(\vy,v).
  \end{equation*}
  Since $D(A)$ is a class of $\Theta$, we have
  $(u,u) \mathrel{\Theta} (v,v)$, and since $\Theta$ is a reflexive relation, we have
  $(x_i,y_i)  \mathrel{\Theta} (x_i,y_i)$ for all $i$.  Therefore,
  \begin{equation}%
    \label{eq:9}
    t^{\bA\times \bA}((x_1,y_1), \dots, (x_k,y_k), (u,u))
    \mathrel{\Theta}
    t^{\bA\times \bA}((x_1,y_1), \dots, (x_k,y_k), (v,v)).
  \end{equation}
  since $t^{\bA \times \bA}$ is a term operation of $\bA\times \bA$.
  Note that~(\ref{eq:9}) is equivalent to
  \begin{equation}%
    \label{eq:13}
    (t^{\bA}(\bx, u), t^{\bA}(\vy,u))
    \mathrel{\Theta}
    (t^{\bA}(\bx, v), t^{\bA}(\vy, v)).
  \end{equation}
  If $t^{\bA}(\bx, u)= t^{\bA}(\vy, u)$ then
  the first pair in~(\ref{eq:13}) belongs to the $\Theta$-class
  $D(A)$, so the second pair must also belong this $\Theta$-class.
  That is, $t^{\bA}(\bx, v)= t^{\bA}(\vy, v)$, as desired.

  \vskip2mm

  \noindent ($\Rightarrow$) Assume $\bA$ is abelian. We show
  $\Cg^{\bA^2}(D(A)^2)$ has $D(A)$ as a block.  Assume
  \begin{equation}%
    \label{eq:16}
  ((x,x), (c,c')) \in \Cg^{\bA^2}(D(A)^2).
  \end{equation}
  It suffices to prove that $c=c'$.  Recall, \malcev's congruence generation
  theorem states that (\ref{eq:16}) holds iff
  \begin{align*}
  \exists \,& (z_0,z_0'), (z_1,z_1'), \dots, (z_n,z_n') \in A^2\\
    \exists \,& ((x_0,x_0'), (y_0,y_0')), ((x_1,x_1'), (y_1,y_1')), \dots,
    ((x_{n-1},x_{n-1}'), (y_{n-1},y_{n-1}')) \in D(A)^2\\
    \exists \, & f_0, f_1, \dots, f_{n-1}\in F^*_{\bA^2}
  \end{align*}
  such that
  \begin{align}%
    \label{eq:7}
    \{(x, x),(z_1,z_1')\} &= \{f_0(x_0,x_0'), f_0(y_0,y_0')\}\\
    \nonumber
     \{(z_1,z_1'),(z_2,z_2')\} &= \{f_1(x_1,x_1'), f_1(y_1,y_1')\}\\
    \nonumber
     & \vdots\\%
    \label{eq:8}
     \{(z_{n-1},z_{n-1}'),(c, c')\} &= \{f_{n-1}(x_{n-1},x_{n-1}'), f_{n-1}(y_{n-1},y_{n-1}')\}
  \end{align}
  The notation $f_i\in F^*_{\bA^2}$ means
  \begin{align*}
    f_i(x, x') &= g_i^{\bA^2}((a_1, a_1'), (a_2, a_2'), \dots, (a_k, a_k'), (x, x'))\\
               &= (g_i^{\bA}(a_1, a_2, \dots, a_k, x), g_i^{\bA}(a_1', a_2', \dots, a_k', x')),
  \end{align*}
  for some $g_i^{\bA} \in \sansClo_{k+1}(\bA)$ and some constants
  $\ba = (a_1, \dots, a_k)$ and $\ba' = (a_1', \dots, a_k')$ in $A^k$.
  Now, $((x_i,x_i'), (y_i,y_i'))\in D(A)^2$ implies
  $x_i=x_i'$, and $y_i=y_i'$, so in fact we have
  \[
    \{(z_i,z_i'),(z_{i+1},z_{i+1}')\} = \{f_i(x_i,x_i), f_i(y_i,y_i)\} \quad (0\leq i < n).
  \]
  Therefore, by Equation~(\ref{eq:7}) we have either
  \[
    (x,x)= (g_i^{\bA}(\ba, x_0), g_i^{\bA}(\ba', x_0)) \quad \text{ or } \quad
    (x,x)= (g_i^{\bA}(\ba, y_0), g_i^{\bA}(\ba', y_0)).
  \]
  Thus, either $g_i^{\bA}(\ba, x_0) =  g_i^{\bA}(\ba', x_0)$ 
  or $g_i^{\bA}(\ba, y_0) =  g_i^{\bA}(\ba', y_0)$.
  By the abelian assumption, if one of these equations holds, then so does the
  other. This and and Equation (\ref{eq:7}) imply $z_1 = z_1'$.  Applying the same
  argument inductively, we find that $z_i = z_i'$ for all $1\leq i < n$ and so, by
  (\ref{eq:8}) and the abelian property, we have $c= c'$.
\end{proof}

Lemma~\ref{lem:diagonal} can be used to prove the next result
which states that if there is a congruence of $\bA_1 \times \bA_2$ that has the
graph of a bijection between $A_1$ and $A_2$ as a block, then both $\bA_1$ and
$\bA_2$ are abelian algebras.

\begin{lem}%
  \label{lem:bijection_abelian}
  Suppose $\rho \colon A_0 \to A_1$ is a bijection and suppose the graph
  $\{(x, \rho x) \mid x \in A_0\}$ is a block of some congruence
  $\beta \in \Con (A_0 \times A_1)$.  Then both $\bA_0$ and $\bA_1$ are abelian.
\end{lem}
\begin{proof}
  Define the relation $\alpha\subseteq (A_1\times A_1)^2$ as follows: for
  $((a,a'), (b,b')) \in (A_1\times A_1)^2$,
  \[
  (a,a')\mathrel{\alpha} (b,b')
  \quad \iff \quad
  (a, \rho a') \mathrel{\beta} (b, \rho b')
  \]
  We prove that the diagonal $D(A_1)$ is a block of $\alpha$ by showing that
  $(a, a) \mathrel{\alpha} (b,b')$ implies $b = b'$.
  Indeed, if $(a, a) \mathrel{\alpha} (b,b')$, then
  $(a, \rho a) \mathrel{\beta} (b, \rho b')$, which means that
  $(b, \rho b')$ belongs to the block and
  $(a, \rho a)/\beta = \{(x, \rho x)\mid x\in A_1\}$.  Therefore,
  $\rho b  = \rho b'$, so $b = b'$ since $\rho$ is injective.
  This proves that $\bA_1$ is abelian.

  To prove $\bA_2$ is abelian, we reverse the roles of $A_1$ and $A_2$ in the
  foregoing argument.
  If $\{(x, \rho x) \mid x \in A_1\}$ is a block of $\beta$,
  then
  $\{(\rho^{-1}(\rho x), \rho x) \mid \rho x \in A_2\}$ is a block of $\beta$; that
  is, $\{(\rho^{-1} y, y) \mid y \in A_2\}$ is a block of $\beta$.  Define
  the relation $\alpha\subseteq (A_2\times A_2)^2$ as follows: for
  $((a,a'), (b,b')) \in (A_2\times A_2)^2$,
  \[
  (a,a')\mathrel{\alpha} (b,b')
  \quad \iff \quad
  (\rho^{-1}a, \rho a') \mathrel{\beta} (\rho^{-1}b, \rho b').
  \]
  As above, we can prove that the diagonal $D(A_2)$ is a block of $\alpha$
  by using the injectivity of $\rho^{-1}$ to show that $(a, a) \mathrel{\alpha}
  (b,b')$
  implies $b = b'$.
\end{proof}

\begin{lem}%
\label{lem:triv-clone-implies-abelian}
If $\sansClo(\bA)$ is trivial (i.e., generated by the projections),
then $\bA$ is abelian.
\end{lem}
(In fact, it can be shown that $\bA$ is \emph{strongly abelian} in this case.
We won't need this stronger result, and the proof that $\bA$ is abelian is elementary.)
\begin{proof}
We want to show $\C{1_A}{1_A}$.  Equivalently, we must show
that for all $t\in \sansClo(\bA)$ (say, $(\ell+m)$-ary)
and all $a, b \in A^\ell$, we have $\ker t(a,\cdot)=\ker t(b,\cdot)$.
We prove this by induction on the height of the term $t$.  Height-one terms are
projections and the result is obvious for these.  Let $n>1$ and assume the result
holds for all terms  of height less than
$n$.  Let $t$ be a term of height $n$, say, $k$-ary.  Then for some terms
$g_1, \dots, q_k$ of height less than $n$ and for some $j\leq k$, we have
$t = p^k_j [q_1, q_2, \dots, q_k] = q_j$ and since $q_j$ has height less than
$n$, we have
\[
\ker t(a,\cdot)=\ker g_j(a,\cdot) = \ker g_j(b,\cdot)=\ker t(b,\cdot).
\]\end{proof}
}{}

\ifthenelse{\boolean{extralong}}{

\section{Strongly Tractable Divisors}
Suppose $\bA$ is a finite idempotent algebra with congruence relation $\theta \in \Con(\bA)$.
Let $\sI$ be an $n$-variable instance of $\CSP(\bA)$ and suppose
$\sC = \{(\sigma_j, R_j) \mid 0\leq j < p\}$ is the (finite) set of constraints of $\sI$.
Recall, 
the quotient instance $\sI/\theta$ is the $n$-variable instance of $\CSP(\bA/\theta)$ with constraint set given in (\ref{eq:100}) above.
We say that $\bA/\theta$ is a \emph{strongly tractable divisor} of $\bA$ if there exist
constants $c$ and $d$ and an algorithm that, given an instance $\sI$ of $\CSP(\bA)$,
\begin{itemize}
\item determines the full set $\mathcal S$ of solutions to the quotient instance $\sI/\theta$, and
\item takes at most $c |\sI|^d$ steps to complete.
\end{itemize}
Here $|\sI|$ denotes the size of the instance $\sI$, which we take to
be the length of a string encoding all the scopes and tuples of constraints
of $\sI$.
We use the adjective ``strongly'' in the definition above because the algorithm must determine
the full set of solutions to $\sI/\theta$, as opposed to just deciding whether or not
there is a solution.
On the other hand, the bound on the running time of the
algorithm is $c |\sI|^d$ and not $c|\sI/\theta|^d$, so this notion of tractability is maybe not
quite as strong as it first appears.

\subsection{Application: strongly tractable atop tractable}
Let $\bA$ be a finite idempotent algebra with a congruence $\theta$ whose
blocks belong to a tractable variety $\var{V}$, and
suppose that $\bA/\theta$ is a strongly tractable divisor of $\bA$.
Our goal is to prove $\CSP(\bA)$ is tractable.

As a guide to intuition, we imagine applying the algorithm described in this section to situations
in which the number of congruence classes of $\theta$---and therefore the number of
solutions to quotient instances---is relatively small.  To this end, we
call $\theta\in \Con(\bA)$ a \emph{coarse congruence}
if there are constants $c$ and $d$ such that, for every instance $\sI$ of $\CSP(\bA)$,
the number of solutions to the quotient instance $\sI/\theta$ is bounded above by $c|\sI|^d$.

Recall Fact~\ref{fact:soln-quotient} above which says that if the quotient instance
$\sI/\theta$ has no solution, then $\sI$ has no solution.
This and the notion of a ``coarse congruence'' suggest the following algorithm for solving $\CSP(\bA)$.
Given an instance $\sI$ of $\CSP(\bA)$,
\begin{enumerate}
\item compute the solution set $\mathcal S$ of the quotient instance $\sI/\theta$ of $\CSP(\bA/\theta)$;
\item for each solution $\bx/\theta\in \mathcal S$ and corresponding block instance $\sI_{\bx}$,
check 
whether there is a solution to $\sI_{\bx}$.
(If $\sI_{\bx}$ has a solution, then so does $\sI$---in fact, a solution to $\sI_{\bx}$ \emph{is} a solution to $\sI$.)
\end{enumerate}
If for all $\bx/\theta \in S$ the block restricted instance $\sI_{\bx}$ has no solution, then $\sI$ has no solution.

The number of steps in stage (1) of the algorithm is bounded by a polynomial in $|\sI|$
since we assumed $\bA/\theta$ is a strongly tractable divisor of $\bA$.
In stage (2), for each solution $\bx/\theta \in \mathcal S$,
determining whether there is a solution to $\sI_{\bx}$ has polynomial time complexity since
$\CSP(\bB_{\chi(0)}, \bB_{\chi(1)}, \dots, \bB_{\chi(n-1)})$ is tractable.  Therefore,
if we can find some constants $c$ and $d$ such that the number $|\mathcal S|$ of solutions to the quotient instance
$\sI/\theta$ is bounded above by $c|\sI|^{d}$, then stage (2) can be completed in
polynomial
  \ifthenelse{\boolean{footnotes}}{%
    time.\footnote{To be a little
      more precise, the time complexity of stage (2) of the algorithm is bounded
      by $c|\sI|^{d} \cdot c' |\sI|^{d'} \approx C|\sI|^{D}$, where $c'$ and $d'$ are found
      by bounding the complexity of
      $\CSP(\bB_{\chi(0)}, \bB_{\chi(1)}, \dots, \bB_{\chi(n-1)})$ for a
      ``worst-case product'' of $\theta$ classes.}
  }{time.}
The next proposition summarizes what we have shown.
\begin{prop}%
  \label{prop:chain-atop-tractable}
  Suppose $\bA$ is a finite idempotent algebra with $\theta \in \Con(\bA)$ satisfying the following conditions:
  \begin{itemize}
  \item  $\bA/\theta$ is strongly tractable;
  \item  $\theta$ blocks belong to a tractable variety;
  \item  $\theta$ is coarse.
  \end{itemize}
  Then $\CSP(\bA)$ is tractable.
\end{prop}
}{}

\bibliographystyle{alphaurl}
\bibliography{\jobname}

\end{document}